\documentclass{lmcs}
\pdfoutput=1

\usepackage{lastpage}
\lmcsdoi{21}{3}{15}
\lmcsheading{}{\pageref{LastPage}}{}{}%
{May~17,~2024}{Aug.~07,~2025}{}

\usepackage{amssymb, amsmath, amsfonts, amsthm, latexsym, graphics, graphicx}
\usepackage{tikz}
\usepackage{subcaption}
\usepackage[utf8]{inputenc}
\usepackage{booktabs}

\captionsetup{compatibility=false}
\usetikzlibrary{fit,calc,positioning,decorations.pathreplacing,matrix}
\usetikzlibrary{shapes}
\usetikzlibrary{arrows,positioning}
\usetikzlibrary{automata}
\usepackage{xcolor}

\numberwithin{theorem}{section}
\definecolor{macouleur}{RGB}{4,139,154}

\bibliographystyle{alphaurl}

\title[Learning Weighted Automata]{Feasibility of Learning Weighted Automata on a Semiring}

\author{Laure Daviaud\lmcsorcid{0000-0002-9220-7118}}[a]
\address{University of East Anglia, UK}
\author{Marianne Johnson\lmcsorcid{0000-0003-4059-845X}}[b]
\address{University of Manchester, UK}

\keywords{Weighted Automata, Learning, Semiring, Angluin algorithm}

\begin{document}
\begin{abstract}
\noindent Since the seminal work by Angluin and the introduction of the $L^*$-algorithm, active learning of automata by membership and equivalence queries has been extensively studied to learn various extensions of automata. For weighted automata, algorithms for restricted cases have been developed in the literature, but so far there was no global approach or understanding how these algorithms could apply (or not) in the general case. In this paper we chart the boundaries of the Angluin approach. We use a class of \textit{hypothesis automata} which are constructed, in Angluin's style, by using membership and equivalence queries and solving certain finite systems of linear equations over the semiring, and we show the theoretical limitations of this approach. We classify functions with respect to how `guessable' they are, corresponding to the existence of hypothesis automata computing a given function, and how such an hypothesis automaton can be found. Of course, from an algorithmic standpoint, knowing that a solution (hypothesis automaton) exists need not translate into an effective algorithm to find one. We  relate our work to the existing literature with a discussion of some known properties ensuring algorithmic solutions, illustrating the ideas over several familiar semirings (including the natural numbers).
\end{abstract}
\maketitle
	
\section{Introduction}\label{section:intro}
Imagine the following situation: you want to model a computer system (for verification purposes for example) with some mathematical abstraction, but the internal state of the system cannot be accessed. Think for example of very complex systems such as AI systems, that can realistically only be viewed as black-boxes: the system can be tested, you can feed it some chosen inputs and observe the outputs. Given these (finitely many)  input-output pairs, you create a model. Suppose you also have a way to decide whether your model matches (or, more realistically in the application fields, is close enough to) the system's behaviour. If this is not the case, the system gives you an input that behaves incorrectly in your model and using this new input-output pair together with further observations, you construct a new model to test. This is active learning.   

\subsubsection*{\textbf{Automata learning:}} In her seminal paper~\cite{A87}, Angluin introduced the $L^*$-algorithm where a learner tries to guess a rational language $L$ known only by an oracle. The learner is allowed to ask two types of queries: (1) membership queries where the learner chooses a word $w$ and asks the oracle whether $w$ belongs to $L$, and (2) equivalence queries where the learner chooses an automaton $\mathcal{A}$ and asks the oracle whether $\mathcal{A}$ recognises $L$; if this is not the case, the oracle gives a word (called counter-example) witnessing this fact. The $L^*$-algorithm allows the learner to correctly guess  the minimal deterministic automaton recognising $L$ in a  number of membership and equivalence queries polynomial in its size and the length of counter-examples given by the oracle. An overview of applications of automata learning can be found in~\cite{L06,HS16,V17}. In particular, the $L^*$-algorithm has been used in~\cite{WGY18} to extract automata from recurrent neural networks.

\subsubsection*{\textbf{Weighted automata:}} Weighted automata are a quantitative extension of automata where transitions are weighted in a semiring, allowing to model probabilities, costs or running time of programs~\cite{Schutzenberger1961,Droste2009,DrosteK21}. They find applications in image compression~\cite{CK93,Culik1997}, language and speech processing~\cite{Mohri1997,MPR05}, bioinformatics~\cite{AMT08}, formal verification~\cite{Chatterjee2010,AKL11}, analysis of on-line algorithms \cite{Aminof2010} and probabilistic systems~\cite{Vardi85}. Notably, they have also been used recently to model recurrent neural networks~\cite{RLP19,OWSH20,ZDXML021,WZS22}.

\subsubsection*{\textbf{Extending automata learning:}} Active learning has been extended to weighted automata but only for restricted classes: it has first been studied for fields~\cite{BV96,BM15}, and then principal ideal domains~\cite{HKRS20}. The Angluin framework has also been generalised to other extensions of automata and transducers~\cite{V96,V00,BHKL09,BLN12,AF14,AEF15,BLN16,DD17,MSSKS17,HS020}. In~\cite{CPS21} (see also \cite{HSS17} \cite{BKR19} and \cite{US2020} for earlier works), a categorical approach was proposed to encompass many of these cases. However, although this encompasses the case of weighted automata over fields, this framework does not include weighted automata in general. As often with weighted automata, finding a unified approach (working for all semirings) is not easy.

In~\cite{HKRS20}, van Heerdt, Kupke, Rot and Silva set a general framework for learning weighted automata, present a general learning algorithm, and focus on specifying two conditions on the semirings (ascending chain condition and progress measure), ensuring termination of the algorithm - this allows them to develop a learning algorithm \`a la Angluin for automata weighted on a principal ideal domain. More recently, a polynomial-time algorithm within this general framework has been developed in \cite{BCSW24} for the principal ideal domain $\mathbb{Z}$. Section 5 and Theorem 17 of~\cite{HKRS20} also present a specific example of an automaton (as well as a more general class) over the non-negative integers for which their algorithm does not terminate. In the present paper, we develop a general framework that gives a deeper understanding of the reasons for that. 

\subsubsection*{\textbf{Contributions:}} Our approach is somewhat different to what has been done so far: we look globally at all functions computed by finite state weighted automata and investigate, for each of them, whether they \emph{can potentially be learned} by an Angluin-style algorithm, i.e. where the automata that are guessed (hypothesis automata, see Definition~\ref{definition:hankel}) are constructed in a prescribed way using finitely many membership queries (to be explained in detail below). Working over an arbitrary semiring, we define levels of guessability and a hierarchy of functions with respect to these levels: from the ones for which no such guess will be correct to the ones where there is a simple theoretical strategy to construct a correct guess. Our main result in this regard (Theorem \ref{prop:weakly}) is a syntactic condition on the functions that have the potential to be learnt. For such a function (weakly guessable functions, see Definition~\ref{definition:weakly}) there must exist a finite state automaton of a prescribed type computing it: we call these automata \emph{literal} (Definition~\ref{definition:literal}) and they are closely related to residual automata.  In cases already widely studied in the literature (e.g. where the weights are drawn from fields or integral domains), it turns out that \emph{every} function computed by a finite state automaton can be computed by a literal automaton. Thus, in the existing literature, focus is (reasonably) on algorithmic issues. In this paper, we focus on picturing the landscape for active learning in arbitrary semirings, and showing the limitations; for instance over the non-negative integers,  it is easily seen that the example of van Heerdt, Kupke, Rot and Silva mentioned in the previous paragraph cannot be computed by a literal automaton. 

Within this framework, we then consider some properties of semirings that cause certain parts of our general hierarchy to collapse, and consider several specific well-studied examples of semirings (including the non-negative integers) in this context and provide a more detailed picture of the hierarchy for these (Section~\ref{section:hierarchy}). Note that the learning process is based on equivalence queries, and one might ask whether the undecidability of equivalence for certain semirings~\cite{Krob92} is problematic. Since in the application fields, equivalence can rarely be used as such (and often testing or approximation might be used instead), we believe having a framework that applies even in these cases is still of interest.

The difficulty of this work was in setting a new framework and the (we believe) right notions for it. Once the correct definitions are in place, the results and proofs follow nicely. 

\subsubsection*{\textbf{Structure of the paper:}} In Section~\ref{section:prelim}, we recall the definition of weighted automata, some specific examples, and outline the type of learning we are considering. We end this section with a set of questions that arise naturally, and which motivate the definitions and results in the rest of the paper. In Section~\ref{section:guessable}, we explain what we mean by levels of guessability and define classes of functions depending on them. We give several characterisations for these classes of functions, and relate our work back to the existing literature (specifically, ~\cite{HKRS20}), which focuses on algorithmic issues in specific settings. In Section~\ref{section:hierarchy}, we picture the hierarchy based on the previous definitions, prove its strictness and give conditions which cause parts of the hierarchy to collapse. Throughout we raise a number of questions and indicate directions for future progress, summarising these in Section \ref{section:conclusion}.

\section{Learning weighted automata}
\label{section:prelim}

In this section, we recall the definition of weighted automata (on a semiring) and the type of learning we are considering.

\subsection{Weighted automata and specific semirings}
We begin by recording some preliminary definitions and notation. A finite alphabet $\Sigma$ is a finite set of symbols, called letters.  A finite word is a finite sequence of letters and $\varepsilon$ denotes the empty word. We write $\Sigma^*$ to denote the set of all words over the alphabet $\Sigma$, including the empty word. We will also denote by $|w|$ the length of a word $w$ and by $|w|_a$ the number of occurrences of the letter $a$ in $w$.

A monoid $(M,\ast,e)$ is a set $M$ equipped with a binary operation $\ast$ that is associative and has a neutral element $e$. For example, the set $\Sigma^*$ together with the binary operation of concatenation of words and the empty word $\varepsilon$ is a monoid; it is the free monoid on the set $\Sigma$. A semiring $(S,\oplus,\otimes,0_S,1_S)$ is a set $S$ equipped with two operations such that $(S,\oplus,0_S)$ and $(S,\otimes,1_S)$ are monoids, $\oplus$ is commutative, $\otimes$ distributes on the left and the right of $\oplus$ and $0_S$ is a zero for $\otimes$.  A semiring is said to be commutative if $\otimes$ is commutative.

Recall that for a set $Y$, we write $S^Y$ to denote the set of all functions from $Y$ to $S$. For non-empty (but possibly infinite) countable sets $I$ and $J$ we say that $A \in S^{I \times J}$ is a matrix, with rows indexed by the set $I$ and columns indexed by the set $J$. For $A \in S^{I \times J}$ and $B \in S^{J \times K}$ where $J$ is finite we define the product $AB$ in the usual way via $(AB)_{i,k} = \bigoplus_{j \in J}(A_{i,j}\otimes B_{j,k})$.  Analogously, it will sometimes also be convenient to view elements of $S^J$ as (row or column) `vectors' with entries indexed by $J$; for $\underline{x} \in S^J$ and $j \in J$ we write  $\underline{x}_j$ to denote the value computed by the function $\underline{x}$ on input $j$, and if $J$ is finite we define the product $A\underline{x}$ where $A \in S^{I \times J}$ in the obvious way. It is also straightforward to check that this multiplication is associative, that is, for all $A \in S^{I \times J}$, $B \in S^{J \times K}$, $C \in S^{K \times L}$, where $J$ and $L$ are both finite, we have $A(BC) = (AB)C$.

\begin{defi}
A weighted automaton $\mathcal{A}$ on a semiring $(S,\oplus,\otimes,0_S,1_S)$ over alphabet $\Sigma$ is defined by: a finite set of states $Q$; initial-state vector $\underline{\alpha} \in S^{Q}$ (which we view as a row vector); for all letters $a$ of $\Sigma$, transition matrices $M(a) \in S^{Q \times Q}$; and final-state vector $\underline{\eta} \in S^{Q}$ (which we view as a column vector). The automaton $\mathcal{A}$ computes a function $f_\mathcal{A} : \Sigma^* \rightarrow S$ via $f_\mathcal{A}(w_1 \cdots w_k) = \underline{\alpha} M(w_1)\cdots M(w_k)\underline{\eta}$. We say that $f_\mathcal{A} (w)$ is the value computed by $\mathcal{A}$ on input $w$. 
\end{defi}

A weighted automaton can be seen as a (labelled, directed) graph with set of vertices $Q$, such that (i) for each state $p \in Q$ there is an initial weight $\underline{\alpha}_p$ (represented by an incoming arrow at state $p$ with label $\underline{\alpha}_p$) and a final weight $\underline{\eta}_p$ (represented by an outgoing arrow at state $p$ with label $\underline{\eta}_p$); and (ii) for each pair of states $p,q \in Q$ and each $a \in \Sigma$ there is a transition (labelled edge) $p \xrightarrow{a\, : \, m} q$ where $m=M(a)_{p,q}$ is called the weight of the transition (and if $m=0_S$, then we often omit this edge from the graph). A run from a state $p$ to a state $q$ is a (directed) path in the graph and the weight of a run from $p$ to $q$ is the product of $\underline{\alpha}_p$  by the product of the weights on the transitions (taken in the order traversed) followed by $\underline{\eta}_q$. The value computed on a word $w$ is the sum of all weights of runs labelled by $w$. A state $p$ such that $\underline{\alpha}_p$ (resp. $\underline{\eta}_p$) is not $0_S$ is referred to as an initial (resp. final) state. We use both the matrix and graph vocabulary in this paper, as convenient.

\paragraph*{\textbf{Notations.}}
Throughout the paper, $\Sigma$ denotes a finite alphabet and $S$ a semiring (which will not be assumed to be commutative unless explicitly stated). The lower case symbols $a,b,c$ will be exclusively used to denote letters of an alphabet, whilst the symbols $u,v,w$ will be exclusively used to denote words over the alphabet $\Sigma$. We use calligraphic upper-case symbols such as $\mathcal{A}, \mathcal{B}, \mathcal{H}$ to denote weighted automata, whilst the lower-case symbols $f,g,h$ are always used to denote functions from $\Sigma^*$ to $S$.  Greek letters such as $\alpha, \beta, \lambda, \mu, \rho, \eta$ will be used to denote elements of the semiring $S$, upper-case symbols $A, B, \ldots, F$ will be used to denote (possibly infinite) matrices over $S$, and underlined symbols (e.g. $\underline{x}$ or $\underline{\lambda}$) will denote (possibly infinite) vectors over $S$; in both cases the index sets used  vary but will be made explicit. Upper-case symbols $I, J,\ldots, Z$ will typically be used for sets;  frequently these will be (countable) subsets of $S^{\Sigma^*}$ or $\Sigma^*$, and moreover such subsets will frequently be used as indexing sets.

\subsubsection*{\textbf{Specific semirings and examples:}} Besides giving a general framework valid for any semiring, we  consider several specific well-studied semirings that arise frequently in application areas.
We will consider the semiring $(\mathbb{R},+,\times,0,1)$ and its restriction to non-negative reals $\mathbb{R}_{\geq 0}$, integers $\mathbb{Z}$, and non-negative integers $\mathbb{N}$. We also consider the semiring $(\mathbb{R}\cup\{-\infty\},\max,+,-\infty,0)$ and its restrictions to $\mathbb{Z}\cup\{-\infty\}$ and $\mathbb{N}\cup\{-\infty\}$, denoted by $\mathbb{R}_{\max}$,  $\mathbb{Z}_{\max}$ and $\mathbb{N}_{\max}$ and the Boolean semiring $\mathbb{B}$. Note that all these semirings are commutative. Finally, we will consider the non-commutative semiring with domain the finite subsets of $\Sigma^*$, with operations given by union (playing the role of addition) and concatenation ($XY = \{xy \mid x\in X, y \in Y\}$, playing the role of multiplication) and neutral elements $\emptyset$ and $\{\varepsilon\}$. This semiring will be denoted $\mathcal{P}_{\rm fin}(\Sigma^*)$. 

\begin{exa}
We consider $\Sigma=\{a,b\}$. Figure~\ref{fig:examples} depicts several weighted automata where the initial (resp. final) states with weight $1_S$ are indicated with an ingoing (resp. outgoing) arrow, the other states have initial or final weights $0_S$. 
All six examples will be used in Section~\ref{section:hierarchy} and the example in Figure~\ref{fig:automata3} will be used as a running example to illustrate the different notions we introduce.
The automaton in Figure~\ref{fig:automata3} is considered over $\mathbb{N}_{\max}$ and computes the number of $a$'s if the word starts with an $a$, the number of $b$'s if the word starts with a $b$, and $-\infty$ on the empty word. The automaton depicted in Figure~\ref{fig:automata1} will be considered over $\mathbb{R}_{\max}$ and computes the length of the longest block of consecutive $a$'s in a word. The automaton in Figure~\ref{fig:automata5} is considered over $\mathbb{N}$ and on input $w$ computes: $2^{|w|_a}$  if $w$ starts with an $a$; $2^{|w|_b}$ if $w$ starts with a $b$; and $0$ if $w$ is the empty word. The automaton from Figure~\ref{fig:automata4} is viewed over $\mathbb{R}_{\geq 0}$ and computes $2^n - 1$ on the words $a^n$ for all positive integers $n$, and $0$ on any other word. Finally the automata in Figures~\ref{fig:automata5bis} and~\ref{fig:automata4bis} are over $\mathcal{P}_{\rm fin}(\Sigma^*)$. The first one has outputs: $\emptyset$ on the empty word; $\{a^{|w|_a}\}$ on all words $w$ starting with an $a$; and $\{b^{|w|_b}\}$ on all words $w$ starting with a $b$. The second one outputs $\{a^{|w|_a},b^{|w|_b}\}$ on all input words $w$. It is worth mentioning here that several of the examples above could be viewed as examples over smaller semirings (e.g. Figure~\ref{fig:automata1} could be considered over the semiring $\mathbb{N}_{\rm max}$). However, as we shall see below, changing the semiring over which one views a given example necessarily changes the space of hypothesis automata that one can construct using membership and equivalence queries. 
\end{exa}

\begin{figure}[]
     \centering
     \begin{subfigure}[b]{0.45\linewidth}
         \centering
         \includegraphics[width=0.9\textwidth]{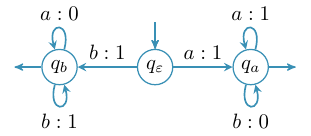}
         \caption{Over $\mathbb{N}_{\max}$: computes $|w|_x$ if  $w$ starts with letter $x\in \{a,b\}$ and $-\infty$ on $\varepsilon$.}
         \label{fig:automata3}
     \end{subfigure}
     \hfill
     \begin{subfigure}[b]{0.45\linewidth}
         \centering
         \includegraphics[width=0.9\textwidth]{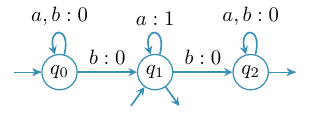}
         \caption{Over $\mathbb{R}_{\max}$: computes the length of the longest block of consecutive $a$'s in $w$.}
         \label{fig:automata1}
     \end{subfigure}

     \bigskip

     \begin{subfigure}[b]{0.45\linewidth}
         \centering
         \includegraphics[width=0.9\textwidth]{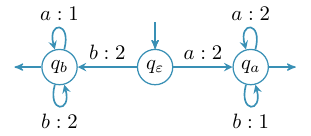}
         \caption{Over $\mathbb{N}$: computes $2^{|w|_x}$ if $w$ starts with letter $x \in \{a,b\}$ and $0$ if $w=\varepsilon$.}
         \label{fig:automata5}
     \end{subfigure}
     \hfill
    \begin{subfigure}[b]{0.45\linewidth}
         \centering
         \includegraphics[width=0.65\textwidth]{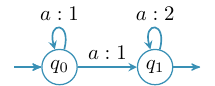}
         \caption{Over $\mathbb{R}_{\geq 0}$: computes $2^n - 1$ if $w=a^n$ for some $n \geq 1$, and $0$ otherwise.}
         \label{fig:automata4}
     \end{subfigure}

     \bigskip

     \begin{subfigure}[b]{0.45\linewidth}
         \centering
         \includegraphics[width=0.9\textwidth]{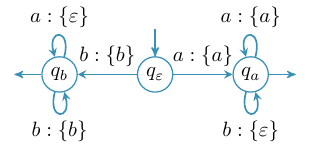}
         \caption{Over $\mathcal{P}_{\rm fin}(\Sigma^*)$: computes $\{x^{|w|_x}\}$ if $w$ starts with letter $x \in \{a,b\}$ and $\emptyset$ if $w=\varepsilon$.}
         \label{fig:automata5bis}
     \end{subfigure}
     \hfill
    \begin{subfigure}[b]{0.45\linewidth}
         \centering
         \includegraphics[width=0.64\textwidth]{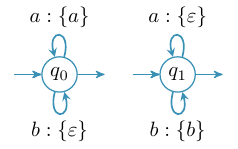}
         \caption{Over $\mathcal{P}_{\rm fin}(\Sigma^*)$: computes $\{a^{|w|_a}, b^{|w|_b}\}$}
         \label{fig:automata4bis}
     \end{subfigure}

        \caption{Example weighted automata and the values computed on input word $w$.}
        \label{fig:examples}
\end{figure}

\subsubsection*{\textbf{Linear combinations:}} Since we are going to consider non-commutative semirings, care needs to be taken when talking about linear combinations. Given a set $Y$, $S^Y$ is a left $S$-module via the following action: for each element $\underline{x}$ of $S^Y$ and $\lambda$ in $S$, $\lambda \otimes \underline{x}$ denotes the element of $S^Y$ computed from $\underline{x}$ by multiplying every component of $\underline{x}$ by $\lambda$ on the left. Given a subset $X \subseteq S^Y$, a left-linear combination over $X$ is one of the form $\bigoplus_{\underline{x} \in X} \lambda_{\underline{x}}\otimes \underline{x}$ for some $\lambda_{\underline{x}}$ in $S$, with only finitely many $\lambda_{\underline{x}}$ different from $0_S$. The left-semimodule generated by $X$ is then the subset of $S^Y$ containing those elements that can be written as a left-linear combination over $X$. An element of the left-semimodule generated by $X$ will simply be said to be left-generated by $X$. (Dually, one can view $S^Y$ as a right $S$-module, and make the corresponding definitions of right-linear combination, right-semimodule generated by $X$ etc.)

\subsubsection*{\textbf{Mirror:}} Finally, we will use the mirror operation on functions $\Sigma^* \to S$. The mirror of a word $w = w_1 w_2\cdots w_n$ where for all $i=1,\ldots, n$, $w_i \in \Sigma$ is the word $\bar{w} = w_n \cdots w_2 w_1$. For $f:\Sigma^* \to S$, the mirror of $f$, denoted by $\bar{f}:\Sigma^* \to S$ is defined as $\bar{f}(w) = f(\bar{w})$. Given a weighted automaton $\mathcal{A}$, the mirror of $\mathcal{A}$ is obtained by reversing each of its transitions and swapping the initial-state vector and the final-state vector. For example, the automaton in Figure 1B is equal to its mirror. Note that if $S$ is commutative and $f$ is computed by a weighted automaton $\mathcal{A}$ over $S$ then $\bar{f}$ is computed by the mirror of $\mathcal{A}$. However, there are cases when $S$ is non-commutative where a function can be computed by a weighted automaton over $S$ but its mirror cannot (for example, the function $w \mapsto \{w\}$ over $\mathcal{P}_{\rm fin}(\Sigma^*)$ is computed by a weighted automaton over $\mathcal{P}_{\rm fin}(\Sigma^*)$ but the mirror of this function ($w \mapsto \{\bar{w}\}$) is not).

\subsection{Learning with membership and equivalence queries}

We are investigating one type of learning done by membership and equivalence queries~\cite{A87,BV96,BM15} where the learner makes guesses by solving systems of left-linear (or right-linear) equations over $S$. Specifically, suppose that there is an oracle who can answer two types of questions about a function $f: \Sigma^* \rightarrow S$:
\begin{itemize}
	\item \textbf{Membership:} when provided with a word $w$: what is the value of $f$ on $w$? 
	\item \textbf{Equivalence:} when provided with a weighted automaton $\mathcal{H}$: is $f$ the function computed by $\mathcal{H}$? If the oracle answers ``no'', it will also give an input word $z$ on which $\mathcal{H}$ performs incorrectly.
\end{itemize}
A `membership' query is so-named since in the case of the Boolean semiring, this question amounts to asking whether a given word belongs to the rational language recognised by the automaton computing $f$. More generally, a ``membership query'' (sometimes also called ``output query'') provides a mechanism for a learner to gather data points $f(w)$ from which they may attempt to build an hypothesis automaton $\mathcal{H}$ consistent with these data points. An equivalence query provides a mechanism for a learner to test their hypothesis and receive useful feedback, in the form of success or counterexample. Throughout this paper we consider hypothesis automata constructed in a prescribed manner from a partial Hankel matrix, as we shall explain below.  In certain circumstances (i.e. by placing conditions on the weighted automaton and/or the semiring) it is already known that one can efficiently learn a weighted automaton computing the function $f$ using finitely membership and equivalence queries. Indeed, the underlying ideas introduced in the remainder of this section have been introduced and studied in the literature;  we give proofs of some known results for completeness, putting these into our general framework. Our presentation differs slightly and leads to questions that are answered in Sections~\ref{section:guessable} and~\ref{section:hierarchy} and represent our main contribution.

\subsubsection*{\textbf{Hankel matrix:}} As before, 
 we use the notation $\underline{x}$ for elements of $S^{\Sigma^*}$, and for a word $w \in \Sigma^*$ we write $\underline{x}_w$ to denote the $w$-th component of $\underline{x}$. The restriction of $\underline{x}$ to a subset $Z \subseteq \Sigma^*$ will be denoted by $\underline{x}_Z$. We shall also make use of the following right action of $\Sigma^*$ on $S^{\Sigma^*}$: given $\underline{x} \in S^{\Sigma^*}$ and $u \in \Sigma^*$, we denote by $\underline{x}\cdot u$ the element of $S^{\Sigma^*}$ defined as the shift of $\underline{x}$ by $u$, that is, $(\underline{x}\cdot u)_w = \underline{x}_{uw}$ for all words $w$. The right action of $\Sigma$ on $S^{\Sigma^*}$ commutes with the left action of $S$ on $S^{\Sigma^*}$ in the sense that for all $\lambda \in S$, $\underline{x} \in S^{\Sigma^*}$ and $u \in \Sigma^*$ we have $(\lambda \cdot \underline{x}) \cdot u = \lambda \cdot (\underline{x} \cdot u)$. Given a function $f: \Sigma^* \to S$, the infinite matrix $F$ whose rows and columns are indexed by $\Sigma^*$ and with entries $F_{w,w'} = f(ww')$ is called the Hankel matrix of $f$.  It will also be convenient to have a notation for elements of $S^{\Sigma^*}$ that arise as rows of the Hankel matrix. If $v \in \Sigma^*$ is a word and $f$ and $F$ are clear from context, we shall write $\langle v \rangle$ to denote the infinite row indexed by $v$ in $F$, viewed as an element of $S^{\Sigma^*}$. Note that if $v$ is a word, $f$ a function and $F$ its Hankel matrix, then the right action of $\Sigma^*$ on $\langle v \rangle$ corresponds to the obvious right action of $\Sigma^*$ on $v$, that is: $(\langle v \rangle \cdot u)_w = (\langle v \rangle)_{uw}  = F_{v,uw} = f(vuw) = F_{vu,w} =\langle vu \rangle_{w}$. 

\begin{exa}
Consider the function $f$ computed by our running example from Figure~\ref{fig:automata3}. Its Hankel matrix is given by $F_{\varepsilon, \varepsilon} = -\infty$, $F_{\varepsilon, aw} = |aw|_a$, $F_{\varepsilon, bw} = |bw|_b$, $F_{aw, w'} = |aww'|_a$ and $F_{bw, w'} = |bww'|_b$ for all words $w,w'$.    
\end{exa}

\subsubsection*{\textbf{Closed sets and generating sets:}}
Let  $Z \subseteq \Sigma^*$ and $f: \Sigma^* \rightarrow S$. A subset $X \subseteq S^{\Sigma^*}$ is said to be: (i) left-closed if for all $\underline{x}$ in $X$ and $a \in \Sigma$, $\underline{x}\cdot a$ is left-generated by $X$; (ii) left-closed on $Z$ if for all $\underline{x}$ in $X$ and $a \in \Sigma$, $(\underline{x}\cdot a)_Z$ is left-generated by the elements of $X$ restricted to $Z$; (iii) row-generating for $f$ if every row of the Hankel matrix of $f$ is left-generated by $X$. By an abuse of language, we say that a set of words\footnote{Using $Q$ as the label for a set of words might seem odd at this point, but the reason behind this will become apparent very shortly.} $Q$ is left-closed (resp. left-closed on $Z$, row-generating for $f$) if the set $\underline{Q}=\{ \langle q \rangle: q \in Q\}$ has this property.

The following simple observations are key in what follows.
\begin{lem}
\label{lem:leftgen}
Let $X \subseteq S^{\Sigma^*}$.
\begin{enumerate}
    \item If $X$ is left-closed and left-generates the row $\langle \varepsilon \rangle$ of the Hankel matrix of $f$ then $X$ is row-generating for $f$.
    \item If $X$ is a subset of the rows of the Hankel matrix and is a row-generating set for $f$, then $X$ is left-closed (and left-generates the row $\langle \varepsilon \rangle$).
\end{enumerate}
\end{lem}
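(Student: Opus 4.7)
The plan is to handle the two parts separately, with the main leverage coming from the fact (noted just before the statement) that the right shift by a word interacts nicely with the Hankel matrix: $\langle v \rangle \cdot u = \langle vu \rangle$, and also with the left $S$-action: $(\lambda \otimes \underline{x}) \cdot u = \lambda \otimes (\underline{x} \cdot u)$. Combined with the observation that the shift distributes over finite sums, this will allow us to push shifts through left-linear combinations.

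For part (1), I would argue by induction on the length of $v \in \Sigma^*$ that each row $\langle v \rangle$ is left-generated by $X$. The base case $v = \varepsilon$ is the hypothesis. For the inductive step, writing $v = ua$ with $a \in \Sigma$, I would use the identity $\langle v \rangle = \langle u \rangle \cdot a$, expand $\langle u \rangle$ as a finite left-linear combination $\bigoplus_{\underline{x} \in X} \lambda_{\underline{x}} \otimes \underline{x}$ by the induction hypothesis, and apply the shift $\cdot a$ termwise to obtain $\langle v \rangle = \bigoplus_{\underline{x}} \lambda_{\underline{x}} \otimes (\underline{x} \cdot a)$. By left-closedness, each $\underline{x}\cdot a$ is itself a finite left-linear combination over $X$; substituting and regrouping using associativity of $\otimes$ and left-distributivity of $\otimes$ over $\oplus$ yields $\langle v \rangle$ as a finite left-linear combination over $X$.

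For part (2), the argument is essentially immediate: if $\underline{x} \in X$ is a row of the Hankel matrix, then $\underline{x} = \langle v \rangle$ for some $v \in \Sigma^*$, so for any $a \in \Sigma$ we have $\underline{x}\cdot a = \langle va \rangle$, which is again a row of the Hankel matrix; but every row is left-generated by $X$ by the row-generating hypothesis, hence $\underline{x}\cdot a$ is left-generated by $X$, proving left-closedness. The parenthetical claim that $\langle \varepsilon \rangle$ is left-generated by $X$ is a special case of the row-generating hypothesis.

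The only delicate point is the manipulation in part (1): I need to be careful that the shift $\cdot a$ really is additive and commutes with scalar multiplication on the left (both follow from the componentwise definition, and the latter is stated explicitly in the excerpt), and that the double-sum rearrangement producing the coefficients $\nu_{\underline{y}} = \bigoplus_{\underline{x}} \lambda_{\underline{x}} \otimes \mu_{\underline{x}, \underline{y}}$ is legitimate. Since both sums involved are finite (only finitely many $\lambda_{\underline{x}}$ and, for each such $\underline{x}$, finitely many $\mu_{\underline{x}, \underline{y}}$ are nonzero), this is just associativity, commutativity of $\oplus$, and left-distributivity, so no semiring subtlety intervenes. That is the main — and only mild — obstacle.
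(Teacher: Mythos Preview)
Your proposal is correct and follows essentially the same approach as the paper. The only cosmetic difference in part~(1) is that the paper first shifts the fixed decomposition of $\langle \varepsilon \rangle$ by the whole word $u$ to obtain $\langle u \rangle = \bigoplus_{\underline{x}} \lambda_{\underline{x}} \otimes (\underline{x}\cdot u)$ and then inducts on $|u|$ to show each $\underline{x}\cdot u$ is left-generated by $X$, whereas you induct directly on $|v|$ to show $\langle v \rangle$ is left-generated; both organizations rest on the same two facts (shift commutes with the left $S$-action and distributes over finite sums, plus left-closedness), and part~(2) is identical.
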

\begin{proof}
(1) Since $X$ left generates  $\langle \varepsilon \rangle$ we may write $\langle \varepsilon \rangle = \bigoplus_{\underline{x}  \in X} \lambda_{\underline{x}} \otimes \underline{x}$, where only finitely many of the coefficients $\lambda_{\underline{x}}$ are non-zero. Then for all words $u,v \in \Sigma^*$ we have $$\langle u \rangle_v = \langle \varepsilon \rangle_{uv} = \bigoplus_{\underline{x}  \in X} \lambda_{\underline{x}} \otimes \underline{x}_{uv} = \bigoplus_{\underline{x}  \in X} \lambda_{\underline{x}} \otimes (\underline{x} \cdot u)_v.$$
Then, using the fact that $X$ is left-closed, an easy induction on the length of $u$ then gives that $\underline{x}\cdot u$ is left-generated by $X$, in turn showing that each row  $\langle u \rangle$ of the Hankel matrix is a left-linear combination of elements of $X$.\\ \ 
(2) That $X$ left-generates $\langle \varepsilon \rangle$ is completely clear from definition of row generating set. Since $X$ is a subset of the rows of the Hankel matrix of $f$, each element of $X$ may be written in the form $\langle w \rangle$ for some $w \in \Sigma^*$, in which case $wa \in \Sigma^*$ for all $a \in \Sigma$, and hence $\langle w \rangle\cdot a = \langle wa \rangle$ is left-generated by $X$.
\end{proof}

\begin{exa}
Consider again the function $f$ computed by our running example from Figure~\ref{fig:automata3} over the semiring $\mathbb{N}_{\rm max}$ whose multiplicative operation $\otimes$ is given by usual addition. The singleton set $\{\langle \varepsilon \rangle\}$ is not left-closed: indeed, $\langle a \rangle$ is not left-generated by this element since $\langle \varepsilon \rangle_\varepsilon = -\infty$ but $\langle a \rangle_\varepsilon = 1$, and hence there is no left-linear combination. The two element set $\{\langle \varepsilon \rangle, \langle a \rangle\}$ is left-closed  on $\{\varepsilon\}$, since $\langle b \rangle_\varepsilon =  \langle ab \rangle_\varepsilon = 1 = 0+1 = 0 \otimes \langle a \rangle_\varepsilon$ and $\langle aa \rangle_\varepsilon = 2 = 1+1 = 1 \otimes \langle a \rangle_\varepsilon$. The set $\{\langle \varepsilon \rangle, \langle a \rangle, \langle b \rangle \}$ is left-closed  and also row-generating since for $x \in \{a,b\}$, $\langle xw \rangle = |w|_x \otimes \langle x \rangle$.
\end{exa}

It turns out that if one can find a finite  subset  $X \subseteq S^{\Sigma^*}$ that is left-closed and such that the row $\langle \varepsilon \rangle$ is left-generated by $X$, then by solving the corresponding system of $S$-linear equations involving the (infinite) vectors in $X \cup \{\langle \varepsilon\rangle\}$ one can construct an automaton computing the correct function. We call this a Hankel automaton and describe its construction below. However, in practice, using finitely many membership queries only a finite submatrix of the Hankel matrix can be obtained. An hypothesis automaton is defined in the same way as a Hankel automaton, replacing the finite subset $X$ by a finite set of \emph{rows of the Hankel matrix} (i.e. indexed by a set of words) \emph{restricted to a finite set of columns}. For Boolean automata, it is always possible to find a finite set amongst the rows of the Hankel matrix and a finite set of columns such that the corresponding hypothesis automaton computes the target function. The $L^*$-algorithm is based on this idea. 

\subsubsection*{\textbf{Hankel automata and hypothesis automata:}}
We first set up some notation to describe the construction of hypothesis automaton.  
\begin{nota}
	\label{notation:lambda}
Given a function $f$ (and in particular the row indexed by $\varepsilon$ in its Hankel matrix), a finite non-empty subset $Q$ of vectors of $S^{\Sigma^*}$ and a non-empty subset $T$ of $\Sigma^*$, we define the left solution set $\Lambda_{Q, T}$  to be the set of vectors $\underline{\lambda} \in S^{Q \cup (Q \times \Sigma \times Q)}$ satisfying:
\begin{itemize}
\item $\langle \varepsilon \rangle_T = \bigoplus_{\underline{q} \in Q} \underline{\lambda}_{\underline{q}} \otimes \underline{q}_T$, and
\item for all $\underline{q} \in Q$ and all $a \in \Sigma$, $(\underline{q}\cdot a)_T= \bigoplus_{\underline{p} \in Q} \underline{\lambda}_{\underline{q},a,\underline{p}} \otimes \underline{p}_T$.
\end{itemize}
If $T=\Sigma^*$,  we will simply write $\Lambda_{Q}$ instead of $\Lambda_{Q, \Sigma^*}$.
Note that the left solution sets $\Lambda_{Q,T}$ and $\Lambda_Q$ depend on the function $f$ (and hence also the semiring $S$), although we have chosen to suppress this dependence in our notation for readability. 
If $W$ is a finite subset of $\Sigma^*$ we will also, by a slight abuse of notation, write $\Lambda_W$
and $\Lambda_{W,T}$
to mean $\Lambda_{Q}$ and $\Lambda_{Q,T}$ where 
$Q = \{\langle w \rangle: w \in W\}$.
\end{nota}

\begin{rem}
\label{rem:lambda}
An element $\underline{\lambda} \in \Lambda_{Q,T}$ represents a collection of coefficients witnessing that $\langle \varepsilon \rangle_T$ and each of the $(\underline{q} \cdot a)_T$ (for all $\underline{q} \in Q$ and all $a \in \Sigma$) is a left-linear combinations of the $\underline{q}_T$ for $q$ in $Q$. In other words, $\Lambda_{Q, T}$ represents the solution space to the system of $S$-linear equations outlined in Notation \ref{notation:lambda}.
The key idea here is that the set $\Lambda_{Q, T}$ is non-empty if and only if a solution to this system exists; in the language introduced above, if and only if $Q$ is left-closed on $T$ and $\langle \varepsilon \rangle_T$ is left-generated by $Q$ restricted to $T$.
\end{rem}
\begin{rem}
\label{rem:subsets}
Let $T \subseteq \Sigma^*$, and let $Q \subseteq S^{\Sigma^*}$ be a finite set. If $T \subseteq T’ \subseteq \Sigma^*$, then it is clear from the above equations that $\Lambda_{Q, T} \supseteq \Lambda_{Q, T’}\supseteq \Lambda_{Q}$. However, if $Q'$ is a finite set contained in or contained by $Q$, then there is no containment between $\Lambda_{Q, T}$ (which lies in $S^{Q \cup (Q \times \Sigma \times Q)}$) and $\Lambda_{Q', T}$ (which lies in $S^{Q' \cup (Q' \times \Sigma \times Q')}$).
\end{rem}
\begin{lem}
\label{lem:Qincrement}
Let $f:\Sigma^* \rightarrow S$ be a function, $T \subseteq \Sigma^*$, and suppose that $Q' \subseteq Q$ are finite sets of rows of the Hankel matrix of $f$. If $\Lambda_{Q'} \neq \emptyset$ then $\Lambda_{Q, T} \neq \emptyset$.
\end{lem}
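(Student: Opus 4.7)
The plan is to build a solution $\underline{\mu} \in \Lambda_{Q,T}$ by extending a given solution $\underline{\lambda} \in \Lambda_{Q'}$ from the smaller index set to the larger one. The only real question is how to handle the new ``transition'' coefficients $\underline{\mu}_{\underline{q}, a, \underline{p}}$ in the rows of $\underline{\mu}$ indexed by elements of $Q \setminus Q'$; for every other new coordinate it will suffice to pad with $0_S$. The crucial observation is that the hypothesis $\Lambda_{Q'} \neq \emptyset$ provides exactly what is required to invoke Lemma \ref{lem:leftgen}(1).

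First I would unpack the hypothesis: $\Lambda_{Q'} \neq \emptyset$ means there exists $\underline{\lambda}$ with $\langle \varepsilon \rangle = \bigoplus_{\underline{q} \in Q'} \underline{\lambda}_{\underline{q}} \otimes \underline{q}$ and $\underline{q}\cdot a = \bigoplus_{\underline{p} \in Q'} \underline{\lambda}_{\underline{q},a,\underline{p}} \otimes \underline{p}$ for every $\underline{q} \in Q'$ and $a \in \Sigma$, where these are equalities in $S^{\Sigma^*}$. These two statements say precisely that $Q'$ is left-closed and that it left-generates $\langle \varepsilon \rangle$, so by Lemma \ref{lem:leftgen}(1) the set $Q'$ is row-generating for $f$. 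This is the key step that makes the rest work, and I expect it to be the main (and only real) obstacle: one has to notice that equations defining $\Lambda_{Q'}$ on all of $\Sigma^*$ supply the left-closedness hypothesis of that lemma.

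Next I would define the extension $\underline{\mu} \in S^{Q \cup (Q \times \Sigma \times Q)}$ as follows. Set $\underline{\mu}_{\underline{q}} := \underline{\lambda}_{\underline{q}}$ for $\underline{q} \in Q'$ and $\underline{\mu}_{\underline{q}} := 0_S$ for $\underline{q} \in Q \setminus Q'$. Set $\underline{\mu}_{\underline{q}, a, \underline{p}} := \underline{\lambda}_{\underline{q}, a, \underline{p}}$ when $\underline{q}, \underline{p} \in Q'$, and set it to $0_S$ when $\underline{q} \in Q'$ and $\underline{p} \in Q \setminus Q'$. Finally, for $\underline{q} \in Q \setminus Q'$, write $\underline{q} = \langle w \rangle$ (using that $Q$ consists of rows of the Hankel matrix) so that $\underline{q} \cdot a = \langle wa \rangle$ is itself a row of the Hankel matrix; by the row-generating property of $Q'$ obtained above, there exist coefficients $\rho_{\underline{p}} \in S$ for $\underline{p} \in Q'$ with $\underline{q} \cdot a = \bigoplus_{\underline{p} \in Q'} \rho_{\underline{p}} \otimes \underline{p}$. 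Use these to define $\underline{\mu}_{\underline{q}, a, \underline{p}} := \rho_{\underline{p}}$ for $\underline{p} \in Q'$ and $\underline{\mu}_{\underline{q}, a, \underline{p}} := 0_S$ for $\underline{p} \in Q \setminus Q'$.

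Finally I would verify that $\underline{\mu}$ lies in $\Lambda_{Q, T}$. The first defining equation follows by restricting the identity $\langle \varepsilon \rangle = \bigoplus_{\underline{q} \in Q'} \underline{\lambda}_{\underline{q}} \otimes \underline{q}$ to the columns indexed by $T$ and noting that the additional $Q \setminus Q'$ terms contribute $0_S$. For the second equation, when $\underline{q} \in Q'$ this is just the corresponding equation from $\Lambda_{Q'}$ restricted to $T$, again with extra zero terms; when $\underline{q} \in Q \setminus Q'$ it holds even as an equality in $S^{\Sigma^*}$ by the construction of $\rho_{\underline{p}}$, and therefore certainly on $T$. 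Hence $\underline{\mu} \in \Lambda_{Q, T}$, so $\Lambda_{Q, T} \neq \emptyset$.
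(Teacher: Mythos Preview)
Your proof is correct and follows essentially the same approach as the paper: both argue that $\Lambda_{Q'}\neq\emptyset$ means $Q'$ is left-closed and left-generates $\langle\varepsilon\rangle$, invoke Lemma~\ref{lem:leftgen}(1) to conclude $Q'$ is row-generating, and then use this to express $\langle\varepsilon\rangle$ and each $\underline{q}\cdot a$ (for $\underline{q}\in Q$) as left-linear combinations over $Q'\subseteq Q$. The paper simply says ``it then follows easily that the system of equations \ldots\ has a solution,'' whereas you spell out the explicit extension $\underline{\mu}$ and its verification.
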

\begin{proof}
Since each element of $Q$ is a row of the Hankel matrix, there exists a finite set of words $W \subseteq \Sigma^*$ such that $Q = \{\langle w \rangle: w \in W\}$. Since $\Lambda_{Q'} \neq \emptyset$, we have that $Q'$ is left-closed and left-generates the row $\langle \varepsilon \rangle$, and so $Q'$ is row-generating for $f$, by Lemma \ref{lem:leftgen}. In particular, $\langle \varepsilon \rangle$ and each row $\langle wa \rangle$ where $w \in W$ and $a \in \Sigma$ can be expressed as a left linear combination of the elements of $Q'$, and hence also of $Q$. Since $\langle wa \rangle = \langle w \rangle \cdot a$, it then follows easily that the system of equations given in Notation \ref{notation:lambda} has a solution.
\end{proof}

\begin{defi}
\label{definition:hankel}
Let $f: \Sigma^* \rightarrow S$ be a function, and suppose that $Q$ is a finite subset of $S^{\Sigma^*}$ and $T$ a subset of $\Sigma^*$ such that $\Lambda_{Q, T} \neq \emptyset$. Then for each $\underline{\lambda} \in \Lambda_{Q, T}$, we define a finite weighted automaton $\mathcal{H}_{Q, T, \underline{\lambda}}$ with:
\begin{itemize}
\item finite set of states $Q$;
\item initial-state vector $\underline{\lambda}_Q$;
\item final-state vector $(\underline{q} _\varepsilon)_{\underline{q} \in Q}$; and
\item for all $\underline{q},\underline{p} \in Q$ and all $a \in \Sigma$, a transition from state $\underline{q}$ to state $\underline{p}$ labelled by $a$ with weight $\underline{\lambda}_{\underline{q},a,\underline{p}}$.
\end{itemize}
If $Q$ is a finite set of rows of the Hankel matrix of $f$ and $T = \Sigma^*$,  we say $\mathcal{H}_{Q, T, \underline{\lambda}}$ is a \emph{Hankel automaton} and write simply $\mathcal{H}_{Q, \underline{\lambda}}$ to reduce notation. If $Q$ is a finite set of rows of the Hankel matrix and $T$ is finite, we say that $\mathcal{H}_{Q, T, \underline{\lambda}}$ is an \emph{hypothesis automaton}; in this case we may view $Q$ as a finite set of words by identifying with an appropriate index set (note: there need not be a unique set of words).\end{defi}

 \begin{exa}
Consider the function $f$ over $\mathbb{N}_{\max}$ computed by our running example from Figure~\ref{fig:automata3}. The set $\{ \langle \varepsilon \rangle, \langle a \rangle, \langle b \rangle\}$ containing three rows of the Hankel matrix clearly left-generates $\langle \varepsilon \rangle$ and, as shown previously, is left-closed. Let $q_\varepsilon, q_a$ and $q_b$ denote these rows (omitting the underlining of these elements for ease of reading). Then considering the coefficients $\underline{\lambda}_{q_\varepsilon}=0$, $\underline{\lambda}_{q_\varepsilon,a,q_a} = \underline{\lambda}_{q_\varepsilon,b,q_b} = \underline{\lambda}_{q_a,b,q_a} = \underline{\lambda}_{q_b,a,q_b} = 0$, $\underline{\lambda}_{q_a,a, q_a} = \underline{\lambda}_{q_b,b,q_b} = 1$ and all non-specified coefficients $-\infty$, the corresponding Hankel automaton would be the automaton with three states obtained from the one in Figure~\ref{fig:automata3} by replacing the weight of the transitions starting at $q_{\varepsilon}$ by $0$, and replacing the final weights by $1$. For $Q = \{q_\varepsilon, q_ a\}$ and $T = \{\varepsilon\}$ we have also seen that $Q$ is left-closed on $T$. By taking $\underline{\lambda}_{q_\varepsilon}=0$, $\underline{\lambda}_{q_\varepsilon,a,q_a} = \underline{\lambda}_{q_\varepsilon,b,q_a} = \underline{\lambda}_{q_a,b,q_a} = 0$, $\underline{\lambda}_{q_a,a, q_a} = 1$ and all remaining coefficients $-\infty$, one obtains an hypothesis automaton with two states. Notice that this automaton does not compute $f$; it computes $|w|_a$ if $w$ starts with an $a$, $|w|_a + 1$ if $w$ starts with a $b$, and $-\infty$ if $w =\varepsilon$. Note that if $ \{\varepsilon, aw\} \subseteq T'$ for some word $w$, then $Q = \{q_\varepsilon, q_ a\}$ is not left-closed on $T'$. 
\end{exa}

Fliess's theorem \cite{F74} states that a function is computed by a weighted automaton over the field of reals if and only if its Hankel matrix has finite rank, and moreover, this rank is exactly the minimal number of states of a weighted automaton computing the function. There are many equivalent definitions of the rank of a matrix with entries over a field whose analogues cease to coincide over an arbitrary semirings, however, the proof of Fliess' Theorem can be easily adapted to obtain the following statement, and since the details are instructive, we include a short proof.
\begin{thm}
\label{theorem:prelim}
Let $f: \Sigma^* \rightarrow S$ be a function. The following are equivalent:
\begin{enumerate}
\item there exists a finite subset $Q \subseteq S^{\Sigma^*}$ that is both left-closed and left-generates the row indexed by $\varepsilon$ in the Hankel matrix of $f$ (i.e. $\Lambda_Q \neq \emptyset$ for some finite subset $Q\subseteq S^{\Sigma^*}$);
\item $f$ is computed by a finite-state weighted automaton $\mathcal{A}$ over $S$.   
\end{enumerate}
Specifically, if $\Lambda_Q \neq \emptyset$, then $\mathcal{H}_{Q, \underline{\lambda}}$ computes $f$ for all $\underline{\lambda} \in \Lambda_Q \neq \emptyset$, whilst if $f$ is computed by the finite-state weighted automaton $\mathcal{A}$ over $S$ with state set $Q$, then $\Lambda_{\underline{Q}} \neq \emptyset$ for $\underline{Q} = \{\underline{q}: q \in Q\} \subseteq S^{\Sigma^*}$ where for each $w \in \Sigma^*$ we define  $\underline{q}_w$ to be the value on input $w$ computed by the automaton $\mathcal{A}_q$ obtained from $\mathcal{A}$ by making $q$ the unique initial state with weight $1_S$ and all other weights are the same.
\end{thm}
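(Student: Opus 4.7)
The plan is to prove the two implications constructively, by exhibiting explicit witnesses in each direction. The definition of $\Lambda_Q$ in Notation~\ref{notation:lambda} and the construction of $\mathcal{H}_{Q,\underline{\lambda}}$ in Definition~\ref{definition:hankel} are exactly engineered so that coefficients witnessing ``left-closedness together with left-generation of $\langle \varepsilon \rangle$'' are literally the initial weights and transition weights of a weighted automaton; the two implications simply make this dictionary explicit.

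For (1) $\Rightarrow$ (2), assume $\underline{\lambda} \in \Lambda_Q$ and consider the automaton $\mathcal{H}_{Q,\underline{\lambda}}$ (i.e., the case $T = \Sigma^*$ of Definition~\ref{definition:hankel}). The core of the proof is an induction on $|w|$ establishing that for every word $w = w_1 \cdots w_n$ the row vector $\beta^w = \underline{\lambda}_Q M(w_1) \cdots M(w_n)$ gives a left-linear expansion
\[ \langle \varepsilon \rangle \cdot w \ = \ \bigoplus_{\underline{q} \in Q} \beta^w_{\underline{q}} \otimes \underline{q}. \]
The base case $w = \varepsilon$ is precisely the first defining equation of $\Lambda_Q$. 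For the inductive step $w \mapsto wa$, one pushes the right action of $a$ through the sum (using that the right $\Sigma^*$-action commutes with the left $S$-action and distributes over $\oplus$ componentwise), applies the second defining equation of $\Lambda_Q$ to each $\underline{q}\cdot a$, and then regroups via distributivity and associativity of $\otimes$; the resulting coefficient of $\underline{p}$ is $\bigoplus_{\underline{q} \in Q} \beta^w_{\underline{q}} \otimes \underline{\lambda}_{\underline{q},a,\underline{p}}$, which is exactly the $\underline{p}$-entry of $\beta^w M(a) = \beta^{wa}$. Evaluating the displayed equation at component $\varepsilon$ then gives $f(w) = \langle \varepsilon \rangle_w = \bigoplus_{\underline{q} \in Q} \beta^w_{\underline{q}} \otimes \underline{q}_\varepsilon$, which is precisely $\beta^w$ paired with the final-state vector $(\underline{q}_\varepsilon)_{\underline{q} \in Q}$ of $\mathcal{H}_{Q,\underline{\lambda}}$, i.e.\ the value $f_{\mathcal{H}_{Q,\underline{\lambda}}}(w)$.

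For (2) $\Rightarrow$ (1), given $\mathcal{A}$ with state set $Q$, initial vector $\underline{\alpha}$, transition matrices $M(a)$ and final vector $\underline{\eta}$, define for each $q \in Q$ the vector $\underline{q} \in S^{\Sigma^*}$ by $\underline{q}_w = (M(w_1) \cdots M(w_n)\underline{\eta})_q$ (the value computed by $\mathcal{A}_q$), and propose $\underline{\lambda}_{\underline{q}} = \underline{\alpha}_q$ and $\underline{\lambda}_{\underline{q},a,\underline{p}} = M(a)_{q,p}$ as the candidate witness in $\Lambda_{\underline{Q}}$. The two defining equations of $\Lambda_{\underline{Q}}$ reduce to two one-line calculations using associativity of matrix multiplication already set up in the preliminaries: $f(w) = \underline{\alpha} M(w) \underline{\eta} = \bigoplus_{q \in Q} \underline{\alpha}_q \otimes \underline{q}_w$ gives the first, and $(\underline{q} \cdot a)_w = \underline{q}_{aw} = (M(a) M(w) \underline{\eta})_q = \bigoplus_{p \in Q} M(a)_{q,p} \otimes \underline{p}_w$ gives the second. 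There is no real obstacle here beyond careful bookkeeping, since $S$ is not assumed commutative: the framework of left-linear combinations and $\Lambda_{Q,T}$ has been designed precisely so that every scalar sits on the correct side automatically, and the proof reduces to recognising that matrix–vector computations on $Q$ and left-linear expansions in $S^{\Sigma^*}$ indexed by $Q$ are two names for the same operation.
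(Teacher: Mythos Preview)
Your proof is correct and follows essentially the same approach as the paper. The only cosmetic difference is in the $(1)\Rightarrow(2)$ direction: the paper's induction (processing words as $aw$) shows that each $\underline{q}_w$ equals the value computed by the automaton $\mathcal{H}_{\underline{q}}$ obtained from $\mathcal{H}_{Q,\underline{\lambda}}$ by making $\underline{q}$ the unique initial state, whereas you track the forward state vector $\beta^w$ and expand $\langle\varepsilon\rangle\cdot w$ over $Q$ (processing words as $wa$); these are the two standard dual organisations of the same computation, and the $(2)\Rightarrow(1)$ direction is essentially identical in both.
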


\begin{rem}
	\label{rem:rows}
Before presenting the proof, we alert the reader to the fact that in Theorem \ref{theorem:prelim} the set $Q$ is not assumed to be a set of rows of the Hankel matrix of $f$, and as we shall see below, in general, a function computed by a finite-state weighted automaton over an arbitrary semiring may or may not satisfy $\Lambda_Q \neq \emptyset$ for some finite subset $Q$ of the rows of its Hankel matrix.  However, in view of Notation \ref{notation:lambda}, Remark \ref{rem:lambda} and Definition \ref{definition:hankel}, \textbf{if there exists} a finite set $Q$ of rows of the Hankel matrix such that $\Lambda_Q$ is non-empty, then  for any finite set $T \subseteq \Sigma^*$ and any $\underline{\lambda} \in \Lambda_Q$ we have $\underline{\lambda} \in \Lambda_Q \subseteq \Lambda_{Q, T}$, so that (by construction) the hypothesis automaton $\mathcal{H}_{Q, T, \underline{\lambda}}$ constructed from the finite $(Q,T)$ submatrix of the Hankel matrix of $f$, will be equal to the Hankel automaton $\mathcal{H}_{Q, \underline{\lambda}}$ and, so by Theorem \ref{theorem:prelim}, will compute $f$. The success of Angluin-type algorithms in the existing literature is therefore typically based on the ability to find (using finitely many membership and equivalence queries): a finite set of words $Q$ such that $\Lambda_Q$ is non-empty, together with a finite set of words $T$ such that $\Lambda_{Q, T}=\Lambda_Q$ in the given setting.
\end{rem}	
\begin{proof}
Suppose that $Q \subseteq S^{\Sigma^*}$ is a finite subset such that $\Lambda_Q \neq \emptyset$ (or equivalently by Remark \ref{rem:lambda}, $Q$ is left-closed and left-generates $\langle \varepsilon \rangle$) and let $\underline{\lambda} \in \Lambda_{Q}$. We shall show that $\mathcal{H}_{Q, \underline{\lambda}}$ computes $f$. To begin with, for each $\underline{q}$ in $Q$, let $\mathcal{H}_{\underline{q}}$ be the automaton obtained from $\mathcal{H}_{Q,\underline{\lambda}}$ by imposing the condition that $\underline{q}$ is the unique initial state with weight $1_S$, and keeping all other weights the same. We claim that for all
$w$ in $\Sigma^*$, $\underline{q}_w$ is equal to the weight computed by $\mathcal{H}_{\underline{q}}$ on input $w$. The proof is by induction on the length of $w$. If $w=\varepsilon$, then $\underline{q}_\varepsilon$ is (by construction) equal to the final weight of $\underline{q}$ in $\mathcal{H}_{Q,\underline{\lambda}}$ which proves the property. For $a$ in $\Sigma$ and $w$ in $\Sigma^*$, $\underline{q}_{aw} = (\underline{q}\cdot a)_w = \bigoplus_{\underline{p} \in Q}  \underline{\lambda}_{\underline{q},a,\underline{p}} \otimes \underline{p}_w$ since $\underline{\lambda} \in \Lambda_Q$. By induction hypothesis, $\underline{p}_w$ is the weight computed by $\mathcal{H}_{\underline{p}}$ on input $w$, and moreover $\underline{\lambda}_{\underline{q},a,\underline{p}}$ is (by construction)  the weight of the transition from state $\underline{q}$ to state $\underline{p}$ labelled by $a$. So $\underline{q}_{aw}$ is indeed the weight computed by $\mathcal{H}_{\underline{q}}$ on input $aw$. To conclude the proof, recall that $\langle \varepsilon \rangle$ denotes the row of $\varepsilon$ in the Hankel matrix of $f$, and so by definition of the Hankel matrix of $f$ and the fact that $\underline{\lambda} \in \Lambda_Q$ we have $f(w) = \langle \varepsilon \rangle_w = \bigoplus_{\underline{q} \in Q} \underline{\lambda}_q \otimes \underline{q}_w$; by our previous observation (together with the fact that $\underline{\lambda}_Q$ is the initial state vector of $\mathcal{H}_{Q, \underline{\lambda}}$), this is then exactly the value computed by $\mathcal{H}_{Q, \lambda}$ on input $w$.
	
	Now suppose that $f:\Sigma^* \to S$ is computed by a finite-state weighted automaton $\mathcal{A}$ over $S$ with set of states $Q$. For each $q \in Q$, let $_q\mathcal{A}$ (resp. $\mathcal{A}_q$) denote the automaton obtained from $\mathcal{A}$ by making $q$ the unique initial (resp. final) state with weight $1_S$. For each $q \in Q$, define $\underline{q} \in S^{\Sigma^*}$ where for all $u \in \Sigma^*$, $\underline{q}_u$ is equal to the value computed by $_q\mathcal{A}$ on input $u$. Let $\underline{Q} = \{\underline{q} \mid q\in Q\}$. First $\underline{Q}$ is left-closed since, by definition, for all words $u\in \Sigma^*$, all states $q \in Q$ and all $a\in \Sigma$ we have
 $$(\underline{q} \cdot a)_u = \underline{q}_{au} = \bigoplus_{p \in Q} (\nu_{p} \otimes \underline{p}_u)$$ where $\nu_p$ is the weight of the transition labelled by $a$ in $\mathcal{A}$ from $q$ to $p$.
	Let $w$ be a word in $\Sigma^*$ and as before let $\langle w \rangle$ denote the infinite row of the Hankel matrix of $f$ indexed by $w$. Let $\mu_{q}(w)$ be the value computed by $\mathcal{A}_q$ on input $w$. Then, by definition, $\langle w \rangle = \bigoplus_{q \in Q} \mu_{q}(w) \otimes \underline{q}$, showing that $\underline{Q}$ is row-generating for $f$ and hence in particular $\langle \varepsilon \rangle$ is left-generated by $\underline{Q}$.
\end{proof}

\begin{rem}
	\label{rem:dual}
	In this section we have taken an approach based on ``rows'', describing how to construct hypothesis automata utilising finitely many partial rows of the Hankel matrix. Clearly, a dual construction can be made working with the transpose of the Hankel matrix, that is, using  partial columns and corresponding notions of right-linear combinations, right semimodules, right-generating sets,  right-closed sets, column-generating sets, etc.  In the text that follows we shall refer to the analogues of Hankel automata and hypothesis automata constructed by this dual method as co-Hankel automata and co-hypothesis automata. For the readers convenience we outline the details of these constructions and summarise the dual definition to those presented in the main text in Appendix \ref{section:swap}. 
\end{rem}

The previous result together with its dual (see Theorem~\ref{theorem:comain} of the appendix) give the following:
\begin{cor}
Let $S$ be a semiring and $f: \Sigma^* \rightarrow S$. The following are equivalent:
\begin{enumerate}
	\item  $f$ is computed by a finite-state weighted automaton $\mathcal{A}$ over $S$;
	\item the rows of the Hankel matrix of $f$ lie in a left-subsemimodule of $S^{\Sigma^*}$ generated by a finite left-closed set;
	\item the columns of the Hankel matrix of $f$ lie in a right-subsemimodule of $S^{\Sigma^*}$ generated by a finite right-closed set.
\end{enumerate}
\end{cor}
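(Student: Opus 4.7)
The plan is to derive the equivalence of (1) and (2) directly from Theorem \ref{theorem:prelim} combined with Lemma \ref{lem:leftgen}, and then obtain the equivalence of (1) and (3) from the dual statement (Theorem \ref{theorem:comain}) in the same way. Since the two halves are symmetric, I will only describe the first in detail.

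For (1) $\Rightarrow$ (2): assuming $f$ is computed by a finite-state weighted automaton over $S$, Theorem \ref{theorem:prelim} supplies a finite subset $Q \subseteq S^{\Sigma^*}$ which is left-closed and left-generates the row $\langle \varepsilon \rangle$. Part (1) of Lemma \ref{lem:leftgen} then promotes this to the statement that $Q$ is row-generating for $f$, which is precisely the assertion that every row of the Hankel matrix of $f$ lies in the left-subsemimodule generated by the finite left-closed set $Q$.

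For (2) $\Rightarrow$ (1): assume that the rows of the Hankel matrix of $f$ all lie in the left-subsemimodule generated by some finite left-closed set $Q \subseteq S^{\Sigma^*}$. In particular, the row $\langle \varepsilon \rangle$ is left-generated by $Q$, and together with left-closedness of $Q$ this is precisely the condition (see Remark \ref{rem:lambda}) that $\Lambda_Q \neq \emptyset$. Applying Theorem \ref{theorem:prelim} then yields a finite-state weighted automaton over $S$ computing $f$, namely $\mathcal{H}_{Q,\underline{\lambda}}$ for any $\underline{\lambda} \in \Lambda_Q$.

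The equivalence (1) $\Leftrightarrow$ (3) is obtained by exactly the same argument applied to the transposed Hankel matrix, using right-linear combinations, right-semimodules and right-closed sets, and invoking the dual Theorem \ref{theorem:comain} (and the analogous dual of Lemma \ref{lem:leftgen}) in place of Theorem \ref{theorem:prelim} and Lemma \ref{lem:leftgen}. There is no real obstacle: all the substantive work has already been carried out in the proof of Theorem \ref{theorem:prelim} and Lemma \ref{lem:leftgen}; the only thing the corollary adds is the repackaging of the ``left-closed plus left-generates $\langle \varepsilon \rangle$'' condition into the more symmetric formulation in terms of row- and column-generating sets, making explicit the row/column duality alluded to in Remark \ref{rem:dual}.
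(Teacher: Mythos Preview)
Your proof is correct and follows the same route as the paper, which simply records the corollary as a direct consequence of Theorem~\ref{theorem:prelim} and its dual Theorem~\ref{theorem:comain}. You have made explicit the small bridging step (via Lemma~\ref{lem:leftgen}) needed to pass between ``left-closed and left-generates $\langle \varepsilon \rangle$'' and ``left-generates all rows'', which the paper leaves implicit.
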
	

\begin{rem}
\label{rem:fields1}
Notice that if $S$ is a field (as in \cite{F74}) then the second condition of the previous result implies that the rows of the Hankel matrix must lie in a finitely generated vector space, which in turn forces the vector space spanned by the rows of the Hankel matrix to be finitely generated. Thus if $f$ is a function computed by a finite automaton over a field $S$,  one can find finite sets of \emph{words} $Q,P \subseteq \Sigma^*$ such that the rows of the Hankel matrix indexed by $Q$ are row-generating and the columns of the Hankel matrix indexed by $P$ are column-generating. However, over an arbitrary semiring, a subsemimodule of a finitely generated semimodule need not be finitely generated, and so there is no reason to suspect that the left-semimodule generated by the rows of the Hankel matrix is finitely generated. Moreover, it can happen that the left-semimodule generated by the rows of the Hankel matrix is finitely generated, whilst the right-semimodule generated by the columns is not (or vice versa).
\end{rem}

\subsubsection*{\textbf{Arising questions and structure of the paper:}}

It is known that for all functions computed by weighted automata over fields there is an efficient algorithm (using linear algebra) to find, by membership and equivalence queries, finite
sets of \textbf{words}
$Q$ and $T$ such that
$\Lambda_{Q,T}$ is non-empty
together with an  element 
$\underline{\lambda}
\in \Lambda_{Q,T}$  such that the hypothesis automaton $\mathcal{H}_{Q,T,\underline{\lambda}}$ computes the target function (see for example \cite{BV96}). For an arbitrary (but fixed) semiring $S$, the following questions arise:
\begin{itemize}
\item For which functions $f: \Sigma^* \rightarrow S$ does there \emph{exist} an hypothesis automaton $\mathcal{H}_{Q, T, \underline{\lambda}}$ computing $f$? We will call the class of functions which can be computed by \emph{at least one} hypothesis automaton (which recall is built using a finite portion of the Hankel matrix) weakly guessable. Clearly, if $f$ is not weakly guessable then any learning algorithm based around making hypotheses of the kind under discussion (constructed from certain left linear combinations of partial rows of the Hankel matrix) will fail to learn $f$. 
\item For weakly guessable functions over $S$, is there a learning algorithm using membership and equivalence queries that allows a learner to construct a finite sequence of hypothesis automata, where the last in the sequence computes the target function?
In other words: when there exists at least one correct hypothesis, can we devise a strategy to find one? A possible issue here is that even when we have settled upon a suitable set $Q$ with $\Lambda_Q \neq \emptyset$, it may be that there is no finite set $T$ such that $\Lambda_{Q, T}= \Lambda_Q$, and therefore no obvious general method to discover a ``correct'' $\underline{\lambda} \in \Lambda_Q$  in finite time. We say that $f$ is guessable if there exist finite sets $Q$ and $T$ such that $\Lambda_{Q,T} = \Lambda_Q \neq \emptyset$ and so \emph{every} hypothesis automaton $\mathcal{H}_{Q, T, \lambda}$ with $\lambda \in \Lambda_{Q,T}$ computes $f$. Since $\Sigma^*$ is countable,
so too is the set of all pairs of \emph{finite} subsets of $\Sigma^*$; thus if armed with an enumeration of all such pairs together with the ability to compute an element of every left solution set $\Lambda_{Q,T}$  (or else determine that the set is empty), it is clear that there is a naive algorithm (working through the list of all pairs of subsets, and performing an equivalence query where possible) that
is guaranteed to succeed on the 
class of guessable functions (since $\Lambda_{Q,T}=\Lambda_Q \neq \emptyset$ holds for at least one pair).
\item In order to find a correct hypothesis automaton, some of the existing approaches and algorithms typically alternate between `incrementation' of the sets of words $Q$ and $T$ adding elements to $Q$ in a prescribed manner when the left-solution set $\Lambda_{Q,T}$ is found to be empty, and adding elements to $T$ in a prescribed manner when a counter example is given, eventually converging on a pair for which $\Lambda_{Q, T} = \Lambda_Q \neq \emptyset$. A potential problem here is that, in general, one does not know for a given set $Q'$  whether it is worth persevering by continuing to add words to $T$ or not (in other words, even though there exists a finite set $Q$ such that $\Lambda_{Q,T} = \Lambda_Q \neq \emptyset$ for some finite set $T$, it may be the case that for the current choice $Q'$ we have $\Lambda_{Q', T} \supsetneq \Lambda_{Q'}$  for all finite sets $T$). With this in mind, we say that a guessable function $f$ is strongly guessable if for all finite sets of words $Q$ there exists a finite set of words $T$, such that $\Lambda_{Q,T}=\Lambda_Q$. For strongly guessable functions, the general problem therefore reduces to the question of how to find a set $Q$ with the property that $\Lambda_{Q} \neq \emptyset$.
\item Finally, given a partial Hankel matrix one could ask the corresponding questions concerning existence of co-hypothesis automata constructed from right-linear combinations of partial columns of the Hankel matrix (see Appendix \ref{section:swap} for details), leading to the dual notion of weakly co-guessable, co-guessable, and strongly co-guessable functions.
\end{itemize}
\noindent 
In Section~\ref{section:guessable}, we define the classes of weakly guessable, guessable and strongly guessable functions and give several equivalent characterisations for them in Theorem~\ref{prop:weakly} and \ref{prop:guessable}. In Section~\ref{section:hierarchy}, we give a general hierarchy for these classes with some specific  examples, and give conditions for this hierarchy to collapse. This allows us to have a clearer picture of what happens for several well-studied examples of semirings arising in application areas, where limitations of this approach may be of interest.  This opens many questions for further research that are stated in the concluding Section~\ref{section:conclusion}. 

\section{Guessable functions}
\label{section:guessable}

In this section, we define weakly guessable,  guessable, and strongly guessable functions and give characterisations for them.

\subsection{Weakly guessable functions}

\begin{defi}
\label{definition:weakly}
A function $f: \Sigma^* \rightarrow S$ is said to be weakly guessable if there exists a finite set $Q$ of rows of the Hankel matrix of $f$ such that $\Lambda_Q$ is non-empty. 
\end{defi}

Observe that by Theorem~\ref{theorem:prelim}, this implies that for all $\underline{\lambda} \in \Lambda_Q$ the Hankel automaton $\mathcal{H}_{Q,\underline{\lambda}}$ computes $f$ and hence for any non-empty set of words $T$, there exists an hypothesis automaton $\mathcal{H}_{Q,T,\underline{\lambda}}$ computing $f$ since (by Remark \ref{rem:subsets}) $\Lambda_Q \subseteq \Lambda_{Q,T}$. So this definition means that there is ``some hope'' that an algorithm \`a la Angluin can find a correct automaton, but does not really give a handle on what these functions look like. The following definition (literal automaton) turns out to provide a syntactic characterisation of these functions in terms of automata, and links to residual automata. A literal automaton is a residual automaton, as defined for example for non-deterministic or nominal automata~\cite{GGV20,MS22} with extra conditions to deal with the weights. It is very close to the reduced residual automata in the stochastic settings defined in~\cite{DE08}, with a slight difference in the way the weights are handled - the reason being that, while non-zero probabilities have an inverse for the multiplication, this is not systematically the case in general semirings. We use the terminology literal automata to avoid any confusion with~\cite{DE08}, our choice of terminology reflecting the fact that each state of a literal automaton can be labelled by a word with the property that any run from the initial state labelled by that word terminates in the given state.

\begin{defi}
\label{definition:literal}
A weighted automaton over $S$ with set of states $Q$ is said to be literal if it has a unique initial state with weight $1_S$ and there exists a prefix-closed set of words $W \subseteq \Sigma^*$ and a bijection $\sigma: Q \rightarrow W$ such that for all $q \in Q$:
	\begin{itemize}
		\item there is a unique run, $\gamma_q$, starting at the initial state and labelled by $\sigma(q)$,
		\item each transition in $\gamma_q$ has weight $1_S$,
		\item $\gamma_q$ terminates at state $q$.
	\end{itemize}
\end{defi}
\begin{rem}
\label{rem:litrows}
In a literal automaton $\mathcal{A}$ with set of states $Q$, each state $q$ can be labelled by a word $\sigma(q)$ which when taken as input terminates in  state $q$ (via a unique run, which has weight $1_S$), and so in this case the set $\underline{Q} = \{\underline{q}  \mid q \text{ is a state of } \mathcal{A}\}$ defined in Theorem~\ref{theorem:prelim} is easily seen to be precisely the set of rows of the Hankel matrix of $\mathcal{A}$ indexed by the words chosen to label the states, that is, $\underline{Q} = \{\langle \sigma(q) \rangle: q \in Q\}$.
\end{rem}

Finally, we link the functions computed by literal automata to a weaker variant of a property given in~\cite{HKRS20}, demonstrating that this corresponds exactly to the weakly guessable functions.

\begin{defi}
\label{definition:wacc}
Given a function $f:\Sigma^* \to S$ and $F$ its Hankel matrix, $f$ satisfies the weak ascending chain condition if for all chains of left-semimodules  $X_0 \subseteq X_1 \subseteq X_2 \subseteq \ldots $ such that
\begin{itemize}
    \item each $X_i$ is generated by a finite set of rows of $F$;
    \item each row of $F$ belongs to some $X_i$
\end{itemize}
there exists $n$ such that for all $m \geq n$, $X_m = X_n$.
\end{defi}

\begin{rem}
\label{rem:Noetherian}
The terminology ``weak'' in the previous definition reflects the fact that we only consider \emph{certain chains} of left-semimodules. A left-semimodule is said to be Noetherian if \emph{all} chains of left-subsemimodules satisfy the ascending chain condition.  Thus, if every left-semimodule generated by a finite set of rows of $F$ is Noetherian, then $f$ satisfies the weak ascending chain condition above, but not conversely (in general, a submodule of a finitely generated semimodule need not be finitely generated, nor need a subsemimodule of a semimodule generated by rows of the Hankel matrix also be generated by rows of the Hankel matrix).	
\end{rem}

\begin{thm}
\label{prop:weakly}
Given a function $f:\Sigma^* \to S$, the following assertions are equivalent:
\begin{enumerate}
\item $f$ is weakly guessable;
\item there exists a literal automaton computing $f$;
\item there exists a finite set of rows of the Hankel matrix of $f$ that is row-generating for $f$;
\item $f$ satisfies the weak ascending chain condition.
\end{enumerate}  
\end{thm}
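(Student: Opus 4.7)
The plan is to prove the equivalences $(1) \Leftrightarrow (3) \Leftrightarrow (4)$ together with $(1) \Leftrightarrow (2)$, with the genuine construction concentrated in $(3) \Rightarrow (2)$.

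For $(1) \Leftrightarrow (3)$: for a finite set $Q$ of rows of the Hankel matrix, non-emptiness of $\Lambda_Q$ is equivalent by Remark~\ref{rem:lambda} to $Q$ being left-closed and left-generating $\langle \varepsilon \rangle$, and this is in turn equivalent by the two halves of Lemma~\ref{lem:leftgen} to $Q$ being row-generating. For $(2) \Rightarrow (1)$: given a literal automaton computing $f$ with state set $Q$, Remark~\ref{rem:litrows} identifies the set $\underline{Q}$ from Theorem~\ref{theorem:prelim} with the rows of the Hankel matrix indexed by the prefix-closed labelling set, and Theorem~\ref{theorem:prelim} then delivers $\Lambda_{\underline{Q}} \neq \emptyset$. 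For $(3) \Rightarrow (4)$: a finite row-generating set $R$ must sit inside some $X_N$ of any admissible chain (since $R$ is finite and each row of $R$ lies in some $X_i$), after which every later $X_m$ is squeezed between $X_N$ and the semimodule generated by $R$. For $(4) \Rightarrow (3)$: apply weak ACC to the specific chain in which $X_i$ is generated by all rows indexed by words of length at most $i$, using finiteness of $\Sigma$ to ensure each $X_i$ is finitely generated and noting that the union contains every row.

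The heart of the argument is $(3) \Rightarrow (2)$. Starting from a finite $W \subseteq \Sigma^*$ with $\{\langle w \rangle : w \in W\}$ row-generating, the plan is to pass to the prefix closure $W'$ of $W$ (still finite, still row-generating since it is a superset of a row-generating set) and build an automaton $\mathcal{B}$ on state set $W'$ with unique initial state $\varepsilon$ of weight $1_S$, final weight $f(w) = \langle w \rangle_\varepsilon$ at each state $w$, and the following $a$-transitions out of each state $w$: if $wa \in W'$, a single transition $w \xrightarrow{a : 1_S} wa$ and nothing else; if $wa \notin W'$, coefficients $\mu_{w,a,w''}$ realising a fixed left-linear expansion $\langle wa \rangle = \bigoplus_{w'' \in W'} \mu_{w,a,w''} \otimes \langle w'' \rangle$ installed as transition weights. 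Prefix-closure of $W'$ guarantees that every prefix of any $w \in W'$ again lies in $W'$, so the letters of $w$ never trigger the second case, and the unique run from $\varepsilon$ labelled by $w$ is the forward one, of weight $1_S$, terminating at state $w$: exactly the literal property with $\sigma = \mathrm{id}$. Correctness of $\mathcal{B}$ then follows by an induction on the length of the input $v$ showing that starting $\mathcal{B}$ at state $w$ computes $\langle w \rangle_v = f(wv)$; the two transition cases feed directly into the inductive step.

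I expect the main obstacle to be this construction, specifically reconciling two competing demands: uniqueness of the distinguished runs (forcing the weight-$1_S$ prefix edges to be the only $a$-transitions out of $w$ whenever $wa \in W'$) and correctness on arbitrary inputs (which must be absorbed by the frontier transitions where $wa \notin W'$). Passing to the prefix closure $W'$ is precisely what makes these two demands compatible.
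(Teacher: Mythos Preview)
Your proposal is correct and follows essentially the same route as the paper: both pass to the prefix closure to build the literal automaton, both install weight-$1_S$ tree edges along prefixes and linear-combination weights at the frontier, and both verify correctness by induction on the length of the input. Your construction for $(3)\Rightarrow(2)$ is in fact slightly cleaner than the paper's $(1)\Rightarrow(2)$: by taking the \emph{words} of $W'$ as states (rather than the rows they index) and reading the frontier coefficients directly from the row-generating property of $W'$, you sidestep the paper's more elaborate inductive definition of $\underline{\mu}\in\Lambda_{Q'}$ from a fixed $\underline{\lambda}\in\Lambda_Q$, and you avoid having to worry about distinct words in $W'$ indexing the same row.
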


\begin{rem}
	\label{rem:rows2}
    Before presenting the proof, it is instructive to  compare Theorem \ref{prop:weakly} (characterising weakly guessable functions) with Theorem \ref{theorem:prelim} (characterising functions computed by a finite weighted automata). Notice that point (3) of Theorem \ref{prop:weakly} states that there is a finite subset of the \emph{rows of the Hankel matrix} that left-generates every row of the Hankel matrix, whilst Theorem \ref{theorem:prelim}, states that the rows of the Hankel matrix can be generated by finitely many elements of $S^{\Sigma^*}$ (which need not themselves be rows of the Hankel matrix; see also Remark \ref{rem:rows}).
\end{rem}
\begin{proof}
	(1. implies 2.) Since $f$ is weakly guessable there is a finite set of rows $Q$ of the Hankel matrix such that $\Lambda_Q$ is non-empty. Let $W$ be a finite set of words with $|W|=|Q|$ such that $Q$ is the set of rows of the Hankel matrix indexed by $W$ and let $W'$ denote the set of all prefixes of words in $W$,  with $Q'$ the corresponding set of rows of the Hankel matrix, so that $Q \subseteq Q' = \{\langle q \rangle \mid q \in W'\}$. Since $\Lambda_Q \neq \emptyset$, we have seen (in Lemma \ref{lem:Qincrement}) that $\Lambda_{Q'} \neq \emptyset$; we aim to show that there exists $\underline{\mu} \in \Lambda_{Q'}$ such that the corresponding Hankel automaton $\mathcal{H}_{Q', \underline{\mu}}$ (which by Theorem \ref{theorem:prelim} computes $f$) is literal. To this end, we  let $\underline{\lambda} \in \Lambda_Q$ so that (by Theorem ~\ref{theorem:prelim}) $\mathcal{H}_{Q,\underline{\lambda}}$ computes $f$ and use the values in $\underline{\lambda}$ to define $\underline{\mu} \in S^{Q' \cup (Q' \times \Sigma \times Q')}$ as follows:
	\begin{itemize}
		\item $\underline{\mu}_{\langle q \rangle} = 1_S$ if and only if $\langle q \rangle =\langle \varepsilon \rangle$ and $\underline{\mu}_{\langle q \rangle} = 0_S$ for all other  $\langle q \rangle  \in Q'$,
		\item if $q \in W'$, $a \in \Sigma$ and $qa \in W'$, then $\underline{\mu}_{\langle q \rangle, a ,\langle qa\rangle} = 1_S$, and $\underline{\mu}_{\langle q \rangle,a,\langle p\rangle} = 0_S$ for all other $\langle p \rangle \in Q'$;
		\item if $q  \in W$, $a \in \Sigma$ and $qa$ is not in $W'$, then $\underline{\mu}_{\langle q \rangle ,a, \langle p\rangle } = \underline{\lambda}_{\langle q \rangle ,a,\langle p\rangle }$ for all $p \in W$ and $\underline{\mu}_{\langle q \rangle ,a,\langle p\rangle} = 0_S$ for all $\langle p \rangle \in Q' \setminus Q$;
		\item if $q \in W' \setminus W$, $a$ in $\Sigma$ and $qa$ is not in $W'$, then for each $\langle p \rangle \in Q'$ we define $\underline{\mu}_{\langle q \rangle ,a,\langle p \rangle}$ by induction on the length of $q$, as follows:\\
		If $q = \varepsilon$, then since $\underline{\lambda} \in \Lambda_Q$ for all words $w$ we have $\langle \varepsilon \rangle_w = \bigoplus_{\langle r \rangle \in Q} \underline{\lambda}_{\langle r \rangle} \otimes \langle r \rangle_w$ and so \[\langle a \rangle_w = \langle \varepsilon \rangle_{aw} =  \bigoplus_{\langle r \rangle \in Q} \underline{\lambda}_{\langle r \rangle} \otimes \langle r \rangle_{aw} =  \bigoplus_{\langle r \rangle \in Q} \underline{\lambda}_{\langle r \rangle} \otimes \langle ra \rangle_w =  \bigoplus_{\langle r \rangle, \langle p \rangle  \in Q} \underline{\lambda}_{\langle r \rangle} \otimes \underline{\lambda}_{\langle r \rangle ,a, \langle p \rangle} \otimes \langle p \rangle_w,\]
using the fact that $\langle r a \rangle = \langle r \rangle \cdot a$. In this case we set: $\underline{\mu}_{\langle \varepsilon \rangle,a,\langle p \rangle} = \bigoplus_{\langle r \rangle \in Q} \underline{\lambda}_{\langle r \rangle} \otimes \underline{\lambda}_{\langle{r} \rangle,a,\langle p\rangle}$ for all $\langle p \rangle \in Q$ and $\mu_{\langle \varepsilon \rangle ,a, \langle p \rangle } = 0_S$ for all $\langle p \rangle \in Q' \setminus Q$.\\
        If $q = ra$ and $\underline{\mu}_{\langle r \rangle ,a,\langle p \rangle}$ has already been defined for all $\langle p \rangle \in Q'$ in such a way that expresses $\langle q \rangle = \langle ra \rangle = \langle r \rangle \cdot a$ as a left-linear combination of the rows $Q$ (i.e. such that $\underline{\mu}_{\langle r \rangle,a,\langle p \rangle}=0_S$ if $\langle p \rangle$ is not in $Q$), then for all words $w$ and all $ b\in \Sigma$ we have 
		\begin{eqnarray*}(\langle q \rangle \cdot b)_w = \langle qb \rangle_w &=& \langle rab \rangle_w = \langle ra \rangle_{bw} =  \bigoplus_{\langle p  \rangle \in Q} \underline{\mu}_{\langle r \rangle ,a,\langle p\rangle}\otimes \langle p \rangle_{bw}\\
			&=& \bigoplus_{\langle p  \rangle \in Q} \underline{\mu}_{\langle r \rangle ,a,\langle p\rangle} \otimes \langle pb \rangle_{w} = \bigoplus_{\langle p \rangle,\langle s \rangle \in Q} \underline{\mu}_{\langle r \rangle ,a,\langle p\rangle} \otimes \underline{\lambda}_{\langle p \rangle ,b, \langle s\rangle} \otimes \langle s \rangle_{w}.\end{eqnarray*}
		In this case we set $\underline{\mu}_{\langle ra \rangle ,b,\langle s\rangle} = \bigoplus_{\langle p \rangle \in Q} \underline{\mu}_{\langle r \rangle,a,\langle p\rangle} \otimes \underline{\lambda}_{\langle p\rangle ,b,\langle s\rangle }$ for all $\langle s \rangle \in Q$ and $\mu_{\langle ra \rangle ,b,\langle s\rangle} = 0_S$ for all $\langle s \rangle \in Q' \setminus Q$.
	\end{itemize}
By construction we then have that $\underline{\mu}$ belongs to $\Lambda_{Q'}$ and hence by Lemma~\ref{theorem:prelim}, $\mathcal{H}_{Q',\underline{\mu}}$ computes $f$. The automaton $\mathcal{H}_{Q',\underline{\mu}}$ is also literal by construction, which concludes the proof.
	
(2. implies 3.) Let $\mathcal{A}$ be a literal automaton computing $f$ and $Q$ its set of states. By Theorem~\ref{theorem:prelim}, the set $\underline{Q}=\{\underline{q}  \mid q \in Q\}$, where $\underline{q}_w$ is defined as the value computed by the automaton $\mathcal{A}_q$ on input $w$, is row-generating for $f$ and, as observed in Remark \ref{rem:litrows}, there is a finite set of words $\{\sigma(q) \mid q \in Q\}$ such that $\underline{Q} = \{\langle\sigma(q) \rangle \mid q \in Q\}$. 
	
(3. implies 4.) Let $W \subseteq \Sigma^*$ be a finite set such that $\{\langle w \rangle \mid w \in W\}$ is row-generating and consider a sequence of left-semimodules $(X_i)_{i \in \mathbb{N}}$ as in the definition of the weak ascending chain condition. Then there exists $n$ such that for all $m \geq n$, all the rows $\langle w \rangle$ for $w$ in $W$ are in $X_m$, and since these elements are row-generating for $f$ it follows that $X_m = X_n$.

(4. implies 1.) For all positive integers $i$, let $X_i$ be the left-semimodule generated by the set $Q_i$ containing all rows of $F$ indexed by words of length at most $i$. Then the $X_i$ satisfy the hypothesis of the weak ascending chain condition and hence there is $n$ such that for all $m$, $X_m = X_n$. Necessarily the set $Q_n$ is row-generating and hence (by Lemma \ref{lem:leftgen}) left-closed, and left-generates $\langle \varepsilon \rangle$, so there exists some $\lambda \in \Lambda_Q$, giving $\Lambda_{Q_n} \neq \emptyset$ which demonstrates that $f$ is weakly guessable and concludes the proof.
\end{proof}

\begin{rem}
\label{rem:dual2}
As noted in Remark \ref{rem:dual}, one can also work with the transpose of the Hankel matrix, leading to obvious left-right dual notions which we refer to as weakly co-guessable, column generating sets, and weak co-ascending chain condition: see Appendix \ref{section:swap} for full details. We say that an automaton is co-literal if its mirror is literal. With these definitions in place, one can prove the following entirely analogous result.
\end{rem}

\begin{thm}
	\label{prop:coweakly}
	Given a function $f:\Sigma^* \to S$, the following assertions are equivalent:
	\begin{enumerate}
		\item $f$ is weakly co-guessable;
		\item there exists a co-literal automaton computing $f$;
		\item there exists a finite set of columns of the Hankel matrix of $f$ that is column-generating for $f$;
		\item $f$ satisfies the weak co-ascending chain condition.
	\end{enumerate}  
\end{thm}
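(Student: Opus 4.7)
The plan is to replicate the cyclic argument used in the proof of Theorem \ref{prop:weakly} step-for-step, working throughout with the transpose of the Hankel matrix rather than the matrix itself. All of the machinery developed in Section \ref{section:prelim} has a right-handed analogue (summarised in Appendix \ref{section:swap}): in place of Lemma \ref{lem:leftgen} I use its obvious dual for right-closed, column-generating sets; in place of Theorem \ref{theorem:prelim} I use Theorem \ref{theorem:comain}; the notion of prefix-closed set of words is replaced by suffix-closed set of words; and the unique initial state of weight $1_S$ in the definition of literal is replaced by the unique final state of weight $1_S$ in the definition of co-literal (since co-literal just means the mirror is literal).

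For the implication $(1 \Rightarrow 2)$, I start from a finite set $Q$ of columns of the Hankel matrix of $f$ whose dual solution set is non-empty, and enlarge $Q$ to the set $Q'$ of columns indexed by all suffixes of the original index words. Using a coefficient vector $\underline{\lambda}$ witnessing the dual closure on $Q$, I construct an extended coefficient vector $\underline{\mu} \in S^{Q' \cup (Q' \times \Sigma \times Q')}$ by induction on suffix length: whenever the suffix $aq$ of a chosen label is itself a label, I set the weight of the transition out of $\langle aq \rangle$ into $\langle q \rangle$ under letter $a$ to $1_S$ and all other such outgoing weights to $0_S$; where a newly-introduced suffix fails to have a letter-extension that is also a label, I use $\underline{\lambda}$ (suitably propagated through the inductive step) to define the transition weights. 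The resulting co-Hankel automaton computes $f$ by Theorem \ref{theorem:comain} and is co-literal by construction.

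The remaining implications $(2 \Rightarrow 3)$, $(3 \Rightarrow 4)$, and $(4 \Rightarrow 1)$ translate directly. A co-literal automaton yields the labels of its states as a finite set of words indexing a column-generating set of columns of the Hankel matrix, by Theorem \ref{theorem:comain} together with the co-analogue of Remark \ref{rem:litrows}. A finite column-generating set forces any ascending chain of right-semimodules satisfying the hypotheses of the weak co-ascending chain condition to stabilise once those finitely many columns are all captured. Conversely, the chain of right-semimodules generated by columns indexed by words of length at most $i$ must stabilise by hypothesis, producing a finite right-closed, column-generating set and hence a non-empty dual solution set.

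The only real subtlety is keeping the dualisation consistent: it is tempting to try to derive the result directly from Theorem \ref{prop:weakly} applied to the mirror function $\bar f$, but as noted in the discussion of the mirror operation in Section \ref{section:prelim}, $\bar f$ need not itself be computable by any weighted automaton over a non-commutative $S$, so this shortcut is unavailable in general and the argument must be rerun with right-handed combinatorics. Once the appropriate dual of each preliminary lemma is stated, the proof is mechanical.
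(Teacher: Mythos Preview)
Your proposal is correct and follows precisely the approach the paper itself indicates: the paper does not give a separate proof of this theorem but simply states (in Remark~\ref{rem:dual2}) that the result is the left--right dual of Theorem~\ref{prop:weakly} and that ``one can prove the following entirely analogous result'' using the dual definitions collected in Appendix~\ref{section:swap}. Your write-up makes the dualisation explicit (suffix-closure in place of prefix-closure, unique final state in place of unique initial state, right-linear combinations and the solution sets $\Gamma$ in place of $\Lambda$), and your observation that the shortcut via $\bar f$ is blocked in the non-commutative case is a useful clarification of why a genuine rerun of the argument is required.
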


\begin{rem}
\label{rem:fields2}
The semirings where \emph{all} functions computed by finite state weighted automata satisfy the weak ascending and co-ascending chain conditions are therefore the semirings for which all functions computed by finite state weighted automata are both weakly guessable and weakly co-guessable (by Theorem~\ref{prop:weakly} and its dual Theorem~\ref{prop:coweakly}). This means that over such a semiring all functions computed by a finite weighted automaton  have the potential to be learnt, provided one can make a lucky guess for $Q$, $T$ and $\lambda$. This is in particular the case for $\mathbb{B}$ (or indeed any finite semiring), $\mathbb{R}$ (or indeed, any field), and $\mathbb{Z}$  (or indeed, any principal ideal domain c.f. ~\cite{HKRS20}). More generally, if $R$ is any left-Noetherian ring, then all finitely generated left $R$-modules are Noetherian, and hence by Remark~\ref{rem:Noetherian}, \emph{all} functions computed by finite state weighted automata over $R$ are weakly guessable. This covers the case of finite rings, fields, principal ideal domains, and division rings.
\end{rem}

\subsection{Guessable and strongly guessable functions}

The definition of weakly guessable functions ensures the existence of some finite sets $Q,T \subseteq \Sigma^*$ and coefficients $\underline{\lambda} \in \Lambda_{Q,T}$ (writing here and from now on simply $Q$ in place of the set of rows of the Hankel matrix indexed by the words in $Q$, as discussed in Notation \ref{notation:lambda}) that together determine a correct hypothesis automaton. In other words, functions that are \emph{not weakly guessable} are those for which \emph{all hypothesis automata are incorrect}, and so (in fairness to the learner!) one should immediately exclude all such functions from consideration if one seeks to learn by membership and equivalence in the style of Angluin. For weakly guessable functions, the problem then becomes: how to find a suitable $Q, T$ and $\lambda$. Clearly $Q$ must be a row-generating set for the full Hankel matrix; but is it always possible to find such a set from finitely many membership and equivalence queries? Also, even if a suitable row generating set $Q$ is found, there is no guarantee that accompanying $T$ and $\lambda$ can be found in finite time. Guessable functions are those for which there exist finite subsets $Q, T \subseteq \Sigma^*$ such that all the associated hypothesis automata compute the correct function. Strongly guessable functions are those for which for \emph{every} finite row-generating set $Q$ there exists a finite subset $T \subseteq \Sigma^*$ such that all the associated hypothesis automata compute the correct function. In the following definition, the reader should recall from Remark \ref{rem:lambda} that the sets $\Lambda_{Q, T}$ depend upon the function $f$ (and hence also the semiring $S$).

\begin{defi}
A function $f: \Sigma^* \rightarrow S$ is guessable if there exist finite subsets $Q, T \subseteq \Sigma^*$ such that $\Lambda_{Q,T} = \Lambda_{Q} \neq \emptyset$. We say that the pair $(Q,T)$ witnesses that $f$ is guessable. Furthermore, we say that a guessable function $f$ is strongly guessable if for all finite sets $Q \subseteq \Sigma^*$, there exists a finite set $T \subseteq \Sigma^*$ such that $\Lambda_{Q,T} = \Lambda_{Q}$.
\end{defi}

It is clear from the definitions that every strongly guessable function is guessable, and every guessable function is weakly guessable. Moreover, observe that, by Theorem \ref{theorem:prelim}, if the pair $(Q, T)$ witnesses that $f$ is guessable, then all hypothesis automata over $Q$ and $T$ compute $f$.
We now give a weaker variant of another property given in~\cite{HKRS20} to characterise guessable functions.

\begin{defi}
Let $Q$ be a finite subset of $\Sigma^*$.
A total $Q$-chain is an infinite descending chain of left-solution sets $\Lambda_{Q, T_0} \supsetneq \Lambda_{Q, T_1} \supsetneq \Lambda_{Q, T_2} \supsetneq \cdots$  with $T_i \subseteq T_{i+1}$ for all $i \geq 0$ and such that each word in  $\Sigma^*$ is contained in some $T_i$.
We say that a function $f: \Sigma^* \rightarrow S$ is row-bound if there exists a finite set of words $Q$ such that 
$\Lambda_{Q} \neq \emptyset$
 and there is no total $Q-$chain.
We say that $f: \Sigma^* \rightarrow S$ is strongly row-bound if for \emph{every} finite set of words $Q$, there is no total $Q$-chain.
\end{defi}

The above definition is related to the progress measure defined in~\cite[Definition 11]{HKRS20}; we will discuss this in the next subsection.

\begin{prop}
\label{prop:guessable}
Let $f:\Sigma^* \to S$. Then $f$ is guessable if and only if $f$ is row-bound. Moreover, $f$ is strongly guessable if and only if $f$ is strongly row-bound and weakly guessable.
\end{prop}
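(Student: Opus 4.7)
The plan hinges on one simple compatibility identity. By the very definition of $\Lambda_{Q,T}$ (which consists of those $\underline{\lambda}$ satisfying the system of Notation~\ref{notation:lambda} pointwise at each $w \in T$), one has $\Lambda_{Q,\bigcup_i T_i} = \bigcap_i \Lambda_{Q,T_i}$ for any family of subsets of $\Sigma^*$. In particular, if $\{T_i\}_{i \geq 0}$ is an ascending cover of $\Sigma^*$, then $\bigcap_i \Lambda_{Q,T_i} = \Lambda_{Q,\Sigma^*} = \Lambda_Q$; combined with Remark~\ref{rem:subsets} (which tells us the sequence $\Lambda_{Q,T_i}$ is already descending), this means eventual constancy of the sequence is equivalent to equality with $\Lambda_Q$ from some index onward.

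For the first equivalence, suppose $(Q,T^*)$ witnesses that $f$ is guessable and that a total $Q$-chain $\Lambda_{Q,T_0} \supsetneq \Lambda_{Q,T_1} \supsetneq \cdots$ exists. Since $T^*$ is finite and the ascending $T_i$ cover $\Sigma^*$, some $T_j$ contains $T^*$; then Remark~\ref{rem:subsets} gives $\Lambda_Q \subseteq \Lambda_{Q,T_k} \subseteq \Lambda_{Q,T_j} \subseteq \Lambda_{Q,T^*} = \Lambda_Q$ for all $k \geq j$, contradicting strict descent. Conversely, assume $f$ is row-bound via $Q$. Enumerate $\Sigma^* = \{w_0,w_1,\ldots\}$, set $U_n = \{w_0,\ldots,w_n\}$, and consider the descending chain $\Lambda_{Q,U_0} \supseteq \Lambda_{Q,U_1} \supseteq \cdots$. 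If this chain were not eventually constant, we could inductively extract a strictly descending subsequence $(\Lambda_{Q,U_{i_k}})_{k \geq 0}$; the sets $U_{i_k}$ are ascending and cofinal in the enumeration, so every word lies in some $U_{i_k}$, yielding a total $Q$-chain and contradicting row-boundedness. Hence the chain stabilises at some $\Lambda_{Q,U_n}$, and by the identity this common value equals $\Lambda_Q$ (which is non-empty by row-boundedness), so $(Q,U_n)$ witnesses guessability.

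For the second equivalence, the forward direction is immediate: strongly guessable implies guessable, hence weakly guessable, and for each finite $Q$ the same contradiction argument applied to the witness $T^*$ promised by strong guessability rules out any total $Q$-chain (this argument is insensitive to whether $\Lambda_Q$ is empty or not, since strict descent is impossible once the chain reaches $\Lambda_Q$). Conversely, weak guessability supplies some $Q_0$ with $\Lambda_{Q_0}$ non-empty, and strong row-boundedness then gives guessability by the first equivalence. For an arbitrary finite $Q$, the same enumeration-and-stabilisation argument applies: either the chain $\Lambda_{Q,U_n}$ stabilises and, by the identity, stabilises precisely at $\Lambda_Q$, or we extract a strictly descending subsequence and contradict strong row-boundedness. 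The main subtlety I expect is carrying the $\Lambda_Q = \emptyset$ case through cleanly: here one must invoke the identity (rather than row-boundedness, which requires $\Lambda_Q$ non-empty) to conclude that the chain stabilises precisely at $\emptyset$, which is what is needed to produce a finite $T$ with $\Lambda_{Q,T} = \Lambda_Q = \emptyset$.
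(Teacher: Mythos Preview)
Your proof is correct and follows essentially the same route as the paper's: both directions argue that a witnessing pair $(Q,T)$ forces any ascending cover to trap the solution sets at $\Lambda_Q$, and conversely that a canonical ascending cover (words of length at most $i$ in the paper, an arbitrary enumeration in yours) must stabilise, with the stable value equal to $\Lambda_Q$. You are slightly more explicit on two points that the paper leaves implicit, namely the intersection identity $\bigcap_i \Lambda_{Q,T_i} = \Lambda_{Q,\bigcup_i T_i}$ and the extraction of a strictly descending subsequence from a non-stabilising chain to manufacture a total $Q$-chain; this extra care is helpful, especially for the $\Lambda_Q=\emptyset$ case in the strong equivalence, but does not constitute a different argument.
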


\begin{proof} Suppose that $Q, T$ are finite sets of words such that $\Lambda_{Q, T}=\Lambda_{Q}$. Let $(T_i)_{i \geq0}$ be a sequence of subsets of $\Sigma^*$ with $T_i \subseteq T_{i+1}$ for all $i$ and such that $\bigcup_i T_i = \Sigma^*$. By definition, there exists some set $T_j$ such that $T \subseteq T_j$. Then for all $i\geq j$ we have $T \subseteq T_j \subseteq T_i$. Since $\Lambda_{Q, T}=\Lambda_{Q}$ it then follows that $\Lambda_{Q, T_i} = \Lambda_{Q, T_j} = \Lambda_{Q, T}$ for all $i \geq j$. Thus there can be no total $Q$-chains. It now follows that if $f$ is guessable then $f$ is row-bound, whilst if $f$ is strongly guessable, then it must be strongly row-bound (and weakly guessable).
	
	Now, let $Q$ be a finite set of words such that there is no total $Q$-chain. Then taking $T_i$ to be the finite set of words of length at most $i$  for each positive integer $i$, we have that $T_i \subseteq T_{i+1}$ and $\bigcup_i T_i = \Sigma^*$ and so there must exist $n$ such that for all $m\geq n$, $\Lambda_{Q, T_m} = \Lambda_{Q, T_n}$. Since the $T_i$ enumerate all words, we must therefore also have $\Lambda_{Q, T_n} = \Lambda_{Q}$.  Thus if $f$ is row-bound, there is such a $Q$ such that $\Lambda_Q \neq \emptyset$ and we conclude that $f$ is guessable. Additionally, if $f$ is strongly row-bound and weakly guessable then the property holds for all $Q$ and at least one of them is such that $\Lambda_Q \neq \emptyset$, which makes $f$ strongly guessable.
\end{proof} 

\begin{rem}
\label{rem:dual3}
As in Remarks \ref{rem:dual} and \ref{rem:dual2}, one can also work with the transpose of the Hankel matrix, leading to obvious left-right dual notions of co-guessable, strongly co-guessable: see Appendix \ref{section:swap} for full details. With these definitions, one can prove the following entirely analogous result.
\end{rem}

\begin{prop}
	\label{prop:coguessable}
	Let $f:\Sigma^* \to S$. Then $f$ is (strongly) co-guessable if and only if $f$ is (strongly) column-bound and weakly co-guessable.  
\end{prop}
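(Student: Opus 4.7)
The plan is to dualise, line-by-line, the proof of Proposition~\ref{prop:guessable}, making use of the fact (asserted in Remark~\ref{rem:dual3} and detailed in Appendix~\ref{section:swap}) that every definition, notation, and elementary property introduced in Section~\ref{section:guessable} has a formally symmetric left-right dual obtained by replacing rows of the Hankel matrix by columns, left-linear combinations by right-linear combinations, and the left solution sets $\Lambda_{Q,T}$ by their right analogues over a finite set $Q$ of columns and a finite set $T$ of row-indices. Since the statement is formally the mirror image of Proposition~\ref{prop:guessable}, no new combinatorial content is required; the only thing to verify is that each step of the original proof goes through after the substitution.

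First I would prove the two ``only if'' implications. Assume $f$ is co-guessable, witnessed by finite sets $Q,T$ with the right analogue of $\Lambda_{Q,T}=\Lambda_Q\neq\emptyset$. Given any chain $T_0\subseteq T_1\subseteq\cdots$ with $\bigcup_i T_i=\Sigma^*$, some $T_j$ must contain $T$, and the monotonicity property (the co-analogue of Remark~\ref{rem:subsets}) forces the corresponding right solution sets to stabilise from index $j$ onward, so no total co-$Q$-chain exists; in other words $f$ is column-bound. If $f$ is strongly co-guessable then the same argument applied to every finite $Q$ shows that $f$ is strongly column-bound, and of course it is also weakly co-guessable since strong co-guessability is defined to entail the existence of at least one $Q$ with non-empty right solution set.

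For the converse, suppose $f$ is column-bound, with witnessing finite $Q$. Take $T_i$ to be the set of words of length at most $i$; then $T_i\subseteq T_{i+1}$ and $\bigcup_i T_i=\Sigma^*$, so the absence of a total co-$Q$-chain forces the sequence of right solution sets to stabilise at some index $n$, and because the $T_i$ exhaust $\Sigma^*$ the stable value must equal the right analogue of $\Lambda_Q$. This exhibits a pair $(Q,T_n)$ witnessing co-guessability. In the strong case, the same construction applies to every finite $Q$, and the assumption of weak co-guessability ensures that at least one such $Q$ satisfies $\Lambda^{\mathrm{co}}_Q\neq\emptyset$, producing the required $T$ for every $Q$ and hence establishing strong co-guessability.

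I do not expect a serious obstacle here: the main task is bookkeeping, namely to make sure the dual notions imported from the appendix are used consistently (in particular that the dual of Remark~\ref{rem:subsets} gives the correct direction of inclusion for the right solution sets under enlargement of $T$, and that the dual of Lemma~\ref{lem:Qincrement} is what is needed to conclude $\Lambda^{\mathrm{co}}_Q\neq\emptyset$ in the weakly co-guessable case). Once this is checked, the argument is a mechanical transliteration of the proof of Proposition~\ref{prop:guessable}.
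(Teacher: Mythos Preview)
Your proposal is correct and matches the paper's approach: the paper does not give a separate proof of Proposition~\ref{prop:coguessable} but simply states (in Remark~\ref{rem:dual3}) that it follows by the entirely analogous argument to Proposition~\ref{prop:guessable}, which is precisely the line-by-line dualisation you outline. One minor point of bookkeeping: in the paper's appendix the roles of the symbols $Q$ and $T$ are swapped relative to your write-up (there $T$ indexes the finite set of columns and the growing sets $Q_i$ index rows), so when transcribing you should align with the notation $\Gamma_{Q,T}$, $\Gamma_T$, and ``total $T$-chain'' from Appendix~\ref{section:swap}; also note that Lemma~\ref{lem:Qincrement} is not actually invoked in the proof of Proposition~\ref{prop:guessable}, so its dual is not needed here either.
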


\subsection{Algorithmic strategies}

\paragraph*{\textbf{Game strategies.}}

We now explain what the notions of (weakly/strongly) guessable mean in terms of strategies for the learner to discover an automaton correctly computing the target function via a sequence of Angluin-style queries.  

The hierarchy we propose is (in general) orthogonal to any complexity classes - in particular there is no guarantee that a strongly guessable function can be learnt in polynomial time. See Remark~\ref{rem:complexity} at the end of the section.

\begin{rem}
Provided the semiring S is countable, a trivial algorithm would be to enumerate all the automata and make an equivalence query on each of them in turn. One would of course like to improve upon this exponential time strategy. Even though there is no guarantee in terms of time complexity of the Angluin-style process we describe next, note that there are two main differences: (1) the learning process still applies when the semiring is uncountable, (2) it will only explore, in a prescribed manner, a subset of automata i.e. the class of hypothesis automata. 
\end{rem}

Consider a two player game between the teacher and the learner, in which the learner's objective is to find an automaton computing a target function. The rules are as follows. At every round, the learner is allowed to:
\begin{itemize}
\item choose a pair of finite subsets $Q, T \subseteq \Sigma^*$; 
\item ask the teacher to provide an hypothesis automaton constructed from $Q$ and $T$.
\end{itemize}
 The teacher must:
\begin{itemize} 
    \item declare if there are no possible hypothesis automaton for a given pair $Q,T$ (that is, if $\Lambda_{Q,T} = \emptyset$) or provide a hypothesis automaton $\mathcal{H}_{Q, T, \underline{\lambda}}$ for this pair (by selecting $\underline{\lambda} \in \Lambda_{Q, T}$); 
    \item declare whether each provided automaton computes the target function and if not, provide a counter-example.
\end{itemize}\noindent 
The game stops when the teacher declares an automaton that computes the target function.  We view the teacher as an oracle, able to determine an element of each of the sets $\Lambda_{Q, T}$, $\Lambda_{Q}$ and $\Lambda_{Q, T}\setminus\Lambda_{Q}$ (for all finite subsets of words $Q, T$), or else determine that the set is empty. In this way, all computation is deferred to the teacher. The teacher must always answer truthfully, but may play as an ally (meaning whenever $(Q,T)$ is queried with $\Lambda_{Q}\neq \emptyset$, the teacher will choose to reply with a correct automaton $\mathcal{H}_{Q, T, \underline{\lambda}}$ where $\underline{\lambda} \in \Lambda_{Q}$) or as an adversary (whenever $(Q,T)$ is queried with $\Lambda_{Q,T} \neq \Lambda_{Q}$, the teacher will choose to reply with an incorrect automaton $\mathcal{H}_{Q, T, \underline{\lambda}}$ where $\underline{\lambda} \in \Lambda_{Q,T} \setminus \Lambda_{Q}$).

One can record the history of a game up to any given point  as the finite (possibly empty) sequence of pairs $(Q,T)$ queried so far, together with the replies from the teacher (i.e. the hypothesis automata and counter-examples or the fact that $\Lambda_{Q,T} = \emptyset$). A strategy for the learner is a function mapping each history $h$ where the last pair in the sequence does not signify the end of the game to a new pair $(Q_h,T_h) \in \Sigma^* \times \Sigma^*$. Additionally, the strategy is said to be incremental if for all histories $h$, $Q_h \supseteq Q$ and $T_h \supseteq T$ holds for all pairs $(Q,T)$ already queried in $h$.     
A strategy is winning for the learner, if, for all possible plays following that strategy, the teacher eventually declares that a correct automaton is found. 

\begin{prop}
A function is weakly guessable if and only if the learner has a winning strategy in the game where the teacher plays as an ally. 
\end{prop}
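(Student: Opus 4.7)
The plan is to prove the two directions separately: exhibit an explicit winning strategy for the ``$\Rightarrow$'' direction, and analyse the terminating step of a winning play for ``$\Leftarrow$''.

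For the forward direction, I would assume $f$ is weakly guessable and invoke Definition~\ref{definition:weakly} to fix a finite set $Q^* \subseteq \Sigma^*$ whose corresponding rows of the Hankel matrix satisfy $\Lambda_{Q^*} \neq \emptyset$. Since the pairs $(Q,T)$ of finite subsets of $\Sigma^*$ form a countable set, the learner's strategy is simply to enumerate them as $(Q_1,T_1),(Q_2,T_2),\ldots$ (in such a way that $(Q^*, \{\varepsilon\})$ appears at some round) and to query them one by one. When $(Q^*, \{\varepsilon\})$ is reached, the ally is by definition forced to choose some $\underline{\lambda} \in \Lambda_{Q^*}$; by Remark~\ref{rem:rows} the resulting hypothesis automaton coincides with the Hankel automaton $\mathcal{H}_{Q^*,\underline{\lambda}}$, and by Theorem~\ref{theorem:prelim} this computes $f$, so the teacher truthfully declares success and the game terminates.

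For the backward direction, I would pick a winning strategy for the learner and trace a play against the ally. Since the strategy wins, the play must terminate at some finite round with the teacher declaring a correct hypothesis automaton $\mathcal{H}_{Q,T,\underline{\lambda}}$. The claim I want to extract from the terminating step is that $\Lambda_Q \neq \emptyset$, which immediately witnesses weak guessability. The argument goes by contradiction: if $\Lambda_Q = \emptyset$, then the ally's choice of $\underline{\lambda}$ from $\Lambda_{Q,T}$ is not pinned down by the ally rule, so a different (still ally-conforming) teacher could have returned some $\underline{\lambda}' \in \Lambda_{Q,T}$ giving an incorrect automaton, at which point the teacher would supply a counter-example and the game would continue. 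Because the strategy must win against every ally, the terminating step cannot rely on such freedom of choice, forcing $\Lambda_Q \neq \emptyset$.

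The step I expect to be the main obstacle is the corner case of the backward direction in which $\Lambda_Q = \emptyset$ but, by coincidence, \emph{every} $\underline{\lambda} \in \Lambda_{Q,T}$ yields an automaton computing $f$, so the ally has no ``wrong'' answer to fall back on. To dispose of this, I would invoke the equivalence in Theorem~\ref{prop:weakly}: from a hypothesis automaton whose state set is a subset of the rows of the Hankel matrix and which computes $f$, extract a finite row-generating set of rows (possibly after enlarging $Q$ to include its prefixes, in the style of the proof of Theorem~\ref{prop:weakly} (1$\Rightarrow$2)), thereby obtaining weak guessability directly without having to reason about which $\underline{\lambda} \in \Lambda_{Q,T}$ the ally happens to pick.
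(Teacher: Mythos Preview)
Your forward direction matches the paper's: enumerate finite sets (the paper enumerates only the $Q_i$ with fixed $T=\{\varepsilon\}$, you enumerate pairs, but this is inessential) and wait for the ally to hand over a correct Hankel automaton.

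For the backward direction, the paper argues in a single line: a winning play must terminate with some correct hypothesis automaton $\mathcal{H}_{Q,T,\underline{\lambda}}$, and the existence of such an automaton is taken to give weak guessability ``by definition'' (appealing to the informal description of weak guessability given at the end of Section~2). Your route is more elaborate---you try to establish $\Lambda_Q\neq\emptyset$ at the terminating step by playing the ally off against itself---and you rightly notice that this fails in the corner case where $\Lambda_Q=\emptyset$ yet every $\underline{\lambda}\in\Lambda_{Q,T}$ happens to produce a correct automaton.

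The problem is your proposed fix for that corner case. You want to pass from ``there is a hypothesis automaton with states among the Hankel rows computing $f$'' to ``there is a finite row-generating set of rows'', and you cite the construction in the proof of Theorem~\ref{prop:weakly} (1$\Rightarrow$2) as the model. But that construction takes as input an element $\underline{\lambda}\in\Lambda_Q$ (the full solution set), and every step of building the literal automaton $\mathcal{H}_{Q',\underline{\mu}}$ uses the identities of Notation~\ref{notation:lambda} on \emph{all} of $\Sigma^*$, not just on $T$. With only $\underline{\lambda}\in\Lambda_{Q,T}$ available, those identities hold merely on $T$, so the inductive computations of the $\underline{\mu}$-coefficients no longer produce an element of $\Lambda_{Q'}$, and the argument does not go through. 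In short, the implication ``correct hypothesis automaton $\Rightarrow$ finite row-generating set of Hankel rows'' is exactly the content you need, and neither Theorem~\ref{prop:weakly} nor its proof supplies it; the paper does not prove it either, but simply invokes the informal characterisation of weak guessability.
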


\begin{proof}
For a weakly guessable function there exists a non-empty finite subset $Q \subseteq \Sigma^*$ such that $\Lambda_{Q} \neq \emptyset$. Since $\Sigma^*$ is countable, so is the set of non-empty finite subsets of $\Sigma^*$. Let $(Q_i)_{i\in \mathbb{N}}$ be an enumeration of the finite non-empty subsets of $\Sigma^*$ and consider the following strategy for the learner: if $h$ is empty, query $(Q_0, \{\varepsilon\})$; if the last query of $h$ is $(Q_i, \{\varepsilon\})$ and does not halt the game then query $(Q_{i+1}, \{\varepsilon\})$ (no matter the automata and counter-examples given by the teacher). The learner will eventually query $(Q,\{\varepsilon\})$ and since $\Lambda_{Q} \subseteq \Lambda_{Q,\{\varepsilon\}}$ the teacher (playing as an ally) will choose to declare a correct hypothesis automaton. Conversely, if the learner has a winning strategy, then clearly there exist finite sets $Q, T \subseteq \Sigma^*$ and $\underline{\lambda} \in \Lambda_{Q, T}$ such that $\mathcal{H}_{Q, T, \underline{\lambda}}$ computes the target function and so by definition the function is weakly guessable.
\end{proof}

\begin{prop}
\label{prop:guessgame}A function is guessable if and only if the learner has a winning strategy in the game where the teacher plays as an adversary. 
\end{prop}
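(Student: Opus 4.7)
The plan is to show each direction by directly unpacking the game rules and the definition of guessable, so that a witness pair $(Q,T)$ with $\Lambda_{Q,T}=\Lambda_Q\neq\emptyset$ corresponds precisely to the moment when the adversarial teacher has no incorrect reply available.

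For the forward direction, suppose $f$ is guessable, so some finite pair $(Q^\star,T^\star)$ satisfies $\Lambda_{Q^\star,T^\star}=\Lambda_{Q^\star}\neq\emptyset$. Since $\Sigma^\star$ is countable, the set of pairs of finite subsets of $\Sigma^\star$ is countable, and I would fix an enumeration $(Q_i,T_i)_{i\in\mathbb{N}}$ that includes $(Q^\star,T^\star)$. The strategy I would propose for the learner is the analogue of the one used in the weakly guessable case: on the $i$-th round, query $(Q_i,T_i)$, ignoring all information returned by the teacher apart from the halting signal. Eventually the learner queries $(Q^\star,T^\star)$. At that moment, $\Lambda_{Q^\star,T^\star}\setminus\Lambda_{Q^\star}=\emptyset$, so the adversarial teacher has no incorrect hypothesis automaton available and must either reply with an $\underline{\lambda}\in\Lambda_{Q^\star}$ (which by Theorem~\ref{theorem:prelim} yields an $\mathcal{H}_{Q^\star,T^\star,\underline{\lambda}}$ computing $f$, forcing the game to end in the learner's favour) or (if the game has not already halted earlier on a different pair) has already lost at an earlier round. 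Either way the learner wins.

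For the converse, suppose the learner has a winning strategy $\sigma$ against the adversarial teacher. I would simply play out the game according to $\sigma$: by the winning condition this play halts after finitely many rounds, and by the game rules it halts exactly when some queried pair $(Q,T)$ triggers a reply that the teacher acknowledges as correct. The key point is then to read off the consequences from the adversary's behaviour: the teacher's reply consists of some $\mathcal{H}_{Q,T,\underline{\lambda}}$ with $\underline{\lambda}\in\Lambda_{Q,T}$, so in particular $\Lambda_{Q,T}\neq\emptyset$; and since the teacher plays adversarially and was nevertheless forced to declare correctness, it must be the case that no incorrect automaton existed at that pair, i.e.\ $\Lambda_{Q,T}\setminus\Lambda_Q=\emptyset$. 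Combined with $\Lambda_Q\subseteq\Lambda_{Q,T}$ (Remark~\ref{rem:subsets}) this gives $\Lambda_{Q,T}=\Lambda_Q\neq\emptyset$, witnessing that $f$ is guessable.

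I do not expect any serious obstacle here; the only delicate point is being careful about what the adversary is and is not allowed to do. In particular, one has to note that the adversary still replies truthfully when $\Lambda_{Q,T}=\emptyset$ (declaring emptiness, which does not terminate the game) and that a correct reply is forced precisely on pairs with $\Lambda_{Q,T}=\Lambda_Q\neq\emptyset$; these two observations are exactly what matches the game-theoretic formulation to the set-theoretic definition of guessable. No appeal to row-boundedness or incremental strategies is needed for this statement, although the argument can be streamlined by invoking Proposition~\ref{prop:guessable} if one prefers to phrase the strategy directly in terms of enumerating $Q$-chains.
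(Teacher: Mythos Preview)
Your proposal is correct and follows essentially the same approach as the paper's own proof: enumerate all pairs of finite subsets and query them in turn for the forward direction, and for the converse observe that at the halting pair the adversary had no element of $\Lambda_{Q,T}\setminus\Lambda_Q$ to choose, forcing $\Lambda_{Q,T}=\Lambda_Q\neq\emptyset$. The closing remarks about row-boundedness and $Q$-chains are extraneous but harmless.
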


\begin{proof}
For a guessable function there exists finite subsets $Q, T \subseteq \Sigma^*$ such that $\Lambda_{Q,T} = \Lambda_{Q} \neq \emptyset$. Since $\Sigma^*$ is countable, so is the set of pairs of non-empty finite subsets of $\Sigma^*$; let $(Q_i, T_i)_{i\in \mathbb{N}}$ be an enumeration of all such pairs and consider the following strategy for the learner: if the history is empty, query  $(Q_0,T_0)$; if the last query of $h$ is $(Q_i, T_i)$ and does not halt the game then query $(Q_{i+1}, T_{i+1})$ (no matter the automata and counter-examples given by the teacher). The learner will eventually query $(Q,T)$ and since $\Lambda_{Q} = \Lambda_{Q,T}$, the teacher will be forced to declare a  correct hypothesis automaton.

Conversely, if the learner has a winning strategy in the game where the teacher plays as an adversary, then clearly there exist finite sets $Q, T \subseteq \Sigma^*$ and $\underline{\lambda} \in \Lambda_{Q, T}$ such that $\mathcal{H}_{Q, T, \underline{\lambda}}$ computes the target function and $\Lambda_{Q, T} \setminus \Lambda_Q = \emptyset$ (otherwise the teacher can always choose to provide an incorrect automaton). So by definition the function is guessable.
\end{proof}

\begin{prop}
\label{prop:game}If a function is strongly guessable, then the learner has a winning strategy that is incremental in the game where the teacher plays as an adversary.
\end{prop}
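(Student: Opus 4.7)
The plan is to build an incremental winning strategy by monotone enumeration, using strong guessability in two ways: for every finite $Q$ there is a finite $T^*$ with $\Lambda_{Q, T^*} = \Lambda_Q$, and (from weak guessability, which strong guessability implies) there exists some finite $W^* \subseteq \Sigma^*$ satisfying $\Lambda_{W^*} \neq \emptyset$.

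Concretely, I would fix an enumeration $u_0, u_1, u_2, \ldots$ of $\Sigma^*$ and maintain a current pair $(Q, T)$, initialised as $Q = T = \{u_0\}$. At each round, query $(Q, T)$ and update according to the teacher's reply: if the hypothesis is correct, stop; if the teacher declares $\Lambda_{Q, T} = \emptyset$, add the next $u_j \notin Q$ to $Q$ and the next $u_k \notin T$ to $T$; if the teacher returns an incorrect hypothesis with counter-example $z$, add $z$ together with the next $u_k \notin T$ to $T$. Since both $Q$ and $T$ only ever grow, the strategy is incremental by construction.

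For correctness I would argue in two scenarios, according to the current $Q$. If $\Lambda_Q \neq \emptyset$, then by Remark~\ref{rem:subsets} we have $\Lambda_{Q, T} \supseteq \Lambda_Q \neq \emptyset$, so the teacher cannot reply ``empty''; hence the round either ends the game or enlarges $T$ by a fresh enumeration word. Strong guessability supplies a finite $T^*$ with $\Lambda_{Q, T^*} = \Lambda_Q$, and because $T$ absorbs every fresh enumeration word, eventually $T \supseteq T^*$, forcing $\Lambda_{Q, T} = \Lambda_Q$ and compelling the next teacher reply to be a correct hypothesis. If instead $\Lambda_Q = \emptyset$, strong guessability provides a finite $T^*$ with $\Lambda_{Q, T^*} = \emptyset$; once $T \supseteq T^*$ the teacher must declare ``empty'', triggering an extension of $Q$. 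Weak guessability yields a finite $W^*$ with $\Lambda_{W^*} \neq \emptyset$, and since each ``empty'' response adds one more enumeration word to $Q$, after finitely many such responses $Q$ contains $W^*$, whereupon Lemma~\ref{lem:Qincrement} applied with $T = \Sigma^*$ gives $\Lambda_Q \neq \emptyset$, returning us to the first scenario.

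The main obstacle is ruling out that the interplay between enlargements of $Q$ and enlargements of $T$ could spin forever. The key observation to control this is that with $Q$ held fixed, only finitely many rounds are needed before $T$ contains the corresponding witness $T^*(Q)$, so each ``phase'' (a maximal segment during which $Q$ does not change) either terminates in success or ends with an ``empty'' response that enlarges $Q$. Since only finitely many phases can occur before $Q \supseteq W^*$, the overall process halts in finitely many queries, establishing that the strategy is winning.
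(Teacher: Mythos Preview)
Your proposal is correct and follows essentially the same approach as the paper's proof: both grow $T$ until either $\Lambda_{Q,T}$ becomes empty (triggering a growth of $Q$) or collapses to $\Lambda_Q$ (forcing the adversarial teacher to return a correct automaton), with strong guessability bounding each fixed-$Q$ phase and weak guessability bounding the number of phases. The only differences are cosmetic: the paper uses the length-filtration $Q_i=T_i=\{w:|w|\le i\}$ and ignores counter-examples entirely, whereas you use an arbitrary enumeration, also enlarge $T$ on ``empty'' replies, and fold in the counter-example $z$; none of these changes affect the structure of the argument.
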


\begin{proof}
Let $Q_i = T_i$ be the set of all words of length at most $i$, and consider the following strategy for the learner: if the history is empty, query $(Q_0,T_0)$; if $(Q_i, T_j)$ is the last pair queried in $h$ and the teacher declared that $\Lambda_{Q_i,T_j}$ is non-empty, but did not give a correct automaton, then query $(Q_i, T_{j+1})$; if $(Q_i, T_j)$ is the last pair queried in $h$ and $\Lambda_{Q_i,T_j}$ is empty then query $(Q_{i+1}, T_{j})$; otherwise if $(Q_i, T_j)$ is the last pair queried in $h$ and the teacher declared a correct automaton, the game halts.

For any given target function this process is completely deterministic, but possibly infinite: since the teacher plays as an adversary (and so never supplies a correct automaton if there is an incorrect choice), the sequence of pairs queried by the learner for this function will be the same regardless of different choices of automata or counterexample supplied by the teacher in response to each query (only the question of their existence is relevant to proceedings). Our aim is to show that whenever the target function is strongly guessable, this process terminates. 

Since $f$ is weakly guessable and for all non-negative integers $i$, $Q_i \subseteq Q_{i+1}$ then there exists a non-negative integer $I$ such that for all $i< I$, $\Lambda_{Q_i}$ is empty and $\Lambda_{Q_I}$ is non-empty. Hence by strong guessability, for all $i< I$, there exists $T$ such that $\Lambda_{Q_i,T}$ is empty and there exists $T'$ such that $\Lambda_{Q_I,T'} = \Lambda_{Q_I}$ is non-empty. By Remark \ref{rem:subsets} and Lemma \ref{lem:Qincrement}, it follows that for all $i< I$, there exists $\ell$ such that for all $j\geq \ell$, $\Lambda_{Q_i,T_j}$ is empty and $J$ such that for all $j\geq J$, $\Lambda_{Q_I,T_j} = \Lambda_{Q_I}$ is non-empty (and none of the $\Lambda_{Q_I,T}$ is empty for any $T$).
Hence, the process terminates, and the strategy above is winning for the learner.
\end{proof}

\begin{rem}
We leave open the question whether the converse of Proposition~\ref{prop:game} holds. This is related to the question of the existence of functions that are guessable but not strongly guessable. This question remains open (see Section~\ref{subsec:strict} and Conclusion).
\end{rem}

\begin{rem}
\label{rem:complexity}
The strategies mentioned above do not guarantee any polynomial time complexity.
For strongly guessable functions, a polynomial time strategy would require the learner to make more well-informed choices of $Q$ and $T$ (by making use of the counter-examples) so that the size of $Q$ and $T$ remain polynomial with respect to the size of the automaton to learn and counter-examples given.  For guessable and weakly guessable functions, in addition to the constraints on the size of $Q$ and $T$, the learner needs to be able to find a correct $\underline{\lambda}$ in polynomial time; this may be possible over certain semirings.
\end{rem}

\paragraph*{\textbf{Relationship to the algorithmic approach of \cite{HKRS20}.}}

In~\cite{HKRS20}, a general learning algorithm is given (subject to the existence of a `closedness strategy') and three conditions (solvability, ascending chain condition, and progress measure) are requested to ensure termination and correctness of the algorithm \emph{for all functions computed by weighted automata over a fixed semiring}.  We give an overview of this algorithm using our terminology in Figure~\ref{fig:algo}, and discuss below some weaker conditions that are easily seen to be sufficient for the proof of termination and correctness \emph{on a given target function}, and how these relate to our definitions of guessability.

\begin{figure}[ht]
	\hrule
	\noindent
	\flushleft	\textbf{Learning Algorithm}
	\hrule\medskip\noindent
	\textbf{Initialisation:} Set $Q = T =\{\varepsilon\}$.\\
	\textbf{Closure:} While $Q$ is not left-closed on $T$, do: \\
	\begin{tabular}{c | l}
		\hspace{0.5cm} &
		\begin{minipage}{0.9\textwidth}
			Find $q \in Q$, $a \in \Sigma$ such that $\langle qa \rangle_T$ is not a left-linear combination of the rows in $Q$.\\ Set $Q:=Q \cup \{qa\}$.
		\end{minipage}
	\end{tabular}
	
	\textbf{Test:} Find $\lambda$ in $\Lambda_{Q,T}$ and perform an equivalence query for $\mathcal{H}_{Q,T,\lambda}$.\\
    If the oracle gives a counter-example $z$, set $T = T \cup \{\text{suffixes of } z\}$, and go back to the Closure step. Else terminate with success.
	\caption{Outline of learning algorithm from \cite{HKRS20}, subject to the existence of a closedness strategy.}
	\label{fig:algo}
\end{figure}

First, note that this algorithm requires  access to a procedure (referred to as a `closedness strategy' in \cite{HKRS20}) that can (a) determine whether $Q$ is left-closed on $T$ (or in other words, whether $\Lambda_{Q,T} \neq \emptyset$); (b) if $\Lambda_{Q,T} = \emptyset$, can find $q \in Q$ and $a \in \Sigma$ such that the corresponding equation has no solution; and (c)  if $\Lambda_{Q,T} \neq \emptyset$, can find an element $\underline{\lambda}$ of $\Lambda_{Q, T}$. 

Clearly, if $S$ is a semiring where \emph{there exists} an algorithm which can take as input an arbitrary system of linear equations over $S$ and compute a solution, then there is a closedness strategy for all functions over this semiring.  We note that in \cite{HKRS20} such semirings were called \emph{solvable}. It is clear that all finite semirings are solvable (a very naive algorithm would be to try all possibilities) and fields where the basic operations are computable are also solvable (via Gaussian elimination). In \cite{HKRS20} the authors provide a sufficient condition for a principal ideal domain to be solvable (specifically: all ring operations are computable, and each element can be effectively factorised into irreducibles). 

\begin{rem}Turning to other examples,  we note over $\mathbb{Z}_{\rm max}$ there is a straightforward (and well-known)  method to determine whether a system of the form $A \otimes \underline{x} = \underline{b}$ has a solution. Indeed, if $\underline{x}$ is a solution, we must in particular have that $A_{i,j} + \underline{x}_j \leq \underline{b}_i$ for all $i,j$. If $A_{i,j} = -\infty$,  then this inequality holds automatically, otherwise, we require that $\underline{x}_j \leq \underline{b}_i - A_{i,j}$. In order to solve the inequality, for each $j$ such that there exists at least one $i$ with $A_{i,j}\neq -\infty$ we could set $\underline{x}_j = {\rm min}\{\underline{b}_i-A_{i,j}: A_{i,j} \neq -\infty\}$, and set all remaining $\underline{x}_j$ equal to $0$. It is clear that this is a computable (we use finitely many operations of subtraction and comparison on finitely many entries drawn from $\mathbb{Z}_{\rm max}$) solution to the system of inequalities $A \otimes \underline{x} \leq \underline{b}$. Furthermore, it can also be seen that either this solution  is also a solution to the equation $A \otimes \underline{x} = \underline{b}$, or else no solution to the equation $A \otimes \underline{x} = \underline{b}$ exists. (Indeed, if the solution to the inequality that we have constructed does \emph{not} satisfy the equation, then we must have that for some $i$ the following strict inequality holds: ${\rm max} (A_{i,j} +\underline{x}_j) < \underline{b}_i$, meaning that either all $A_{i,j}$ are equal to $-\infty$ and so there is no hope to satisfy the corresponding equation, or else in order to attain the equality we require that one of the $\underline{x}_j$ for $j$ with $A_{i,j} \neq -\infty$ must take a larger value than set above, which by construction is not possible.) Thus $\mathbb{Z}_{\rm max}$ (and likewise $\mathbb{N}_{\rm max}$) is solvable.
\end{rem}

\begin{rem} 
In~\cite{HKRS20}, an example showing that the algorithm in Figure~\ref{fig:algo} does not always terminate is given. This example corresponds to the automaton given in Figure~\ref{fig:automata4}, that (as we shall see in the next section) is not weakly guessable. Since the algorithm can only guess hypothesis automata, it is clear that the algorithm will not terminate on any function that is not weakly guessable. In fact, termination  cannot be ensured if the target function is not guessable. Indeed, if the function is not guessable, then (by definition) for each finite set of words $Q$ with $\Lambda_Q \neq \emptyset$ one has that $\Lambda_Q \neq \Lambda_{Q, T}$ for all finite sets of words $T$. This in turn means that there is no guarantee that the algorithm will terminate, since it could be that in \emph{every test step} the  element $\underline{\lambda}$ provided by the closedness strategy generates an incorrect hypothesis automaton. \end{rem}

For the remainder of this section, let us assume that $f$ is a function for which a closure strategy exists. The following  condition ensures that the closure step terminates in finite time for each fixed $T$ and that it cannot be taken infinitely many times for different $T$'s.

\begin{defi}
	A function $f:\Sigma^* \to S$ with Hankel matrix $F$ satisfies the ascending chain condition on rows if for all sequences of left-semimodules $(X_i)_{i \in \mathbb{N}}$ generated by finite sets of rows of $F$, such that $X_0 \subseteq X_1 \subseteq X_2 \subseteq \cdots$
	there is $n$ such that for all $m \geq n$, $X_m = X_n$.
\end{defi}

Finally, we note that the following condition (which mimics the definition of progress measure given in \cite{HKRS20}, rewritten in the language of the present paper and extracting the key property) ensures that one cannot go through the Test step infinitely many times without changing the set $Q$ in the closure step. 

\begin{defi}
	Let $f$ be a function computed by a weighted automaton over $S$ and let
    $$\mathcal{L}: =\{(Q,T): Q, T \mbox{ are finite sets of words and } Q \mbox{ is left-closed on }T\}.$$
     We say that $f$ has a weak progress measure if there is a totally ordered set $(\mathbb{K},\prec)$ with no infinite descending chain and a function $\mu: \mathcal{L} \rightarrow \mathbb{K}$ with the property that if $T \subseteq T'$ and $\Lambda_{Q, T} \supsetneq \Lambda_{Q, T'}$  we have $\mu(Q, T)\succ  \mu(Q, T')$. 
\end{defi}

We summarise the observations above in the following proposition, which provides a slight generalisation of statement of the main result of \cite{HKRS20}; the proof derives in a similar manner to results presented in~\cite{HKRS20}.

\begin{prop}\cite[Theorem 14]{HKRS20}
Suppose that a closure strategy exists for $f$. If $f$ satisfies the ascending chain condition on rows and has a weak progress measure, then the learning algorithm from Figure~\ref{fig:algo} terminates and returns an automaton computing $f$.
\end{prop}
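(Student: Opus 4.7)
I would begin with correctness, which is essentially immediate: the algorithm exits only when the oracle confirms that some hypothesis automaton $\mathcal{H}_{Q,T,\underline{\lambda}}$ computes $f$. Such an automaton is well-defined at every Test step because the Closure phase terminates only once $Q$ is left-closed on $T$ and, since $\varepsilon \in Q$ throughout the run, Remark~\ref{rem:lambda} gives $\Lambda_{Q,T}\neq\emptyset$, after which the closure strategy supplies a witness $\underline{\lambda}$. The substantive work is termination, which I will split into three claims: (i) every individual Closure phase terminates; (ii) across the whole run, the set $Q$ is enlarged only finitely many times; and (iii) once $Q$ has stopped changing, only finitely many Test phases can subsequently occur.

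Claims (i) and (ii) follow from the ascending chain condition on rows. Whenever a word $qa$ is added to $Q$, this is triggered by $\langle qa \rangle_T$ failing to lie in the left-semimodule generated by $\{\langle q'\rangle_T : q' \in Q\}$; since restriction to $T$ is a left-semimodule homomorphism, $\langle qa \rangle$ must then lie outside the left-semimodule generated by the full rows $\{\langle q'\rangle : q' \in Q\}$ in $S^{\Sigma^*}$. The sequence of left-semimodules generated by the full rows indexed by $Q$ after each successive enlargement is therefore a strictly ascending chain, each term of which is generated by a finite set of rows of the Hankel matrix of $f$, so by the ascending chain condition on rows it must stabilise after finitely many steps. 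This yields (i) when $T$ is fixed throughout a single Closure phase, and (ii) when the same tracking is continued across the entire algorithm.

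For (iii), let $Q^{\star}$ denote the eventual value of $Q$, and consider a failed Test beyond that point, with hypothesis $\mathcal{H}_{Q^{\star},T,\underline{\lambda}}$ and counter-example $z$. Setting $T' = T \cup \{\text{suffixes of }z\}$, I claim $\underline{\lambda} \in \Lambda_{Q^{\star},T} \setminus \Lambda_{Q^{\star},T'}$, so that $\Lambda_{Q^{\star},T} \supsetneq \Lambda_{Q^{\star},T'}$. Writing $z = z_1\cdots z_k$ and supposing for contradiction that $\underline{\lambda} \in \Lambda_{Q^{\star},T'}$, I would apply the first defining equation of $\Lambda_{Q^{\star},T'}$ at $w=z$ to expand $f(z) = \langle \varepsilon\rangle_z$ as $\bigoplus_q \underline{\lambda}_q \otimes \underline{q}_z$, and then iteratively apply the second equation along the shrinking suffixes $z, z_2\cdots z_k, \ldots, z_k, \varepsilon$ of $z$ (all in $T'$) to rewrite each $\underline{q}_z$ in terms of the transition coefficients and final weights of the hypothesis; matching the resulting expression against Definition~\ref{definition:hankel} yields $f(z) = \mathcal{H}_{Q^{\star},T,\underline{\lambda}}(z)$, contradicting $z$ being a counter-example. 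Given the claim, the weak progress measure provides $\mu(Q^{\star},T) \succ \mu(Q^{\star},T')$, and since $(\mathbb{K},\prec)$ admits no infinite descending chain, only finitely many Tests can follow stabilisation of $Q$. Combined with (ii), the algorithm terminates.

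The main obstacle is the counter-example subclaim in (iii), which is the single place where the precise structure of $\Lambda_{Q,T}$ and of the hypothesis automaton from Definition~\ref{definition:hankel} must be exploited together via a suffix induction. Once this piece is in hand, the rest of termination is a direct bookkeeping combination of the ascending chain condition on rows and the weak progress measure, and correctness is automatic.
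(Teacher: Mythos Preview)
Your proof is correct and follows the natural approach; the paper itself does not supply a proof, noting only that ``the proof derives in a similar manner to results presented in~\cite{HKRS20}.'' Your decomposition into (i)--(iii), with the ascending chain condition on rows bounding the total number of $Q$-enlargements and the weak progress measure bounding the number of Test steps once $Q$ has stabilised (via the suffix-induction showing $\underline{\lambda}\notin\Lambda_{Q^\star,T'}$), is exactly the expected argument.
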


\begin{rem}
\label{rem:terminate}
Finally, we note the relationship between our notions of guessability and the termination conditions stated above.
\begin{itemize}
\item The ascending chain condition on rows is weaker than the ascending chain condition considered in \cite{HKRS20} (which considers \emph{all} chains of submodules, including those which may not be generated by finitely many \emph{rows} of the matrix) but stronger than the weak ascending chain condition from Definition~\ref{definition:wacc} (which considers only \emph{certain} chains of submodules generated by finitely many rows of the matrix). Thus a function satisfying the termination conditions above is certainly (by Theorem \ref{prop:weakly})  weakly guessable. Weak guessability alone is not sufficient to ensure termination of algorithm from Figure~\ref{fig:algo} (as we have seen, it is precisely the condition required in order for a correct hypothesis automaton to exist).
\item  The existence of a progress measure (in the sense of \cite{HKRS20}) for a Hankel matrix implies that the corresponding function is strongly row-bound. However, the converse need not be true. A progress measure implies a uniform bound, for all $Q$, on the number of elements in a total $Q$-chain, but for a (strongly) row-bound function, even if no such chain can be infinite, its size might depend on the set of words $Q$. 
\item Each progress measure gives rise to a weak progress measure, and moreover, if a function has a weak progress measure then it is strongly row-bound or equivalently (by Proposition \ref{prop:guessable}) strongly guessable. Thus the proposition above guarantees that the learning algorithm in Figure \ref{fig:algo} terminates for a class of strongly guessable functions. Strong guessability alone is not sufficient to ensure termination of algorithm from Figure~\ref{fig:algo}. However, it follows from the proof of Proposition  \ref{prop:game} that there is an algorithm with a different (incremental) update strategy for the sets $Q$ and $T$ that will terminate with success on any strongly guessable function for which a closure strategy exists.
\end{itemize}
\end{rem}

\section{Hierarchy of guessable functions}
\label{section:hierarchy}

In the previous section, we have defined the classes of strongly guessable, guessable, and weakly guessable functions, and in Appendix \ref{section:swap} the dual notions of strongly co-guessable, co-guessable, and weakly co-guessable functions. 

\subsection{The general hierarchy}

First, we show that for any semiring: if a function is both weakly and co-weakly guessable then it is also both strongly and co-strongly guessable, so that the general hierarchy is as depicted in Figure~\ref{fig:hierarchy2}.

\begin{figure}[ht]
\includegraphics[width=\textwidth]{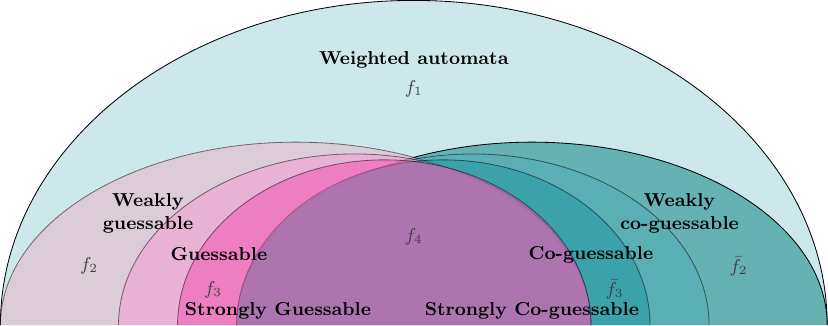}
\caption{A general hierarchy for levels of guessability}
\label{fig:hierarchy2}
\end{figure}

\begin{lem}
\label{lemma:com}
Let $S$ be a semiring and let $f: \Sigma^* \rightarrow S$ be computed by a weighted automaton over $S$. If $f$ is weakly guessable and weakly co-guessable, then $f$ is strongly guessable and strongly co-guessable.
\end{lem}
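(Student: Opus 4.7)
The plan is to exploit the fact that a column-generating set of the Hankel matrix provides, via a single finite set of words, ``test columns'' that suffice to detect every equation defining $\Lambda_Q$, for any finite $Q$. Concretely, since $f$ is weakly co-guessable, Theorem~\ref{prop:coweakly} gives a finite set $P \subseteq \Sigma^*$ such that every column of the Hankel matrix $F$ of $f$ is a right-linear combination of the columns indexed by $P$; that is, for every $t \in \Sigma^*$ there exist coefficients $\mu_{t,p} \in S$ (all but finitely many equal to $0_S$) such that $F_{r,t} = \bigoplus_{p \in P} F_{r,p} \otimes \mu_{t,p}$ for every row index $r$. I claim that \emph{for every finite set $Q \subseteq \Sigma^*$}, one has $\Lambda_{Q, P} = \Lambda_Q$, which together with weak guessability (providing some finite $Q_0$ with $\Lambda_{Q_0} \neq \emptyset$) will yield strong guessability of $f$. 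The dual argument, using a finite row-generating set supplied by weak guessability, will then give strong co-guessability.

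The inclusion $\Lambda_{Q,P} \supseteq \Lambda_Q$ is immediate from Remark~\ref{rem:subsets}. For the reverse inclusion, I would take $\underline{\lambda} \in \Lambda_{Q,P}$ and verify that each of the defining equations of $\Lambda_Q$ holds at an arbitrary word $t \in \Sigma^*$. Applying the column-generating identity with $r = \varepsilon$ rewrites $f(t)$ as $\bigoplus_{p \in P} f(p) \otimes \mu_{t,p}$; substituting the relation $f(p) = \bigoplus_{q \in Q} \underline{\lambda}_q \otimes f(qp)$ (valid because $\underline{\lambda} \in \Lambda_{Q,P}$) and reorganising using left/right distributivity and associativity of $\otimes$ gives $\bigoplus_q \underline{\lambda}_q \otimes \bigl(\bigoplus_p f(qp)\otimes \mu_{t,p}\bigr) = \bigoplus_q \underline{\lambda}_q \otimes f(qt)$, where the inner sum is collapsed via the column-generating identity at row $r = q$. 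The transition equations are checked identically, using the column-generating identity at rows $r = qa$ and $r = p$ respectively, so that $f(qat) = \bigoplus_{p \in P} f(qap)\otimes \mu_{t,p} = \bigoplus_p \bigl(\bigoplus_{p' \in Q} \underline{\lambda}_{q,a,p'}\otimes f(p'p)\bigr)\otimes \mu_{t,p} = \bigoplus_{p'} \underline{\lambda}_{q,a,p'}\otimes f(p't)$.

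Once the claim $\Lambda_{Q,P} = \Lambda_Q$ is established for all finite $Q$, strong guessability follows directly from its definition (with weak guessability supplying a finite $Q_0$ for which $\Lambda_{Q_0} \neq \emptyset$, so that $(Q_0, P)$ witnesses guessability). Strong co-guessability is symmetric: weak guessability supplies a finite row-generating set $R$, and the mirror calculation shows that taking this $R$ as the ``row side'' works against any finite ``column side'' chosen by the dual definition. The only subtlety, and where care is needed, is keeping track of the order of operands in the algebraic manipulation since $S$ need not be commutative; however, the argument only uses associativity and the two-sided distributivity of $\otimes$ over $\oplus$, both of which hold in any semiring, so no real obstacle arises.
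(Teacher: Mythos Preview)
Your proposal is correct and follows essentially the same approach as the paper: fix a finite column-generating set $P$ (the paper calls it $T$) supplied by weak co-guessability, and show that $\Lambda_{Q,P}=\Lambda_Q$ for every finite $Q$ by expanding each column $t\in\Sigma^*$ as a right-linear combination of the columns in $P$ and pushing the $\underline{\lambda}$-coefficients through via distributivity and associativity. The paper's computation is organised around $F_{qa,w}$ rather than $f(t)$ first, but the algebraic manipulation and the overall logic are identical, including the appeal to the dual argument for strong co-guessability.
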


\begin{proof}
	Suppose that $f$ is weakly guessable and weakly co-guessable and let $F$ be the Hankel matrix of $f$. We first show that $f$ is strongly guessable. Since $f$ is weakly co-guessable, we may fix a  finite set of words $T$ such that the columns indexed by $T$ are column-generating. We will prove that for any finite set of words $Q$, we have $\Lambda_{Q} = \Lambda_{Q,T}$.
	
	Thus let $Q$ be arbitrary. Since $\Lambda_{Q} \subseteq \Lambda_{Q,T}$ is always true, if $\Lambda_{Q,T} = \emptyset$ then the equality holds. Otherwise, let $\underline{\lambda} \in \Lambda_{Q, T}$. Then for all $q, p \in Q$, all $a \in \Sigma$ and all $t \in T$ we have:
	$F_{qa,t} = \bigoplus_{p \in Q} \underline{\lambda}_{q,a,p} \otimes F_{p,t}$. By our assumption on $T$ (recalling that we consider \emph{right} linear combinations in the definition of weakly co-guessable) we know that for all words $w$ and all $t \in T$ there exist $\mu_{w,t} \in S$ such that $F_{u, w} = \bigoplus_{t \in T} F_{u, t} \otimes \mu_{w,t}$ holds for all words $u$. In particular, for all $w \in \Sigma^*$, we have:
	\begin{eqnarray*}
		F_{qa, w} &=& \bigoplus_{t \in T} F_{qa, t} \otimes \mu_{w,t} = \bigoplus_{t \in T} (\bigoplus_{p \in Q} \underline{\lambda}_{q,a,p} \otimes F_{p, t}) \otimes \mu_{w,t} \\
		&=& \bigoplus_{p \in Q} \underline{\lambda}_{q,a,p} \otimes (\bigoplus_{t \in T}  F_{p, t}\otimes \mu_{w,t})\\
		&=& \bigoplus_{p \in Q}\underline{\lambda}_{q,a,p} \otimes F_{p, w}.
	\end{eqnarray*}
	In other words, $\langle qa \rangle = \bigoplus_{p \in Q}\underline{\lambda}_{q,a,p} \otimes \langle p \rangle$. A similar argument shows that $\langle \varepsilon \rangle = \bigoplus_{q \in Q}\underline{\lambda}_{q} \otimes \langle q \rangle$, hence giving $\underline{\lambda} \in \Lambda_Q$, as required. This demonstrates that $f$ is strongly guessable. A dual argument (interchanging the rows/columns, and left/right combinations) demonstrates that $f$ is strongly co-guessable.
\end{proof}

\begin{rem}
\label{rk:universalsg}
The proof of Lemma~\ref{lemma:com} actually shows that any function that is both weakly guessable and weakly co-guessable has the following stronger property. Say that a guessable function $f: \Sigma^* \rightarrow S$ is uniformly strongly guessable if there exists a finite set $T \subseteq \Sigma^*$ such that for all finite sets $Q \subseteq \Sigma^*$ one has $\Lambda_{Q,T} = \Lambda_{Q}$. Note the inversion of quantifiers as compared to the definition of strongly guessable; here the set $T$ has to be the same for all the sets $Q$. Dually one can define uniformly strongly co-guessable. The previous proof shows that the intersection of the class of weakly guessable and weakly co-guessable functions coincides with the intersection of the class of uniformly strongly guessable and uniformly strongly co-guessable functions.  We leave open the question of whether this intersection  is precisely characterised by uniform strong guessability. However, we note that it can be shown that the class of uniformly strongly guessable functions is strictly contained in the class of strongly guessable functions (as an example, the function $f_3$ discussed below is not uniformly strongly guessable).
\end{rem}

\subsection{Examples for the strict hierarchy.}
\label{subsec:strict}

We now give examples to demonstrate that certain inclusions in the hierarchy can be strict. The examples are drawn from various well-studied semirings discussed in Section~\ref{section:prelim}, but one could obtain a unified semiring separating these classes by considering their product with component-wise operations.  The full proofs of the membership of the following functions to certain parts of the hierarchy are given in Appendix~\ref{appendix:examples}. We give here a quick overview making frequent use of the following lemma:

\begin{lem}
\label{lem:com2}
If $S$ is a commutative semiring then a function is weakly guessable (respectively, guessable, strongly guessable) if and only if its mirror is weakly co-guessable (respectively, co-guessable, strongly co-guessable).
\end{lem}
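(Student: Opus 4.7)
The plan is to exploit the direct relationship between the Hankel matrix $F$ of $f$ and the Hankel matrix $\bar F$ of $\bar f$. Writing out the definitions,
\[
\bar F_{v,w} = \bar f(vw) = f(\overline{vw}) = f(\bar w\bar v) = F_{\bar w,\bar v},
\]
so $\bar F$ is obtained from $F$ by transposition together with relabelling of row and column indices via the reversal involution $w\mapsto\bar w$. Let $\rho:S^{\Sigma^*}\to S^{\Sigma^*}$ be the bijection $(\rho(\underline{x}))_v=\underline{x}_{\bar v}$. Then for every word $w$, the row $\langle v\rangle^{\bar f}$ of $\bar F$ indexed by $v$ equals $\rho$ applied to the column of $F$ indexed by $\bar v$; dually, the column of $\bar F$ indexed by $w$ equals $\rho$ applied to the row $\langle \bar w\rangle$ of $F$.

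Next I would check that $\rho$ interchanges the right action on rows of $F$ with the left action on columns of $\bar F$ that appears in the dual definitions of Appendix~\ref{section:swap} (so that row-shifts $\langle u\rangle\cdot a=\langle ua\rangle$ on the $f$-side correspond to column-shifts on the $\bar f$-side). Commutativity of $S$ then enters crucially at exactly one point: an expression $\bigoplus_i \lambda_i\otimes \underline{x}_i$ as a left-linear combination is literally the same as the right-linear combination $\bigoplus_i \underline{x}_i\otimes \lambda_i$. Hence, transporting through $\rho$, a finite set of rows of $F$ indexed by $W$ that is left-closed and left row-generating for $f$ corresponds exactly to a finite set of columns of $\bar F$ indexed by $\overline{W}=\{\bar w:w\in W\}$ that is right-closed and right column-generating for $\bar f$, and vice versa. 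Together with Theorem~\ref{prop:weakly} and Theorem~\ref{prop:coweakly}, this yields the equivalence for the weakly guessable level.

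For the guessable and strongly guessable levels, the same translation is applied to the linear systems defining the solution sets. Concretely, the system in Notation~\ref{notation:lambda} defining $\Lambda_{Q,T}$ for $f$ consists of left-linear equations in the entries $\langle q\rangle_t$ and $(\langle q\rangle\cdot a)_t$; applying $\rho$ (and using commutativity to swap coefficients from the left of a row to the right of the corresponding column) these become exactly the right-linear equations defining the co-solution set for $\bar f$ on the reversed sets $\bar Q$ and $\bar T$. Consequently the equalities $\Lambda_{Q,T}=\Lambda_Q\neq\emptyset$ for $f$ are in bijective correspondence with the defining equalities of co-guessability for $\bar f$ on $(\bar Q,\bar T)$. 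Since the map $W\mapsto \overline{W}$ is a bijection on finite subsets of $\Sigma^*$, the universally and existentially quantified statements in the definitions of guessable and strongly guessable transfer intact.

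The main obstacle is purely notational: matching the conventions of the dual definitions in Appendix~\ref{section:swap} so that the reversal-and-transpose bijection $\rho$ really does carry ``left-closed/left-generating/left-solution set for $f$'' to ``right-closed/right-generating/right-solution set for $\bar f$'' without any silent twist. Once the bookkeeping is in place, commutativity is invoked only to identify left-linear combinations with right-linear combinations, and all three equivalences follow uniformly from the single structural identity $\bar F_{v,w}=F_{\bar w,\bar v}$.
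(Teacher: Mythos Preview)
Your proposal is correct and follows essentially the same approach as the paper: both hinge on the identity $\bar F_{v,w}=F_{\bar w,\bar v}$ and use commutativity of $S$ to identify the left-linear system defining $\Lambda_{Q,T}$ for $f$ with the right-linear system defining the co-solution set $\Gamma_{\overline{T},\overline{Q}}$ for $\bar f$, from which all three equivalences follow. The paper's write-up is slightly terser (it treats all three levels uniformly via the solution sets, without separately invoking Theorems~\ref{prop:weakly} and~\ref{prop:coweakly} for the weak case or naming the reindexing map $\rho$), but the substance is the same.
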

\begin{proof}
Let $f: \Sigma^* \rightarrow S$, and consider its mirror $\overline{f}: \Sigma^* \rightarrow S$. Let us write $F$ to denote the Hankel matrix of $f$ and $\overline{F}$ to denote the Hankel matrix of $\overline{f}$, and for a set of words $W \subseteq \Sigma^*$ let us also write $\overline{W} = \{\overline{w}: w \in W\}$. By definition, for all words $u, v \in \Sigma^*$ we have $$\overline{F}_{u,v} = \overline{f}(uv) = f(\overline{uv}) = f(\overline{v} \; \overline{u}) = F_{\overline{v}, \overline{u}}.$$
From this one sees that for any pair of finite subsets $Q, T \subseteq \Sigma^*$, the equations given in Notation \ref{notation:lambda} for the function $f$ (encoding certain left-linear combinations of the rows of $F$ indexed by $Q$ restricted to columns in $T$)
are, due to commutativity of multiplication in $S$, equivalent to the equations  encoding (right) linear combinations of the columns of $\overline{F}$ indexed by $\overline{Q}$ restricted to rows in $\overline{T}$, as obtained from \ref{nota:gamma}. Since weakly guessable, guessable and strongly guessable and their duals are described in terms of properties of these solution sets (considered for all finite sets $Q, T$), the result follows. 
\end{proof}

For some semirings $S$, there are functions $f: \Sigma^* \rightarrow S$ that are neither weakly guessable, nor weakly co-guessable. We show that this is the case for $f_1$ computed by the automaton given in Figure~\ref{fig:automata1}, whether viewed over $\mathbb{N}_{\max}$, $\mathbb{Z}_{\max}$ or $\mathbb{R}_{\max}$, $f'_1$ computed by the automaton given in Figure~\ref{fig:automata4}, whether viewed over $\mathbb{N}$ or $\mathbb{R}_{\geq 0}$, and $f''_1$ computed by the automaton in Figure~\ref{fig:automata4bis}. There are also functions that are weakly guessable but neither guessable nor weakly co-guessable. We show that this is the case for $f_2$ computed by the automaton given in Figure~\ref{fig:automata3}, whether viewed as an automaton over $\mathbb{Z}_{\max}$ or $\mathbb{R}_{\max}$. It then follows that the mirror function $\bar{f_2}$ is weakly co-guessable but neither co-guessable nor weakly guessable (by Lemma \ref{lem:com2}). Finally, we show that the functions $f_3$ computed by the automaton depicted in Figure~\ref{fig:automata3} (over $\mathbb{N}_{\max}$), $f'_3$ computed by the automaton from Figure~\ref{fig:automata5} and $f''_3$ computed by the automaton from Figure~\ref{fig:automata5bis} are all strongly guessable but not weakly co-guessable. The functions $\bar{f_3}$ and $\bar{f'_3}$ are strongly co-guessable but not weakly guessable. 
It is in general easy to find examples at the intersection of all the classes: for commutative semirings, any function computed by an automaton that is both literal and co-literal lies in this intersection. A trivial example of such function would be the constant function $f_4$ mapping all words to $1_S$.

\subsection{Collapse in some familiar semirings}
In many well-studied semirings, we note that the hierarchy collapses further, due to the fact that certain ascending and descending chain conditions are satisfied. It is clear from Theorem \ref{prop:weakly}, Theorem \ref{prop:coweakly}, Proposition \ref{prop:guessable}, Proposition \ref{prop:coguessable} and Lemma \ref{lem:com2} that the following two properties characterise (A) when all functions computed by a finite state weighted automaton over $S$ are strongly guessable and strongly co-guessable (i.e. everything collapses, as in Figure \ref{fig:hierarchy3}) and (B) when all weakly guessable functions are strongly guessable and all weakly co-guessable functions are strongly co-guessable (as in Figure \ref{fig:hierarchy4}):
\begin{itemize}
	\item Property A: all functions computed by finite state weighted automata over $S$ satisfy the weak ascending and weak co-ascending chain condition;
	\item Property B: all functions computed by finite state weighted automata over $S$ which satisfy the weak ascending chain condition are strongly row-bound and all functions computed by finite state weighted automata over $S$ which satisfy the weak co-ascending chain condition are strongly column-bound.
\end{itemize}

\begin{figure}[ht]
     \centering
     \begin{subfigure}[b]{0.4\linewidth}
         \centering
         \includegraphics[width=\textwidth]{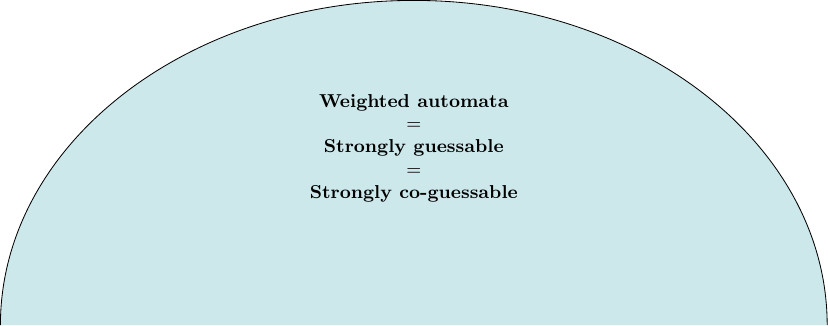}
         \caption{Property A}
         \label{fig:hierarchy3}
     \end{subfigure}
     \hspace{1cm}
     \begin{subfigure}[b]{0.4\linewidth}
         \centering
         \includegraphics[width=\textwidth]{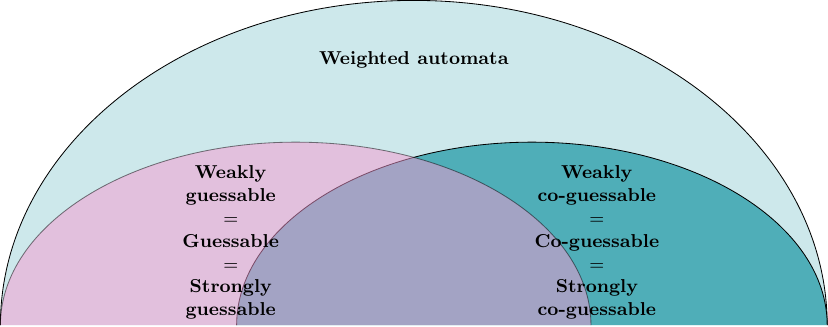}
         \caption{Property B}
         \label{fig:hierarchy4}
     \end{subfigure}
\caption{Hierarchies for semirings satisfying properties A and B.}
\label{fig:hierarchies}
\end{figure}
\noindent 
Note that Property A holds in the semirings $\mathbb{Z}$, $\mathbb{R}$ and $\mathbb{B}$ (as special cases of principal ideal domains, fields and finite semirings respectively) for example. Further, the semirings $\mathbb{N}_{\max}$,  $\mathbb{N}$ and $\mathcal{P}_{\rm fin}(\Sigma^*)$ satisfy Property B:	 

	\begin{lem}
		\label{lemma:PM}
		Every weakly guessable (respectively, weakly co-guessable) function over $\mathbb{N}_{\max}$, $\mathbb{N}$, or $\mathcal{P}_{\rm fin}(\Sigma^*)$ is strongly row-bound (respectively, column-bound).
	\end{lem}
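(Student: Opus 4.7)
The approach is, for a fixed finite $Q \subseteq \Sigma^*$ and any increasing chain $T_0 \subseteq T_1 \subseteq \cdots$ of finite subsets of $\Sigma^*$ with $\bigcup_i T_i = \Sigma^*$, to show that the descending chain $\Lambda_{Q, T_0} \supseteq \Lambda_{Q, T_1} \supseteq \cdots$ must eventually become constant; this rules out any total $Q$-chain and so establishes strong row-boundedness. The weakly co-guessable/strongly column-bound half of the statement then follows by the completely dual argument on the transpose of the Hankel matrix (see Appendix~\ref{section:swap}).

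The vectors $\underline{\lambda} \in \Lambda_{Q, T_i}$ are indexed by the fixed finite set $Q \cup (Q \times \Sigma \times Q)$. Letting $Z \subseteq Q$ denote the set of ``zero rows'', that is, those $q \in Q$ with $\langle q \rangle_t = 0_S$ for every $t \in \Sigma^*$, I would split the coordinates into \emph{always-free} ones, namely $\{\lambda_q : q \in Z\} \cup \{\lambda_{q, a, p} : q \in Q, a \in \Sigma, p \in Z\}$, and \emph{eventually-constrained} ones (all the rest). The key observation is that an always-free coordinate appears in the equations of Notation~\ref{notation:lambda} only through terms of the form $\lambda \otimes 0_S = 0_S$ for every $t \in \Sigma^*$, so these terms never contribute to either side of any equation. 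Consequently $\Lambda_{Q, T_i}$ factorises as a Cartesian product $\Lambda^{\ast}_{Q, T_i} \times S^{\text{always-free}}$ whose second factor is constant in $i$; it therefore suffices to show that the constrained part $\Lambda^{\ast}_{Q, T_i}$ stabilizes.

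For each eventually-constrained coordinate, by definition there is a witnessing word $t \in \Sigma^*$ making the coefficient of the coordinate in some equation non-zero. Picking one such $t$ per coordinate and letting $N$ be large enough that all finitely many witnesses lie in $T_N$, I obtain for every $i \geq N$ a concrete upper bound on each eventually-constrained coordinate from the $\preceq$ direction of the corresponding equation: $\underline{\lambda}_{q,a,p} \leq \langle qa \rangle_t$ in $\mathbb{N}$ when $\langle p\rangle_t \geq 1$; $\underline{\lambda}_{q,a,p} \leq \langle qa \rangle_t - \langle p \rangle_t$ in $\mathbb{N}_{\max}$ when $\langle p \rangle_t \neq -\infty$; and $\underline{\lambda}_{q,a,p}$ forced to be a subset of the (finite) set of words $u$ with $u \cdot \langle p \rangle_t \subseteq \langle qa\rangle_t$ in $\mathcal{P}_{\rm fin}(\Sigma^*)$ when $\langle p \rangle_t \neq \emptyset$ (with analogous bounds coming from the $\varepsilon$-equation for coordinates $\lambda_q$). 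As $i$ grows these bounds can only tighten, giving descending chains in $(\mathbb{N}, \leq)$, $(\mathbb{N}_{\max}, \leq)$, or a finite lattice of subsets of a fixed set of bounded-length words, each of which is well-founded (in the last case simply because the lattice is finite). Since there are only finitely many eventually-constrained coordinates, the joint vector of bounds stabilizes at some stage $N_1 \geq N$.

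From $N_1$ onwards, $\Lambda^{\ast}_{Q, T_i}$ is contained in a fixed \emph{finite} product of finite sets (since $\{0, \ldots, M\}$, $\{-\infty, 0, \ldots, M\}$, and the powerset of a finite set are all finite), and any descending chain of subsets of a finite set must stabilize. Hence $\Lambda_{Q, T_i}$ itself is eventually constant, as required. The main subtlety, and where I expect the bookkeeping to be delicate, lies in the factorisation step: one must check not only that the $\bigoplus$-side of each equation ignores always-free coordinates (clear, since their contribution is $0_S$) but also that the ``covering'' aspect of the equations in the idempotent semirings $\mathbb{N}_{\max}$ and $\mathcal{P}_{\rm fin}(\Sigma^*)$ does not introduce any hidden coupling between the two classes of coordinates; this is immediate once one notes that an always-free coordinate's contribution to the $\bigoplus$-sum is identically $0_S$ and so plays no role in whether the sum equals the left-hand side.
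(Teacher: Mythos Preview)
Your proof is correct and follows essentially the same approach as the paper's: both separate out the coordinates that multiply identically-zero rows (which are therefore unconstrained) from the rest, and observe that in each of the three semirings the remaining coordinates are bounded and hence take values in a fixed finite set, so the descending chain of solution sets must stabilize. The paper organizes this via a $T$-dependent zero set $Z_T$ that shrinks along the chain, whereas you use the global zero set $Z$ from the outset (making your intermediate step about the bounds themselves stabilizing at some $N_1$ unnecessary---stage $N$ already suffices), but the underlying argument is the same.
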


	\begin{proof}
	We give the proof for rows; the proof for columns is dual. Let $S$ be one of the semirings $\mathbb{N}_{\max}$, $\mathbb{N}$, or $\mathcal{P}_{\rm fin}(\Sigma^*)$.
	Let $f$ be a weakly guessable function computed over $S$, and $F$ its Hankel matrix. Let $Q \subseteq \Sigma^*$ be a finite set. We aim to show that there is no total $Q$-chain.
	First, note that, for some fixed $T$, if no row indexed by $Q$ and restricted to $T$ is fully $0_S$, then in each of three semirings considered, $\Lambda_{Q,T}$ is finite (possibly empty) from which it immediately follows that there is no total $Q$-chain involving $\Lambda_{Q,T}$. Let $Z_{T} = \{q \in Q \mid \langle q \rangle_T \mbox{ is a }0_S \mbox { row}\}$, $Q' = Q \setminus Z_{T}$ and $N_{T} \subseteq S^{Q' \cup (Q \times \Sigma \times Q')}$ be the set obtained from $\Lambda_{Q,T} \subseteq S^{Q \cup (Q \times \Sigma \times Q)}$ by restricting the entries $\underline{\lambda}_p$ and $\underline{\lambda}_{q,a,p}$ to only permit $p \in Q'$. Clearly $Z_{T}$ is finite and its size is bounded above by $|Q|$. Notice that $N_{T}$ is also a finite set, since each  $\underline{\lambda}_p, \underline{\lambda}_{q,a,p}$ where $p \in Q'$ can only take a finite number of different values. 
	Moreover, it is clear from the definitions that $T \subseteq T'$ forces $Z_{T}\supseteq Z_{T'}$ (i.e. any row of the Hankel matrix that is zero on all columns indexed by $T'$ is also zero on all columns indexed by $T$), $\Lambda_{Q, T} \supseteq \Lambda_{Q, T'}$ (by Remark \ref{rem:subsets}) and $N_{T} \supseteq N_{T'}$ (since these sets are obtained from the previous by restricting attention to coefficients of rows indexed by $Q$ that are not identically zero on $T$ and $T'$ respectively).  Moreover, notice that if $Z_{T}=Z_{T'}$ (the same subset of elements of $Q$ are zero when restricted to $T$ or $T'$) and $N_T=N_{T'}$ then we must have $\Lambda_{Q, T}=\Lambda_{Q, T'}$. Now, suppose that $T \subsetneq T'$ with  $\Lambda_{Q, T} \supsetneq \Lambda_{Q, T'}$. It is clear from the above that $Z_{Q, T}\supseteq Z_{Q, T'}$, and either this containment is strict, or if not, then $N_{Q', T} \supsetneq N_{Q', T'}$. Since both sets have a finite size, there cannot be an infinite descending chain as in the definition.
\end{proof}

 We remark that in semirings where the concepts of weakly guessable and strongly guessable (resp. weakly co-guessable are strongly co-guessable) coincide (as is the case for fields and finite semirings, for example), this is inherited by subsemirings:
	
		\begin{lem}
			\label{prop:subsemiring}
			Let $S'$ be a subsemiring of $S$. If every weakly guessable function over $S$ is strongly guessable, then every weakly guessable function over $S'$ is also strongly guessable.
   In particular, if $S$ is a subsemiring of a field (e.g. an integral domain), then every weakly guessable function over $S$ is strongly guessable.
		\end{lem}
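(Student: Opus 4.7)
The plan is to show that weak guessability and strong guessability behave well under passing to a subsemiring because the defining equations for $\Lambda_Q$ and $\Lambda_{Q,T}$ only involve the operations $\oplus$ and $\otimes$ restricted to the relevant entries, and by hypothesis these entries and coefficients already lie in $S'$. The proof will essentially be a ``lift and push back'' argument via $S$.

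First, I would fix a weakly guessable function $f: \Sigma^* \to S'$ and view it also as a function $f: \Sigma^* \to S$, writing $\Lambda^{S'}_{Q,T}$ and $\Lambda^{S}_{Q,T}$ (and similarly $\Lambda^{S'}_Q$, $\Lambda^{S}_Q$) to distinguish the two semirings in which we solve the systems of equations from Notation \ref{notation:lambda}. The crucial, elementary observation is that since the Hankel matrix of $f$ has entries in $S' \subseteq S$, and since the operations of $S'$ agree with those of $S$, for a vector $\underline{\lambda}$ whose entries lie in $S'$ the equations defining membership in $\Lambda_Q$ (respectively, $\Lambda_{Q,T}$) hold in $S'$ if and only if they hold in $S$. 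In particular one always has $\Lambda^{S'}_Q = \Lambda^{S}_Q \cap (S')^{Q \cup (Q \times \Sigma \times Q)}$ and $\Lambda^{S'}_{Q,T} = \Lambda^{S}_{Q,T} \cap (S')^{Q \cup (Q \times \Sigma \times Q)}$.

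Using this, weak guessability over $S'$ transfers up to weak guessability over $S$: any finite set $Q$ of rows of the Hankel matrix with $\Lambda^{S'}_Q \neq \emptyset$ also has $\Lambda^{S}_Q \neq \emptyset$. By the hypothesis, $f$ is then strongly guessable over $S$, so for any finite $Q \subseteq \Sigma^*$ there exists a finite $T \subseteq \Sigma^*$ with $\Lambda^{S}_{Q,T} = \Lambda^{S}_Q$. Intersecting both sides with $(S')^{Q \cup (Q \times \Sigma \times Q)}$ and applying the observation above yields $\Lambda^{S'}_{Q,T} = \Lambda^{S'}_Q$, which is precisely strong guessability of $f$ over $S'$.

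For the ``in particular'' clause, if $S$ is a subsemiring of a field $K$, then Remark \ref{rem:fields2} (fields satisfy Property A) gives that every function computed by a finite state weighted automaton over $K$ is strongly guessable; in particular every weakly guessable function over $K$ is strongly guessable, so the hypothesis of the lemma applies with the pair $(S, K)$ in place of $(S', S)$. No step here presents a real obstacle; the only subtle point to get right is the careful bookkeeping of the semiring in which equations are being interpreted, which is why I would introduce the explicit $\Lambda^{S}$ versus $\Lambda^{S'}$ notation rather than suppressing the dependence on the ambient semiring as is done in the main text.
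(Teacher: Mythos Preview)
Your proposal is correct and takes essentially the same approach as the paper: both arguments hinge on the observation that, since the Hankel entries lie in $S'$ and the operations agree, a vector with entries in $S'$ lies in $\Lambda^{S'}_{Q,T}$ (respectively $\Lambda^{S'}_Q$) if and only if it lies in $\Lambda^{S}_{Q,T}$ (respectively $\Lambda^{S}_Q$). The paper phrases this as a proof by contradiction (an element of $\Lambda^{S'}_{Q,T}\setminus\Lambda^{S'}_Q$ would also lie in $\Lambda^{S}_{Q,T}\setminus\Lambda^{S}_Q$), whereas you give the direct version via the clean identity $\Lambda^{S'}_{Q,T} = \Lambda^{S}_{Q,T}\cap (S')^{Q \cup (Q \times \Sigma \times Q)}$ and intersect; these are logically the same move, and your explicit notation for the ambient semiring is a nice touch.
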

	
		\begin{proof}
	Suppose that $f: \Sigma^* \rightarrow S'$ is weakly guessable. Thus there is a literal automaton $\mathcal{A}$ with weights in $S'$ that computes $f$. Viewed as a function over $S$, we also have that $f$ is weakly guessable (since $\mathcal{A}$ is a literal automaton with weights in $S$ computing $f$) and hence by assumption, strongly guessable over $S$. Suppose for contradiction that $f$ is not strongly guessable over $S'$. Then there exists a finite set $Q$ such that for all finite sets $T$ there exists $\underline{\lambda}_T \in \Lambda_{Q, T} \setminus \Lambda_Q$, where these solution sets are considered over $S'$. However, noting that the coefficients $\underline{\lambda}_T$ lie in $S$, this directly contradicts that $f$ is strongly guessable over $S$. The final statement follows from Remark~\ref{rem:fields2} together with the fact that any integral domain embeds into a field (its field of fractions).
\end{proof}	

\begin{rem}
Suppose $S'$ is a subsemiring of $S$ and $f: \Sigma^* \rightarrow S'$. Then we may also regard $f$ as a function with co-domain $S$, and as observed in the previous proof, if $f: \Sigma^* \rightarrow S'$ is weakly guessable then $f$ is also weakly guessable when regarded as a function with co-domain $S$. However, if $f$ is guessable (respectively, strongly guessable) when considered as function with co-domain $S'$, it need not be guessable (respectively, strongly guessable) when considered as a function with co-domain $S$. For example, the function computed by the automaton depicted in Figure \ref{fig:automata1} is strongly guessable when the weights (and hence the co-domain of the function) are considered to lie in the semiring $S'=\mathbb{N}_{\rm max}$, but it is \emph{not} guessable when the weights are considered to lie in $S=\mathbb{Z}_{\rm max}$ (see the functions denoted $f_2$ and $f_3$ in Appendix \ref{app:examplef3} for details). 
\end{rem}

We now apply the results and discussion of this section to determine whether Property A or B holds in the semirings used in our examples. The table in Figure \ref{fig:table} demonstrates that there are interesting and important semirings where Property B holds (i.e. all weakly (co-)guessable functions are strongly (co-)guessable) whilst Property A does not hold (i.e. there exist functions that are not weakly guessable). Since over such semirings the property of being \emph{strongly} guessable is  characterised in terms of automata (by Theorem~\ref{prop:weakly}), it is possible for the `teacher' to select a strongly guessable target function to begin with, and in light of the discussion in Remark \ref{rem:terminate}, there is then an algorithm (similar to that of \cite{HKRS20}, but using a different update strategy) that terminates with success for all such target functions with a closure strategy.

\begin{figure}[ht]
	\begin{minipage}{6.5cm}
		\begin{tabular}{c|cc}
			\multicolumn{1}{c}{} & A & B  \\ \toprule
			$\mathbb{B}$ & Yes&  Yes \\ 
			$\mathbb{R}$ & Yes &  Yes  \\ 
			$\mathbb{Z}$ & Yes&  Yes \\ 
			$\mathbb{N}$ & No ($f'_1$) &  Yes   \\ 
			$\mathbb{R}_{\geq 0}$ & No ($f'_1$) &  Yes  \\ 
			$\mathcal{P}_{\rm fin}(\Sigma^*)$ & No ($f''_1$) &  Yes   \\ 
			$\mathbb{N}_{\max}$ & No ($f_1$) & Yes   \\ 
			$\mathbb{Z}_{\max}$ & No ($f_1$) &  No ($f_2$)   \\ 
			$\mathbb{R}_{\max}$ & No ($f_1$)  &  No ($f_2$)  \\
			
		\end{tabular}
	\end{minipage}
	\caption{Summary of collapse of classes in familiar semirings.}
	\label{fig:table}
\end{figure}

\section{Conclusion and open questions}
\label{section:conclusion}
We have pictured a landscape for learning weighted automata on any semiring and shown that an approach \`a la Angluin can only work in general on a restricted class of functions. This paper opens many questions, both general and specific. First, for functions that are not weakly guessable, one can ask whether there is still a way to find appropriate states (not from rows of the Hankel matrix) by strengthening the queries. Next, for weakly guessable functions, one can ask whether there is a way to design strategies to update $Q$ and $T$ (other than closure and adding suffixes of counter-examples to $T$) that would ensure to find a suitable hypothesis automaton. In particular, is there a way to choose ``cleverly'' the coefficients of the linear combinations under consideration? For example, the algorithm of \cite{HKRS20} can fail for functions that are not guessable (i.e. if for some finite set $Q$ at every Test step the counter example provided by the teacher allows for the possibility that the closure strategy  produces a solution that is not in $\Lambda_Q$). For guessable functions, a naive way to construct an algorithm that terminates with success is to use the update strategy suggested by the proof of Proposition \ref{prop:guessgame}. But is there a better way? And if so, it would be interesting to see if the class of functions learnable by a polynomial time algorithm can be characterised. Regarding the general hierarchy of functions, there is a class for which we neither have an example of non-emptiness or an argument for collapse: the functions that are guessable but not strongly guessable (and its dual). Finally, it would be interesting to study the notion of uniform strong guessability introduced in Remark~\ref{rk:universalsg} and in particular, try to characterise the class of functions lying in the intersection of the six classes we have introduced. It would also be of interest to understand how termination of the algorithm from \cite{HKRS20} relates to strong guessability.  

\section*{Acknowledgment}
This work has been supported by the EPSRC grant EP/T018313/1 and a Turing-Manchester Exchange Fellowship, The Alan Turing Institute. We also thank Nathana\"el Fijalkow for useful discussions on the topic, and the anonymous reviewers for their deep and useful comments.

\bibliographystyle{alphaurl}
\bibliography{ref}

\newpage
\appendix

\section{Dual approach using columns}
\label{section:swap}

In this section, we present (for clarity) the natural dual definitions to those provided in Section~\ref{section:prelim} and Section~\ref{section:guessable} working with (partial) columns instead of (partial) rows and consider right multiplication instead of left multiplication. As a result, the role of final and initial states is swapped in our constructions. $S$ still denotes a semiring and $\Sigma$ an alphabet.

\subsubsection*{\textbf{Right semimodules, closed sets and right generating sets:}} Given a set $Y$, $S^Y$ is a \emph{right} $S$-module via the following action:  for each element $\underline{x} \in S^Y$ and $\lambda$ in $S$, $\underline{x} \otimes \lambda$ denotes the element of $S^Y$ computed from $\underline{x}$ by multiplying every component of $\underline{x}$ by $\lambda$ on the \emph{right}.  Given a subset $X \subseteq S^Y$, a right-linear combination over $X$ is one of the form $\bigoplus_{\underline{x} \in X} (\underline{x} \otimes \lambda_x)$ for some $\lambda_{\underline{x}} \in S$, with only finitely many $\lambda_{\underline{x}}$ different from $0_S$. The right-semimodule generated by $X$ is defined as all the elements of $S^Y$ that can be written as a right-linear combination over $X$. An element of the right-semimodule generated by $X$ will simply be said to be right-generated by $X$. 

As before, we will use the notation $\underline{x}$ for elements of $S^{\Sigma^*}$ and $\underline{x}_Z$ for the restriction of $\underline{x}$ to entries indexed by a subset $Z \subseteq \Sigma^*$, remembering that the action of $S$ is now on the \emph{right}. We also define a \emph{left} action of $\Sigma^*$ on $S^{\Sigma^*}$ as follows: given an element $\underline{x} \in S^{\Sigma^*}$, and $u \in \Sigma^*$, we denote by $u \cdot \underline{x}$ the element of $S^{\Sigma^*}$ defined by $(u\cdot \underline{x})_w = \underline{x}_{wu}$ for all words $w$. It will also be convenient to have a notation for elements of $S^{\Sigma^{*}}$ that arise as columns of a given Hankel matrix $F$. If $v$ is a word, we shall write $[ v ]$ to denote the infinite column indexed by $v$ in
$F$, viewed as an element of $S^{\Sigma^*}$. Note that the left action of $\Sigma^*$ on $[v]$ corresponds to the obvious left action of $\Sigma^*$ on $v$, that is: $(u \cdot [v])_w = [v]_{wu}= F_{wu,v} = f(wuv) = F_{w,uv} = [ uv ]_{w}$. 

Let  $Z \subseteq \Sigma^*$ and $f: \Sigma^* \rightarrow S$. A subset $X \subseteq S^{\Sigma^*}$ is said to be: (i) right-closed if for all $\underline{x}$ in $X$ and $a \in \Sigma$, $a \cdot \underline{x}$ is right-generated by $X$; (ii) right-closed on $Z$ if for all $\underline{x}$ in $X$ and $a \in \Sigma$, $(a\cdot\underline{x})_Z$ is right-generated by the elements of $X$ restricted to $Z$; (iii) column-generating for $f$ if all the columns of the Hankel matrix of $f$ are right-generated by $X$. By an abuse of language, we say that a set of words $T$ is right-closed (resp. right-closed on $Z$, column-generating for $f$) if the set $\underline{T}=\{ [t]: t \in T\}$ has this property.

\subsubsection*{\textbf{Co-Hankel automata and  co-hypothesis automata:}} 
\begin{nota}
\label{nota:gamma}
Given a function $f: \Sigma^* \rightarrow S$, a finite non-empty subset $T$ of $S^{\Sigma^*}$ and a non-empty subset $Q$ of $\Sigma^*$, we define the right-solution set $\Gamma_{Q, T}$ to be set of elements $\underline{\gamma}
\in S^{T \cup (T \times \Sigma \times T)}$ satisfying:
\begin{itemize}
\item $[ \varepsilon ]_Q = \bigoplus_{\underline{t} \in T}  \underline{t}_Q \otimes \gamma_{\underline{t}}$, and
\item for all $\underline{t} \in T$ and all $a \in \Sigma$, $(a \cdot \underline{t})_Q = \bigoplus_{\underline{s} \in T} \underline{s}_Q \otimes \underline{\gamma}_{\underline{s},a,\underline{t}}$.
\end{itemize}\noindent 
If $Q = \Sigma^*$, we will simply write $\Gamma_T$ instead of $\Gamma_{Q, T}$. Moreover, if $R \subseteq \Sigma^*$, then to reduce notation we will also frequently write simply $\Gamma_{Q, R}$ and  $\Gamma_{R}$ instead of $\Gamma_{Q, T}$ and $\Gamma_{T}$   where $T = \{[t] : t \in R\}$.
\end{nota}
It is clear from the definitions above that $\Gamma_{Q,T}$ is non-empty if and only if $[ \varepsilon ]_Q$ is right-generated by $T$ restricted to $Q$ and $T$ is right-closed on $Q$. Also, if $Q'\subseteq Q$, then $\Gamma_{Q', T} \supseteq \Gamma_{Q, T}$. However, if $T'$ is a finite set contained in or contained by $T$, there is no containment between $\Gamma_{Q, T}$ (which lies in $S^{T \cup (T\times \Sigma \times T)}$) and 
$\Gamma_{Q, T'}$ (which lies in $S^{T' \cup (T'\times \Sigma \times T')}$).
\begin{defi}
\label{definition:cohankel}
Let $f: \Sigma^* \rightarrow S$ be a function and suppose that $T$ is a finite subset of $S^{\Sigma^*}$ and $Q$ is a subset of $\Sigma^*$ such that $\Gamma_{Q, T} \neq \emptyset$. Then for each $\gamma \in \Gamma_{Q, T}$, we define a finite weighted automaton $\mathcal{H}^{co}_{Q, T, \underline{\gamma}}$ with:
\begin{itemize}
\item finite set of states $T$;
\item initial-state vector $(\underline{t}_\varepsilon)_{\underline{t} \in T}$;
\item final-state vector $\underline{\gamma}_{T}$; and
\item for all $\underline{s}, \underline{t} \in T$ and all $a \in \Sigma$, a transition from state $\underline{s}$ to state $\underline{t}$ labelled by the letter $a$ with weight $\underline{\gamma}_{\underline{s},a,\underline{t}}$.
\end{itemize}\noindent 
If $Q = \Sigma^*$,  we say $\mathcal{H}^{co}_{Q, T, \underline{\gamma}}$ is a \emph{co-Hankel automaton} and write simply $\mathcal{H}^{co}_{T, \underline{\gamma}}$ to reduce notation. If $T$ is a finite set of columns of the Hankel matrix and $Q$ is finite, we say that $\mathcal{H}^{co}_{Q, T, \underline{\gamma}}$ is a \emph{co-hypothesis automaton}; in this case we may view $T$ as a finite set of words by identifying with an appropriate index set note: there need not be a unique set of words.\end{defi} 

\begin{thm} 
\label{theorem:comain}
Let $f: \Sigma^* \rightarrow S$ be a function. The following are equivalent:
\begin{enumerate}
\item there exists a finite subset $T \subseteq S^{\Sigma^*}$ that is both right-closed and right-generates the column indexed by $\varepsilon$ in the Hankel matrix of $f$ (i.e. $\Gamma_T \neq \emptyset$ for some finite subset $T\subseteq S^{\Sigma^*}$);
\item $f$ is computed by a finite-state weighted automaton $\mathcal{A}$ over $S$.   
\end{enumerate}
Specifically, if $\Gamma_T \neq \emptyset$, then $\mathcal{H}^{\rm co}_{T, \underline{\gamma}}$ computes $f$ for all $\underline{\gamma}\in \Gamma_T$, whilst if $f$ is computed by the finite-state weighted automaton $\mathcal{A}$ over $S$ with state set $T$, then $\Gamma_{\underline{T}} \neq \emptyset$ for $\underline{T} = \{\underline{t}: t \in T\} \subseteq S^{\Sigma^*}$ where for each $w \in \Sigma^*$ we define  $\underline{t}_w$ to be the value on input $w$ computed by the automaton $\mathcal{A}_t$ obtained from $\mathcal{A}$ by making $t$ the unique final state with weight $1_S$ and all other weights are the same).
\end{thm}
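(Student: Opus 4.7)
The plan is to reproduce the structure of the proof of Theorem~\ref{theorem:prelim} with a systematic left/right swap. The key arithmetic facts to keep in hand are: the left action satisfies $(a \cdot \underline{t})_u = \underline{t}_{ua}$, so inductions on words must peel letters off the \emph{right}; the column indexed by $\varepsilon$ in the Hankel matrix satisfies $[\varepsilon]_w = F_{w,\varepsilon} = f(w)$; and the defining equations of $\Gamma_T$ express each $a \cdot \underline{t}$ as a right-linear combination of $T$, with coefficients placed on the right.

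For the implication (1)~$\Rightarrow$~(2), fix a finite $T \subseteq S^{\Sigma^*}$ with $\Gamma_T \neq \emptyset$ and choose $\underline{\gamma} \in \Gamma_T$. For each $\underline{t} \in T$, let $\mathcal{H}^{\rm co}_{\underline{t}}$ be the automaton obtained from $\mathcal{H}^{\rm co}_{T, \underline{\gamma}}$ by making $\underline{t}$ the unique final state with weight $1_S$ and keeping all other weights unchanged. I would prove by induction on $|w|$ that $\underline{t}_w$ equals the value computed by $\mathcal{H}^{\rm co}_{\underline{t}}$ on input $w$. The base case $w = \varepsilon$ follows from the fact that the initial-state vector of $\mathcal{H}^{\rm co}_{T, \underline{\gamma}}$ is $(\underline{t}_\varepsilon)_{\underline{t} \in T}$. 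For the inductive step, write $w = ua$, use $\underline{t}_{ua} = (a \cdot \underline{t})_u$, and expand via the defining equation of $\Gamma_T$ (taken with $Q = \Sigma^*$) to get $\underline{t}_{ua} = \bigoplus_{\underline{s} \in T} \underline{s}_u \otimes \underline{\gamma}_{\underline{s},a,\underline{t}}$; the inductive hypothesis applied to each $\underline{s}_u$ then interprets $\underline{\gamma}_{\underline{s},a,\underline{t}}$ as the weight of the $a$-transition from $\underline{s}$ to $\underline{t}$. To conclude, $f(w) = [\varepsilon]_w = \bigoplus_{\underline{t} \in T} \underline{t}_w \otimes \underline{\gamma}_{\underline{t}}$, which is precisely what $\mathcal{H}^{\rm co}_{T, \underline{\gamma}}$ computes on $w$ once the $\underline{\gamma}_{\underline{t}}$ are read as final weights.

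For the converse (2)~$\Rightarrow$~(1), let $\mathcal{A}$ compute $f$ with state set $T$, and for each $t \in T$ set $\underline{t}_u$ to be the value computed by $\mathcal{A}_t$ on $u$, giving $\underline{T} = \{\underline{t} : t \in T\}$. Right-closedness of $\underline{T}$ is checked by rewriting $(a \cdot \underline{t})_u = \underline{t}_{ua} = f_{\mathcal{A}_t}(ua)$ as the matrix product $\underline{\alpha} M(u) M(a) \underline{e}_t$ grouped so that the final $M(a)$ step is split off; this exhibits $a \cdot \underline{t} = \bigoplus_{\underline{s} \in T} \underline{s} \otimes M(a)_{\underline{s}, \underline{t}}$ as a right-linear combination of $\underline{T}$ with coefficients drawn from the transition matrices of $\mathcal{A}$. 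An analogous grouping of $f(uw) = \underline{\alpha} M(u) M(w) \underline{\eta}$ as $(\underline{\alpha} M(u))(M(w)\underline{\eta})$ yields $[w]_u = \bigoplus_{\underline{s}} \underline{s}_u \otimes (M(w)\underline{\eta})_{\underline{s}}$ for every $w$, and in particular $[\varepsilon] = \bigoplus_{\underline{s}} \underline{s} \otimes \underline{\eta}_{\underline{s}}$ is right-generated by $\underline{T}$. The coefficients assembled in these two calculations form an explicit element of $\Gamma_{\underline{T}}$.

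The main difficulty is not mathematical but notational: one must rigorously maintain the inverted conventions (scalars multiplying on the right, indices shifted by concatenation on the right, inductions peeling letters from the right, and the roles of initial and final state vectors exchanged) throughout. Once these are fixed, both implications are direct transcriptions of the proof of Theorem~\ref{theorem:prelim}, and no new technical machinery is required beyond the definitions in Appendix~\ref{section:swap}.
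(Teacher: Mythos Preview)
Your proposal is correct and follows exactly the approach intended by the paper: the authors state Theorem~\ref{theorem:comain} in the appendix without proof precisely because it is the systematic left/right dual of Theorem~\ref{theorem:prelim}, and your write-up carries out that dualization faithfully (peeling letters from the right, swapping initial/final roles, and placing coefficients on the right). The only cosmetic point is that in the converse direction you index transition matrices by $\underline{s},\underline{t}$ rather than $s,t$, but this is harmless once states are identified with their associated vectors.
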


\subsubsection*{\textbf{(Weakly/strongly) co-guessable functions and chain conditions:}}
Here we record the definitions used in the statement
\begin{defi}
A function $f: \Sigma^* \rightarrow S$ is said to be weakly co-guessable if there exists a finite set $T$ of columns of the Hankel matrix of $f$ such that $\Gamma_T$ is non-empty.	
\end{defi}

\begin{defi}
	A function $f: \Sigma^* \rightarrow S$ is co-guessable if there exist finite subsets $Q, T \subseteq \Sigma^*$ such that $\Gamma_{Q,T} = \Gamma_{T} \neq \emptyset$. We say that the pair $(Q,T)$ witnesses that $f$ is co-guessable. Furthermore, we say that a co-guessable function $f$ is strongly co-guessable if for all finite sets $T \subseteq \Sigma^*$, there exists a finite set $Q \subseteq \Sigma^*$ such that $\Gamma_{Q,T} = \Gamma_{T}$.
\end{defi}

\begin{defi}
Given a function $f:\Sigma^* \to S$ and $F$ its Hankel matrix, $f$ satisfies the weak co-ascending chain condition if for all chains of right-semimodules $X_0 \subseteq X_1 \subseteq X_2 \subseteq \ldots $ such that
\begin{itemize}
    \item each $X_i$ is generated by a finite set of columns of $F$;
    \item each column of $F$ belongs to some $X_i$
\end{itemize}
there exists $n$ such that for all $m \geq n$, $X_m = X_n$.
\end{defi}

\begin{defi}
Let $T$ be a finite subset of $\Sigma^*$.
A total $T$-chain is an infinite descending chain of right-solution sets $\Gamma_{Q_0, T} \supsetneq \Gamma_{Q_1, T} \supsetneq \Gamma_{Q_2, T} \supsetneq \cdots$  with $Q_i \subseteq Q_{i+1}$ for all $i \geq 0$ and such that each word in  $\Sigma^*$ is contained in some $Q_i$.
We say that a function $f: \Sigma^* \rightarrow S$ is column-bound if there exists a finite set of words $T$ such that 
$\Gamma_T \neq \emptyset$
 and there is no total $T-$chain.
We say that $f: \Sigma^* \rightarrow S$ is strongly column-bound if for \emph{every} finite set of words $T$, there is no total $T$-chain.
\end{defi}

\section{Examples}
\label{appendix:examples}

	\subsubsection*{\textbf{Neither weakly guessable nor weakly co-guessable ($f_1$):}}
	We will give examples of functions that are neither weakly guessable nor weakly co-guessable for $\mathbb{N}_{\max}$, $\mathbb{Z}_{\max}$, $\mathbb{R}_{\max}$, $\mathbb{N}$, $\mathbb{R}_{\geq 0}$ and $\mathcal{P}_{\rm fin}(\Sigma^*)$. We give the proofs for weakly guessable. The proofs for weakly co-guessable are dual.
	
	Let $S$ be one of $\mathbb{N}_{\max}$, $\mathbb{Z}_{\max}$, or $\mathbb{R}_{\max}$, and consider the function $f_1$ computed by the automaton depicted in Figure~\ref{fig:automata1} and $F$ its Hankel matrix over $S$. Thus $f_1(w)$ is equal to the length of the longest block of consecutives $a$'s in $w$. Let $w = ua^n$ and $w'=a^{n'}v$ where $u, v \in\Sigma^*$ are such that $u$ does not end with an $a$ and $v$ does not start with an $a$. The corresponding entry of the Hankel matrix is given by $F_{w,w'} =f(ww') = {\max}\{f_1(u), n+n', f_1(v)\}$. We claim that $f_1$ is neither weakly guessable nor weakly co-guessable. Suppose for contradiction that $f_1$ is weakly guessable. By Theorem~\ref{prop:weakly} there exists a finite set of words $W=\{w_1, \ldots ,w_k\}$ such that the rows of the Hankel matrix indexed by $W$ are row-generating. Let us write $w_i = u_ia^{n_i}$ for $i=1, \ldots ,k$, where $u_i$ does not end with an $a$ and $n_i \geq 0$, and choose a positive integer $N$ such that $N\geq {\max}_{i \in \{1,\ldots,k\}}\{f_1(u_i), n_i\}$. Consider the row of the Hankel matrix indexed by $w=a^{N+1}b$. Since the rows indexed by $W$ are assumed to be row-generating, we have:
	$\langle w \rangle = \bigoplus_{i=1}^k \lambda_i \otimes \langle w_i \rangle$ for some $\lambda_i \in S$. Now set $u = ba^{N}$ and $v=ba^{N+1}$. We have:
	\begin{eqnarray*}
		f_1(wu) = F_{w,u} = \langle w \rangle_u &=& {\max}_{i}\{\lambda_i + \langle w_i \rangle_ u\}\\
		&=& {\max}_{i}\{\lambda_i + F_{w_i, u}\}\\
		&=& {\max}_{i}\{\lambda_i + \max(f_1(u_i), n_i, N)\}\\
		&=& {\max}_{i}\{\lambda_i + N\} = {\max}_{i}\{\lambda_i\} + N  
	\end{eqnarray*} 
	and, similarly, $f_1(wv) = {\max}_{i}\{\lambda_i\} + N +1$.  But this immediately gives a contradiction, since $f_1(wu) = f_1(a^{N+1}b^2a^N) = N+1 = f_1(a^{N+1}b^2a^{N+1}) = f_1(wv)$.
	
	Let $S$ be one of $\mathbb{N}$, $\mathbb{R}_{\geq 0}$ and consider the function $f'_1$ computed by the automaton depicted in Figure~\ref{fig:automata4} and $F$ its Hankel matrix. Thus $f'_1(w)$ is equal $2^n - 1$ if $w=a^n$ for some $n>0$ and $0$ otherwise. Suppose that $f'_1$ is weakly guessable. Then there exist a finite set of words $Q = \{w_1, w_2, \ldots, w_k\}$ such that the rows indexed by $Q$ left-generate the rows of $F$. Without loss of generality, we may assume that $w_1 = \varepsilon$ and that $w_i$ is a power of $a$ for $i=1, \ldots m \leq k$. Let $w = a^{n}$ for some positive integer $n$. Then by assumption, considering columns $\varepsilon$ and $a$ of $F$ and restricting attention to row $w$, there exist $\lambda_1, \ldots, \lambda_k$ in $S$ such that:
	\[2^{n} - 1 = \sum_{i=1}^k \lambda_i f'_1(w_i) \mbox{ and  }2^{n+1} - 1 = \sum_{i=1}^k \lambda_i f'_1(w_ia),\] which gives $\sum_{i=1}^k \lambda_i f'_1(w_ia) = 1+  \sum_{i=1}^k 2\lambda_i f'_1(w_i).$ Moreover, for each $w_i$ that is a power of $a$, we have $f'_1(w_ia) = 2f'_1(w_i) + 1$, whilst if $w_i$ is not a power of $a$, we have $f'_1(w_ia) = 0 = f'(w_i)$. Thus when restricting attention to columns $\{\varepsilon, a\}$ it suffices to consider only the $w_i$ with $i \leq m$, giving:
	\[\sum_{i=1}^m \lambda_i (2f'_1(w_i) + 1) = 1 + \sum_{i=1}^m 2\lambda_i f'_1(w_i)\]
	giving $\sum_{i=1}^m \lambda_i=1$.
	
	Suppose now that $n$ is (strictly) larger than the lengths of all the $w_i$. We have proved that there are $\lambda_i \geq 0$ such that $2^{n} - 1 = \sum_{i=1}^m \lambda_i f'_1(w_i)$ and $\sum_{i=1}^m \lambda_i=1$. But $2^n - 1$ is strictly greater than any of the $f'_1(w_i)$, so cannot be obtained as a convex combination of them. 
	
	For $S=\mathcal{P}_{\rm fin}(\Sigma^*)$, consider the function $f''_1$ computed by the automaton depicted in Figure~\ref{fig:automata4bis} and $F$ its Hankel matrix. Thus $f''_1(w) = \{a^{|w|_a},b^{|w|_b}\}$. If a row $\langle w \rangle$ of $F$  is left-generated by a finite set of rows, this in particular means (by restricting attention to column $\varepsilon$) that $\{a^{|w|_a}, b^{|w|_b}\}$ can be written as $\bigcup_{i=1}^k X_i\{a^{n_i}, b^{m_i}\}$ for some  finite sets $X_i$ and non-negative integers $n_i,m_i$. This is only possibly if exactly one of the sets $X_i$ is non-empty, and for this value $i$ we have that $X_i = \{\varepsilon\}$ and $n_i = |w|_a$ and  $m_i = |w|_b$. It follows from this that for all pairs of non-negative integers $(n,m)$ a row-generating set of words must contain at least one word with $|w|_a=n$ and $|w|_b=m$.
	\vspace{-1\baselineskip}
	\subsubsection*{\textbf{Weakly guessable but neither guessable nor weakly co-guessable ($f_2$)}.}
\label{app:examplef2}
	Let $S=\mathbb{Z}_{\max}$ or $S=\mathbb{R}_{\max}$ and consider the function $f_2$  computed by the automaton depicted in Figure~\ref{fig:automata3}, but viewed as an automaton over $S$ and let $F$ denote its Hankel matrix. Thus $f_2(w) = |w|_a$ if $w$ starts with an $a$, $f_2(w) = |w|_b$ if $w$ starts with an $b$, and $f_2(\varepsilon) = -\infty$.  Recalling that multiplication over this semiring 
    is given by usual addition, if $w$ starts with $a$ we have $\langle w \rangle = (|w|_a - 1) \otimes \langle a \rangle$. Likewise, if $w$ starts with $b$ we have $\langle w \rangle = (|w|_b - 1) \otimes \langle b \rangle$. It follows from these simple observations that every row of the Hankel matrix is a scalar multiple of one of the rows $\langle \varepsilon \rangle, \langle a \rangle$ or $\langle b \rangle$. Moreover, it is not too difficult to see that none of these three can be expressed as a linear combination of the other two: for example the fact that $f_2(\varepsilon)= -\infty$ and $f_2(u) \neq -\infty$ for all non-empty words $u$ quickly demonstrates that $\langle \varepsilon \rangle$ cannot be generated by any combination of the other rows.
	
	First, $f_2$ is weakly guessable. Indeed, it is easy to see that one can transform the automaton given in the figure into an equivalent literal automaton, by pushing the weights from the first transitions to final weights. 
	
	Second, $f_2$ is not weakly co-guessable. For all words $w$, we have $f_2(aw) = |aw|_a$, and $f_2(bw) = |bw|_b$. Suppose for contradiction that $f_2$ is weakly co-guessable. Then there exists a finite set of words $W=\{w_1, \ldots ,w_k\}$ such that the columns of the Hankel matrix labelled by $W$ are column-generating. Let $N> \max_i|w_i|_a + 1$. Since for all $i$, we have $F_{a, w_i} = |w_i|_a + 1$ and $F_{b, w_i} = |w_i|_b+1$, it is easy to see that the column indexed by $w = a^N$ (for which $F_{a, w} = N+1$ and $F_{b, w} = 1$) cannot be expressed as a right-linear combination of the columns indexed by $W$.
	
		Finally, $f_2$ is not guessable. Suppose for contradiction that $f_2$ is guessable. Then there exist finite sets $Q$, $T$ such that $\Lambda_{Q, T} = \Lambda_Q \neq \emptyset$, and hence for all $\underline{\lambda}$ in $\Lambda_{Q,T}$, we have that $\mathcal{H}_{Q,T,\lambda}$ computes $f_2$. By the observations at the start of this subsection, we must have that $Q$ contains $\varepsilon$ and at least one word, say $q$, beginning with $a$ and at least one word, say $q'$,  beginning with $b$. Without loss of generality, we may assume that for a fixed constant $N$ (to be chosen below) we have that the length of the longest word in $T$ is greater than $N$ (since for all $T' \supseteq T$ we also have $\Lambda_{Q, T} = \Lambda_{Q, T'}= \Lambda_Q \neq \emptyset$). Now let $n$ be equal to the maximal sum of the length a word in $Q$ and the length of a word in $T$ (which by the above observation can be made arbitrarily large, by modifying $T$ if necessary). For this choice of $n$ and for all $t \in T$ we have
		\begin{align*} 
		&\langle qa \rangle_t \!\!\!&=& f_2(qat) \!\!\!&=& |qat|_a \!\!\!&=& |qt|_a +1 \geq 0 \geq |q't|_b - n \!\!\!&\Rightarrow& \langle qa \rangle_T  \!\!\!&=& (1 \otimes \langle q \rangle_T) \oplus (-n \otimes \langle q' \rangle_T)\\
		&\langle qb \rangle_t \!\!\!&=& f_2(qbt) \!\!\!&=& |qbt|_a \!\!\!&=& |qt|_a + 0  \geq 0 \geq |q't|_b - n \!\!\!&\Rightarrow& \langle qb \rangle_T \!\!\!&=& (0 \otimes \langle q \rangle_T) \oplus, (-n \otimes \langle q' \rangle_T)\\
		&\langle q'a \rangle_t \!\!\!&=& f_2(q'at) \!\!\!&=& |q'at|_b \!\!\!&=& |qt|_b +0 \geq 0 \geq |qt|_a - n \!\!\!&\Rightarrow& \langle q'a \rangle_T \!\!\!&=& (-n \otimes \langle q \rangle_T) \oplus 0 \langle q' \rangle_T\\
		&\langle q'b \rangle_t \!\!\!&=& f_2(q'bt) \!\!\!&=& |q'bt|_b \!\!\!&=& |q't|_b +1 \geq 0 \geq |qt|_a-n \!\!\!&\Rightarrow& \langle q'b \rangle_T \!\!\!&=& (-n \otimes \langle q \rangle_T) \oplus (1 \otimes\langle q' \rangle_T).
		\end{align*}
         Now let $\mathcal{H_{Q,T, \underline{\lambda}}}$ be the hypothesis automaton constructed from an element $\underline{\lambda} \in \Lambda_{Q, T} = \Lambda_Q$ using the linear combinations above (containing the \emph{negative} coefficients $-n$) and let $\mathcal{H}_q$ denote the automaton obtained from $\mathcal{H}_{Q, T, \underline{\lambda}}$ by making $\langle q \rangle$ the unique initial state. Since $\underline{\lambda} \in \Lambda_Q$, it follows from the first part of the proof of Theorem \ref{theorem:prelim} that for all words $w$, and all $q \in Q$, $\langle q \rangle_w$ is equal to the value computed by $\mathcal{H}_q$ on input $w$. Thus in particular we have that $f_2(qa^nbb^{2n})$ is equal to the value computed by $\mathcal{H}_q$ on input $a^nbb^{2n}$. Consider the run composed of: the transition from $q$ to $q$ labelled by $a$ for each instance of $a$ in the factor $a^n$, the transition from $q$ to $q'$ labelled by $b$ for the first factor of $b$, and the transition from $q'$ to $q'$ labelled by $b$  for each instance of $b$ in the remaining factor $b^{2n}$. Since the final weight at state $q'$ is (by definition) $\langle q' \rangle_\varepsilon = f_2(q') = |q'|_b$, it then follows from the equations above (recalling that $\otimes$ denotes usual addition in the semirings under consideration) that this run has weight $n+  (-n) +n + |q'|_b = 2n + |q'|_b $ giving $f_2(qa^nbb^{2n}) > 2n$ (recalling that the addition of this semiring is maximisation). On the other hand, since $q$ starts with the letter $a$, we have $f_2(qa^nbb^{2n}) = |q|_a + n$ (from the definition of this function). But now taking $N=|q|_a$ we have $f_2(qa^nbb^{2n}) > 2n> n + N = |q|_a + n = f_2(qa^nbb^{2n})$, giving a contradiction.
	\subsubsection*{\textbf{Weakly co-guessable but neither co-guessable nor weakly guessable ($\bar{f_2}$):}}
	Since $\mathbb{Z}_{\max}$ and $\mathbb{R}_{\max}$ are commutative, and $f_2$ is weakly guessable but neither guessable nor weakly co-guessable, it follows from Lemma \ref{lem:com2} that  $\bar{f_2}$ is weakly co-guessable but neither co-guessable nor weakly guessable (since a function over a commutative semiring is (weakly) guessable if and only if its mirror is (weakly) co-guessable).
	
	\subsubsection*{\textbf{Strongly guessable but not weakly co-guessable ($f_3$ and $f_5$):}}
    \label{app:examplef3}
	We will give examples of functions that are strongly guessable but not weakly co-guessable for $\mathbb{N}_{\max}$, $\mathbb{N}$, $\mathcal{P}_{\rm fin}(\Sigma^*)$ and $\mathbb{R}_{\geq 0}$. In Lemma~\ref{lemma:PM}, it is shown that all the functions from the first three semirings are strongly row-bound and hence, any weakly guessable function is strongly guessable. It is also the case for $\mathbb{R}_{\geq 0}$ by Lemma \ref{prop:subsemiring}. To prove that the functions under consideration are strongly guessable, it is then enough to prove that they are weakly guessable, or equivalently that they are computed by a literal automaton. 
	
	For  $S=\mathbb{N}_{\max}$, consider the function $f_3$ computed by the automaton depicted in Figure~\ref{fig:automata3} and $F$ its Hankel matrix. Thus $f_3(w) = |w|_a$ if $w$ starts with an $a$, $f_3(w) = |w|_b$ if $w$ starts with an $b$, and $f_3(\varepsilon) = -\infty$, and (by the same argument as for $f_2$) we find that $f_3$ is computed by a literal automaton and weakly guessable; in this case however Lemma~\ref{lemma:PM}  applies to give that $f_3$ is also strongly guessable (and note that the proof that $f_2$ is \emph{not} guessable made use of \emph{negative} coefficents that we do not have access to here!)) Let us prove now that it is not weakly co-guessable. For all words $w$, we have $f_3(aw) = |aw|_a$, and $f_3(bw) = |bw|_b$. Suppose for contradiction that $f_3$ is weakly co-guessable. Then there exists a finite set of words $W=\{w_1, \ldots ,w_k\}$ such that the columns of the Hankel matrix labelled by $W$ are column-generating and let $N> \max_i|w_i|_a + 1$. Since for all $i$, we have $F_{a, w_i} = |w_i|_a + 1$ and $F_{b, w_i} = |w_i|_b+1$, it is easy to see that the column indexed by $w = a^N$ (for which $F_{a, w} = N+1$ and $F_{b, w} = 1$) cannot be expressed as a right-linear combination of the columns indexed by $W$.
	
	For $S=\mathbb{N}$, consider the function $f'_3$ computed by the automaton depicted in Figure~\ref{fig:automata5} and $F$ its Hankel matrix. Thus $f'_3(w) = 2^{|w|_a}$ if $w$ starts with an $a$, $f_3(w) = 2^{|w|_b}$ if $w$ starts with an $b$, and $f_3(\varepsilon) = 0$. It is easy to see that $f'_3$ is computed by a literal automaton, by pushing the weights on the first transitions in the automaton depicted in the figure, as final weights. So $f'_3$ is weakly guessable and hence strongly guessable. Let us prove now that it is not weakly co-guessable. For all words $w$, we have $f'_3(aw) = 2^{|aw|_a}$, and $f'_3(bw) = 2^{|bw|_b}$. Suppose for contradiction that $f'_3$ is weakly co-guessable. Then there exists a finite set of words $W=\{w_1, \ldots ,w_k\}$ such that the columns of the Hankel matrix labelled by $W$ are column-generating and let $N> \max_i|w_i|_a + 1$. Since for all $i$, we have $F_{a, w_i} = 2^{|w_i|_a + 1}$ and $F_{b, w_i} = 2^{|w_i|_b+1}$, it is easy to see that the column indexed by $w = a^N$ (for which $F_{a, w} = 2^{N+1}$ and $F_{b, w} = 2$) cannot be expressed as a right-linear combination of the columns indexed by $W$.
	
	For $S=\mathcal{P}_{\rm fin}(\Sigma^*)$, consider the function $f''_3$ computed by the automaton depicted in Figure~\ref{fig:automata5bis} and $F$ its Hankel matrix. Thus $f''_3(w) = \{x^{|w|_x}\}$ if $w$ starts with an $x \in \{a,b\}$, and $f''_3(\varepsilon) = \emptyset$. It is easy to see that $f''_3$ is computed by a literal automaton, by pushing the weights on the first transitions in the automaton depicted in the figure, as final weights. So $f''_3$ is weakly guessable and hence strongly guessable. Let us prove now that it is not weakly co-guessable. For all words $w$, we have $f''_3(aw) = a^{|aw|_a}$, and $f''_3(bw) = b^{|bw|_b}$. Suppose for contradiction that $f''_3$ is weakly co-guessable. Then there exists a finite set of words $W=\{w_1, \ldots ,w_k\}$ such that the columns of the Hankel matrix labelled by $W$ are column-generating and let $N> \max_i|w_i|_a + 1$. Since for all $i$, we have $F_{a, w_i} = a^{|w_i|_a + 1}$ and $F_{b, w_i} = b^{|w_i|_b+1}$, it is easy to see that the column indexed by $w = a^N$ (for which $F_{a, w} = a^{N+1}$ and $F_{b, w} = b$) cannot be expressed as a right-linear combination of the columns indexed by $W$.
	
	Finally, we give an example of a function over $S=\mathbb{R}_{\geq 0}$ that is strongly guessable but not weakly co-guessable. Consider the automaton obtained from that in Figure~\ref{fig:automata5} by replacing all weights labelled by $b$ by $1$. That is, the function computed, which we denote by $f_5$, is defined by $f_5(w) = 2^{|w|_a}$ if $w$ begins with $a$,  $f_5(w) = 1$ if $w$ begins with $b$ and $f_5(\varepsilon) =0$. In a similar manner to the previous cases, it is easy to see that $f_5$ is computed by a literal automaton and hence is weakly guessable.  Taking $Q=T=\{\varepsilon, a, b\}$ it is straightforward to verify that $\Lambda_{Q, T} = \Lambda_Q \neq \emptyset$ (indeed, for each of $\langle \varepsilon \rangle, \langle q a\rangle, \langle qb \rangle$ where $q \in Q$ we find that there is exactly one way to write the given vector as a linear combination of the $\langle q' \rangle$ with $q' \in Q$). Hence $f_5$ is strongly guessable. Suppose then for contradiction that $f_5$ is weakly co-guessable. Then there exists a finite set of words $W=\{w_1, \ldots ,w_k\}$ such that the columns of the Hankel matrix labelled by $W$ are column-generating and let $N> \sum_i2^{|w_i|_a + 1}$. Since for all $i$, we have $F_{a, w_i} = 2^{|w_i|_a + 1}$ and $F_{b, w_i} = 1$, it is easy to see that the column indexed by $w = a^N$ (for which $F_{a, w} = 2^{N+1}$ and $F_{b, w} = 1$) cannot be expressed as a right-linear combination of the columns indexed by $W$.

		\subsubsection*{\textbf{Strongly co-guessable but not weakly guessable ($\bar{f_3}$):}}
		Since $\mathbb{N}_{\max}$ and $\mathbb{N}$ are commutative, and $f_3$ and $f'_3$ are strongly guessable but not weakly co-guessable, then (by Lemma \ref{lem:com2}) $\bar{f_3}$ and $\bar{f'_3}$ are strongly co-guessable but not weakly guessable .
\end{document}